\newcounter{protocol}
\newenvironment{protocol}[1]
  {\par\addvspace{\topsep}
   \noindent
   \tabularx{\linewidth}{@{} X @{}}
    \hrule
    \refstepcounter{protocol}\textbf{Protocol \theprotocol} #1 %\\
    \hrule}
  { %\\
    \hrule
   \endtabularx
   \par\addvspace{\topsep}
   }
\newcounter{algorithm}
\newenvironment{algorithm}[1]
  {\par\addvspace{\topsep}
   \noindent
   \tabularx{\linewidth}{@{} X @{}}
    \hrule
    \refstepcounter{algorithm}\textbf{Algorithm \thealgorithm} #1 %\\
    \hrule}
  { %\\
    \hrule
   \endtabularx
   \par\addvspace{\topsep}}
\newtheorem{theorem}{Theorem}
\newtheorem{lemma}{Lemma}
\newtheorem{proposition}{Proposition}
\newenvironment{proof}[1][Proof]{\noindent\textbf{#1.} }{\ \rule{0.5em}{0.5em}}
\def\freq{\operatorname{freq}}
\def\Tr{\operatorname{Tr}}
\def\Pr{\operatorname{Pr}}
\def\supp{\operatorname{supp}}
\def\>{\rangle}
\def\<{\langle}
\def\({\left(}
\def\){\right)}
\def\id{\operatorname{id}}
\let\oldemptyset\emptyset
\let\emptyset\varnothing
\newcommand{\ket}[1]{\left|{#1}\right\rangle}
\newcommand{\bra}[1]{\left\langle{#1}\right|}
\newcommand{\pro}[1]{\ket{#1}\bra{#1}}
\newcommand{\mc}[1]{\mathcal{#1}}
\newcommand{\R}{\mathbb{R}}
\begin{document}

%\title{Security of continuous variable QKD with discrete modulation}
\title{Security of discrete-modulated continuous-variable quantum key distribution}

\author{Stefan B{\"a}uml}
%\email{stefan.baeuml@icfo.eu}
\affiliation{ICFO-Institut de Ciencies Fotoniques, The Barcelona Institute of Science and Technology, Av. Carl Friedrich Gauss 3, 08860 Castelldefels (Barcelona), Spain.}

\author{Carlos Pascual-Garc\'ia}
\affiliation{ICFO-Institut de Ciencies Fotoniques, The Barcelona Institute of Science and Technology, Av. Carl Friedrich Gauss 3, 08860 Castelldefels (Barcelona), Spain.}

\author{Victoria Wright}
\affiliation{ICFO-Institut de Ciencies Fotoniques, The Barcelona Institute of Science and Technology, Av. Carl Friedrich Gauss 3, 08860 Castelldefels (Barcelona), Spain.}

\author{Omar Fawzi}
\affiliation{Universit\'e de Lyon, Inria, ENS de Lyon, UCBL, LIP, F-69342, Lyon Cedex 07, France.}

\author{Antonio Ac\'in}
\affiliation{ICFO-Institut de Ciencies Fotoniques, The Barcelona Institute of Science and Technology, Av. Carl Friedrich Gauss 3, 08860 Castelldefels (Barcelona), Spain.}
\affiliation{ICREA - Instituci\'{o} Catalana de Recerca i Estudis Avan\c{c}ats, 08010 Barcelona, Spain.}
%\date{\today}

\begin{abstract}

Continuous variable quantum key distribution with discrete modulation has the potential to provide information-theoretic security using widely available optical elements and existing telecom infrastructure. While their implementation is significantly simpler than that for protocols based on Gaussian modulation, proving their finite-size security against coherent attacks poses a challenge. In this work we prove finite-size security against coherent attacks for a discrete-modulated quantum key distribution protocol involving four coherent states and heterodyne detection. To do so, and contrary to  most of the existing schemes, we first discretize all the continuous variables generated during the protocol. This allows us to use the entropy accumulation theorem, a tool that has previously been used in the setting of discrete variables, to construct the finite-size security proof. We then compute the corresponding finite-key rates through semi-definite programming and under a photon-number cutoff. Our analysis provides asymptotic rates in the range of $0.1-10^{-4}$ bits per round for distances up to hundred kilometres, while in the finite case and for realistic parameters, we get of the order of $10$ Gbits of secret key after $n\sim10^{11}$ rounds and distances of few tens of kilometres.

\end{abstract}

%\maketitle
{\let\clearpage\relax\maketitle}

\tableofcontents

\section{Introduction}

Arguably one of the most technologically advanced applications of quantum information theory nowadays is quantum key distribution (QKD), which allows two honest parties, Alice and Bob, to obtain a cryptographic key, the security of which is guaranteed by the laws of quantum physics. Whereas QKD was originally conceived in a setting involving discrete variables \cite{BB84,E91,bennett1992quantum}, e.g. requiring the generation, or at least approximation, of single photon states, there exist a number of protocols based on continuous variable systems, such as squeezed or coherent states \cite{cerf2001quantum,grosshans2002continuous,weedbrook2004quantum,garcia2009continuous}. These protocols, known as continuous variable quantum key distribution (CVQKD), provide a number of advantages over discrete variable quantum key distribution (DVQKD) in terms of implementation using present-day telecom infrastructure.  

The security of DVQKD has been proven both in theory and in realistic implementations using diverse approaches, see for instance~\cite{shor2000simple,gottesman2004security,renner2005information,renner2008security,koashi2009simple}. Different security proofs have also been provided for CVQKD, many of which make use of a particular feature of the protocol, namely that the quantum states sent from Alice to Bob are chosen according to a Gaussian distribution. Such protocols are also known as Gaussian modulated CVQKD protocols. 
An important ingredient when proving security of Gaussian modulated CVQKD against collective attacks is the extremality of Gaussian states \cite{wolf2006extremality}. Gaussian extremality implies that, for a given covariance matrix of Alice and Bob's system, the maximum over the Holevo quantity in the Devetak-Winter formula for the key rate \cite{devetak2005distillation}, which involves an optimisation over Eve's full Fock space, is attained by the corresponding Gaussian state. Combining this with the fact that, in the case of Gaussian modulation, the covariance matrix of Alice and Bob's system can be directly computed from the observed statistics \cite{Grosshans2003Virtual}, security against collective attacks has been shown for Gaussian modulated coherent and squeezed states protocols involving both homodyne and heterodyne detection \cite{navascues2006optimality,garcia2006unconditional,leverrier2010simple,leverrier2015composable}.
Security against general attacks has been shown for protocols using coherent \cite{renner2009finetti,leverrier2013security,leverrier2015composable,leverrier2017security} as well as squeezed states \cite{furrer2012continuous,furrer2014reverse}. The main tools that have been used are the de Finetti Theorem \cite{renner2009finetti,leverrier2017security}, postselection techniques \cite{christandl2009postselection,leverrier2013security} and entropic uncertainty relations  \cite{furrer2012continuous,furrer2014reverse}. 

Unfortunately, the implementation of CVQKD protocols with Gaussian modulation turns out to be challenging because it is never achieved in practice \cite{lupo2020towards}, and is in fact often approximated by finite sets of states. A discrete modulation therefore significantly simplifies the preparation of states but also the error correction part, as much simpler reconciliation schemes can be used~\cite{leverrier2009unconditional}. Discrete-modulated protocols involve Alice sending coherent states taken from a typically small set, e.g. containing two or four states, according to some distribution, to Bob, who then applies a homo- or heterodyne measurement and discretises his outcome. Despite their simplicity, less is known about the security of such schemes.

The main challenge is that, unlike in the case of Gaussian modulation, the first and second moment of Alice and Bob's state are generally not sufficient to determine Eve's information, as one cannot invoke Gaussian extremality. Nevertheless, security against collective attacks has been shown in a number of scenarios. In \cite{leverrier2009unconditional}, security was proven for a limited class of transmission channels. For a protocol using Gaussian modulation for parameter estimation and discrete modulation for key generation, which requires decoy states, security against collective attacks and general security in the asymptotic limit has been shown in \cite{leverrier2011continuous}. The authors of \cite{ghorai2019asymptotic} apply an optimisation over possible covariance matrices of Alice and Bob's state as well as a reduction to the Gaussian optimality method to show security against collective attacks in the asymptotic limit. Higher key rates, which are secure against collective attacks in the asymptotic limit, are obtained by \cite{lin2019asymptotic}, which use an optimisation over all possible density matrices of Alice and Bob's state that are compatible with the observed statistics and without invoking the arguments of Gaussian optimality, but using a cutoff assumption that limits the number of photons in the state. This assumption was removed in \cite{Upadhyaya2021} for the asymptotic case, and in \cite{kanitschar2023finite} for the finite-size case; however both works only consider collective attacks. Security of discrete modulated coherent state protocols with homodyne or heterodyne detection against collective attacks was also shown by \cite{denys2021explicit,liu2021homodyne} in the asymptotic case, and by \cite{lupo2022quantum} in the finite-size case. In the limit of a high number of coherent states, asymptotic security against collective attacks was shown in \cite{kaur2019asymptotic}. In the setting of collective Gaussian attacks, the security of discrete modulated coherent state protocols with heterodyne detection, for any number of coherent states, has also been proven in the finite-size regime \cite{papanastasiou2019continuous}. Finally, finite-size security against general attacks has been shown for a protocol involving a discrete modulation using two coherent states ~\cite{matsuura2021finite,yamano2022finite,Matsuura2023}.

In this work, we consider a protocol involving a discrete modulation using four coherent states \cite{namiki2003security}, and heterodyne detection \cite{weedbrook2004quantum}, which is closely related to the protocol presented in \cite{lin2019asymptotic}. The main difference with respect to~\cite{lin2019asymptotic} is that all the information generated by the protocol, for key generation and parameter estimation, is discretised, in a similar way as was done in the entropic uncertainty relation approach of \cite{furrer2012continuous,furrer2014reverse}. This allows us to prove security against general attacks, as well as finite block sizes, using the entropy accumulation theorem (EAT) \cite{dupuis2016entropy,dupuis2019entropy}, which has previously been used to prove the security of device independent quantum key distribution against general attacks \cite{arnon2018practical,arnon2019simple,tan2020improved}.

The EAT is a powerful tool that allows one to lower bound the conditional smooth min-entropy, a quantity that can be used to quantify the amount of secret key obtainable from a (generally unstructured) classical-quantum (cq) state by means of privacy amplification, using hash functions \cite{renner2008security}. This is in fact the relevant situation in QKD protocols, since a cq-state is produced where Alice and Bob hold classical information, resulting in our case from Alice's preparation and Bob's measurements, whereas Eve's system remains quantum. The EAT requires the cq-state being the result of a sequence of maps, known as EAT channels, each of which provides classical outputs and side-information, while also passing on a quantum system to the next map. The lower bound on the conditional smooth min-entropy is in terms of a so-called `min-tradeoff function', mapping the observed statistic of classical outputs of the EAT channels to a real number which cannot exceed the single round conditional von Neumann entropies of any of the EAT channels.

A major challenge, when applying the EAT in security proofs for QKD, is that the EAT channels need to fulfill a Markov condition, ensuring that in each round, given all past-side information,  there are no new correlations between previous outcomes and the new side information. As information used for parameter estimation is obtained from measurements by Alice and Bob on systems which Eve could potentially have correlated in a way incompatible with the Markov condition, the EAT cannot be applied to the QKD protocol directly. Rather, a hypothetical EAT process is introduced which produces the same marginal states on the subsystems relevant to the security proof, and the smooth min-entropy of the QKD protocol is lower bounded using a combination of chain rules, as well as a min-tradeoff function corresponding to the EAT process \cite{arnon2018practical,arnon2019simple,tan2020improved}. 

As was recently pointed out by the authors of \cite{george2022finite}, another issue arises when applying the EAT in device dependent prepare-and-measurement protocols. Such protocols can be translated into entanglement based protocols, where Alice, instead of randomly sending states, prepares an entangled state, part of which is sent to Bob via an insecure channel, while the remaining part is kept in Alice's lab. Alice and Bob then perform measurements on their respective parts. The issue which arises is that the statistics obtained from the measurements is not sufficient to certify that the state between Alice and Bob is entangled, requiring additional constraints on Alice's marginal in the final key rate optimisation, which are incompatible with the EAT. We overcome this issue by adding an additional tomography performed by Alice in randomly chosen rounds, thus ensuring that Alice and Bob's measurement statistics are sufficient to certify entanglement between Alice and Bob. 

Having overcome these challenges, we are able to derive a min-tradeoff function using the numerical approach presented in \cite{winick2018reliable}, which requires a photon number cutoff assumption. It involves a linearisation of the objective function and the use of duality, finally reducing the problem to a semi-definite programming optimisation, which can be efficiently handled numerically. Our numerical analysis also suggests that the values of the key rate do not significantly vary once the cutoff becomes large enough. Using this approach, we are able to obtain asymptotic rates in the range of $0.1-10^{-4}$ bits per round for distances up to hundred kilometres. In the finite setting, and for realistic parameters, we get of the order of $10$ Gbits of secret key after $n\sim10^{11}$ rounds and distances of few tens of kilometres.

After most of the work that went into this result was completed, a generalised version of the EAT has been presented \cite{metger2022generalised,metger2022security}, which offers an alternative method of overcoming the challenges to prove the security of device-dependent prepare-and-measure protocols mentioned in the previous two paragraphs. In another recent result, the authors of \cite{kanitschar2023finite} have overcome the photon number cutoff assumption on Bob's state needed to compute the min-tradeoff function that defines the asymptotic rates by means of adding an additional energy test, as well as a dimension reduction technique presented in \cite{Upadhyaya2021}. Their proof also works in the finite setting, albeit only against collective attacks.

\section{Preliminaries}
\subsection{Basic notations}
In this section we introduce some definitions and concepts we use throughout the paper. For a Hilbert space $\mc{H}_A$, we denote by $\mc{D}(\mc{H}_A)$ the set of density operators, i.e. positive semidefinite operators with unit trace, $\rho_A$, acting on quantum system $A$. Sometimes it will be convenient to consider subnormalised states, i.e. states with $\Tr[\rho]\leq1$, in which case we use the notation $\mc{D}_{\leq}(\mc{H}_A)$.  The notation $\mc{H}_{AB}$ denotes a tensor product Hilbert space $\mc{H}_A\otimes\mc{H}_B$, and $\rho_{AB}$ the corresponding bipartite density operator. Classical random variables $X$, taking values $\{x\}$ according to the distribution $\{p_x\}$ can be expressed as density operators as $\rho_X=\sum_xp_x\pro{x}_X$. By $XY$ we denote the Cartesian product of random variables $X$ and $Y$. Further, we will be using the notation $A_1^n=A_1A_2...A_n$ and $X_1^n=X_1X_2...X_n$ for quantum and classical systems. We express cq states using the notation $\rho_{XA}=\sum_xp_x\pro{x}_X\otimes\rho_A^x$. For a cq state $\rho_{CQ} = \sum_{c} p(c) \pro{c} \otimes \rho_c$, an event $\Omega$ is defined as a subset of the elements $\{c\}$. The conditional state is then given by $\rho_{CQ}|_{\Omega} =\frac{1}{\Pr_{\rho}[\Omega]} \sum_{c\in\Omega} p(c) \pro{c} \otimes \rho_c$ where $\Pr_{\rho}[\Omega] := \sum_{c\in\Omega} p(c)$. When the state $\rho$ is clear from the context, we use $\Pr[\Omega]$ in place of $\Pr_{\rho}[\Omega]$. 

For two subnormalised states $\rho,\sigma\in\mc{D}_{\leq}(\mc{H}_A)$, we define the generalised fidelity
\begin{equation}
    F(\rho,\sigma)=\(\Tr\left|\sqrt{\rho}\sqrt{\sigma}\right|+\sqrt{(1-\Tr[\rho])(1-\Tr[\sigma])}\)^2,
\end{equation}
the generalised trace distance
\begin{equation}
    \Delta(\rho,\sigma)=\frac{1}{2}\|\rho-\sigma\|_1+\frac{1}{2}\left|\Tr[\rho-\sigma]\right|,
\end{equation}
as well as the purified distance
\begin{equation}
    P(\rho,\sigma)=\sqrt{1-F(\rho,\sigma)}.
\end{equation}
The generalised trace distance and the purified distance are metrics on $\mc{D}_{\leq}(\mc{H}_A)$. They are related by the Fuchs-van de Graaf inequality
\begin{align}
  \Delta(\rho,\sigma)\leq  P(\rho,\sigma)&\leq\sqrt{2\Delta(\rho,\sigma)-\Delta(\rho,\sigma)^2}\nonumber\\
  &\leq\sqrt{2\Delta(\rho,\sigma)}.\label{Fuchs}
\end{align}
In this work we make use of a number of entropic quantities. In addition to the well known von Neumann entropy, $H(A)_\rho=H(\rho_A)=-\Tr[\rho_A\log\rho_A]$, the conditional von Neumann entropy, $H(A|B)_{\rho}=H(AB)_{\rho}-H(B)_{\rho}$, as well as the Umegaki relative entropy,
% \begin{equation}
%     D(\rho||\sigma)=\begin{cases}
%     \frac{1}{\Tr[\rho]}\Tr\left[\rho(\log\rho-\log\sigma)\right]&\text{ if }\supp(\rho)\subset\supp(\sigma)\\
%     \infty&\text{ otherwise},
%     \end{cases}
% \end{equation}
\begin{equation}
    D(\rho||\sigma)=\frac{1}{\Tr[\rho]}\Tr\left[\rho(\log\rho-\log\sigma)\right] ,
\end{equation}
when $\supp(\rho)\subset\supp(\sigma)$ and $+\infty$ otherwise, for positive semidefinite $\rho$ and $\sigma$, we make use of min and max conditional entropies, defined for a subnormalised quantum state $\rho_{AB}\in\mc{D}_{\leq}(\mc{H}_{AB})$ by \cite{tomamichel2015quantum},
\begin{align}
    &H_{\min}(A|B)_\rho\nonumber\\
    &=\sup_{\sigma_B\in\mc{D}_{\leq}(\mc{H}_{B})}\sup\left\{\lambda\in\mathbb{R}:\rho_{AB}\leq\exp(-\lambda)\mathbb{1}_A\otimes\sigma_B\right\},\\
    &H_{\max}(A|B)_\rho=\max_{\sigma_B\in\mc{D}_{\leq}(\mc{H}_{B})}\log F\(\rho_{AB},\mathbb{1}_A\otimes\sigma_B\).
\end{align}
For $\epsilon\geq0$, we can then define the smooth min and max entropies as \cite{tomamichel2015quantum},
\begin{align}
    &H^\epsilon_{\min}(A|B)_\rho=\max_{\bar{\rho}\in\mc{B}^\epsilon(\rho_{AB})}H_{\min}(A|B)_{\bar{\rho}},\\
    &H^\epsilon_{\max}(A|B)_\rho=\min_{\bar{\rho}\in\mc{B}^\epsilon(\rho_{AB})}H_{\max}(A|B)_{\bar{\rho}},
\end{align}
where $\mc{B}^\epsilon(\rho_{A})$ is the $\epsilon$-ball around a state $\rho_A$ in terms of purified distance, i.e. the set of subnormalised states $\tau\in\mc{D}_{\leq}(\mc{H}_{A})$ such that $P(\tau,\rho)\leq\epsilon$. For parameter $a\in(1,2)$, let us further define the sandwiched R\'enyi divergence~\cite{muller2013quantum,wilde2014strong} for a quantum state $\rho$ and positive semidefinite $\sigma$ as
% \begin{equation}
%     D_a(\rho||\sigma)=\begin{cases}
%     \frac{1}{a-1}\log\Tr\left[\(\sigma^{-\frac{a-1}{2a}}\rho\sigma^{-\frac{a-1}{2a}}\)^a\right]&\text{ if }\supp(\rho)\subset\supp(\sigma)\\
%     \infty&\text{ otherwise}
%     \end{cases}
%     \end{equation}
\begin{equation}
    D_a(\rho||\sigma)=    \frac{1}{a-1}\log\Tr\left[\(\sigma^{-\frac{a-1}{2a}}\rho\sigma^{-\frac{a-1}{2a}}\)^a\right] ,
\end{equation}
when $\supp(\rho)\subset\supp(\sigma)$ and $+\infty$ otherwise, and the conditional  R\'enyi entropy as
\begin{equation}
    H^\uparrow_a(A|B)_\rho=\inf_{\sigma_B\in\mc{D}_{\leq}(\mc{H}_{B})}D_a(\rho_{AB}||\mathbb{1}_A\otimes\sigma_B).
\end{equation}

\subsection{Security definition}

When two parties, Alice and Bob, wish to communicate in perfect secrecy in the presence of a quantum eavesdropper Eve, they need to perform a QKD protocol, typically consisting of $n$ rounds of quantum communication and local measurements, followed by classical post-processing steps involving parameter estimation, error correction and privacy amplification. An instance of a QKD protocol may be aborted if certain tests included in the protocol, such as parameter estimation, fail, or if a subprotocol, such as error correction aborts. If the protocol does not abort, the goal is to obtain a state close to a so-called perfect classical-classical-quantum (ccq) state of the form 
\begin{equation}
    \rho^\text{perfect ccq}_{K_AK_BE}=\frac{1}{d}\sum_{x=0}^{d-1}\pro{xx}_{K_AK_B}\otimes\rho_E ,
\end{equation}
where Alice and Bobs's systems are classical, whereas Eve's system may be quantum. Such a state corresponds to $\log d$ bits of an ideal classical key between Alice and Bob which is secret in that it is completely uncorrelated from Eve, even if Eve is allowed to possess a quantum system. And it is correct in the sense that Alice and Bob's systems are perfectly classically correlated.

A proof of security of a QKD protocol then involves two parts: Firstly, it has to be shown that it results in a state that is sound, i.e. close to a perfect ccq-state. Formally,  for $\epsilon^\mathrm{sou}>0$, a QKD protocol is said to be \emph{$\epsilon^\mathrm{sou}$-sound}, if it results in a state $\rho^\mathrm{QKD}_{K_AK_BE}$, such that if we condition on the event ${\Omega_\mathrm{NA}}$ of not aborting the protocol it holds
\begin{equation}
\Pr_{\rho^\mathrm{QKD}}[\Omega_{\mathrm{NA}}]\frac{1}{2}\left\|\rho^\mathrm{QKD}_{K_AK_BE}|_{\Omega_\mathrm{NA}}-\rho^\text{perfect ccq}_{K_AK_BE}\right\|_1\leq\epsilon^\mathrm{sou}.
\end{equation}
As we wish to treat the error correction protocol separately from the the remaining protocol, it is convenient to split the soundness property into a secrecy and correctness part. Namely, let $\epsilon^\mathrm{sec}>0$ and $\epsilon^\mathrm{cor}>0$. A QKD protocol is said to be \emph{$\epsilon^\mathrm{sec}$-secret} if 
\begin{equation}
\Pr_{\rho^\mathrm{QKD}}[\Omega_{\mathrm{NA}}]\frac{1}{2}\left\|\rho^\mathrm{QKD}_{K_AE}|_{\Omega_\mathrm{NA}}-\rho^\text{perfect ccq}_{K_AE}\right\|_1\leq\epsilon^\mathrm{sec}.
\end{equation}
The protocol is further said to be \emph{$\epsilon^\mathrm{cor}$-correct} if
\begin{equation}
    \Pr_{\rho^\mathrm{QKD}}[K_A\neq K_B\land\Omega_ \mathrm{NA}]\leq \epsilon^\mathrm{cor}.
\end{equation}
If the protocol is both $\epsilon^\mathrm{sec}$-secret and $\epsilon^\mathrm{cor}$-correct, it is $\epsilon^\mathrm{sec}+\epsilon^\mathrm{cor}$-sound. The second part of a security proof is to show completeness, meaning that there is an honest implementation, i.e. an implementation without presence of Eve, that does succeed, i.e. does not abort, with high probability. Formally, for $\epsilon^\mathrm{com}>0$, we say that a QKD protocol is \emph{$\epsilon^\mathrm{com}$-complete}, if
\begin{equation}
1-\Pr_\mathrm{hon}[\Omega_{\mathrm{NA}}]\leq\epsilon^\mathrm{com},
\end{equation}
where the subscript ``$\mathrm{hon}$" refers to the fact that we compute the probability with respect to the honest implementation specified by the protocol.

\section{The QKD protocol}\label{sec:Protocol}

The QKD protocol we consider is based on the four coherent-state protocol using heterodyne detection described in \cite{lin2019asymptotic}. However, we perform a discretisation of Bob's measurement outputs in both key and parameter rounds rather than just in key rounds. Our protocol also differs from the one presented in \cite{lin2019asymptotic} in that we do not include post-selection. 
In each round of the protocol, Alice prepares one of four coherent states $\ket{\varphi_x} \in \{\ket{\alpha},\ket{-\alpha},\ket{i \alpha}, \ket{-i\alpha}\}$ for some predetermined $\alpha\in\R$, with probability $\frac{1}{4}$. The state is then sent to Bob via a noisy channel that is potentially compromised by Eve. Bob then performs a heterodyne measurement. 

We will prove security using an equivalent entanglement-based QKD protocol. Such a protocol can be defined by the source replacement scheme \cite{bennett1992quantum,Grosshans2003Virtual,curty2004entanglement,ferenczi2012symmetries}. Namely, in each round  $i=1,...,n$, Alice prepares an independent copy of the pure state
\begin{equation}\label{eq:Initial}
\ket{\psi}_{AA'}=\frac{1}{2}\sum_{x=0}^3\ket{x}_A\ket{\varphi_x}_{A'}.
\end{equation}
%where $\varphi_x \in \{\alpha,-\alpha,i \alpha, -i\alpha\}$. 
Alice sends the $A'$ subsystem to Bob via a noisy quantum channel, keeping the $A$ subsystem. Alice and Bob then both perform measurements on their respective subsystems.  

Whereas this kind of introduction of an entanglement-based protocol is commonly used when proving security of prepare-and-measure protocols, we face an additional challenge when combining this approach with the EAT \cite{george2022finite}. Namely, unlike in a device independent setting, the statistics obtained from Alice and Bobs measurement, even in an honest implementation, is not sufficient to certify entanglement of eq. (\ref{eq:Initial}). In fact, as in the entanglement-based version Alice only implements one measurement in the computational basis, the statistics produced by the protocol can equally be explained by the separable state
\begin{equation}
\label{eq:Sepinitial}
\rho_{AA'}=\frac{1}{4}\sum_{x=0}^3\pro{x}_A\otimes\pro{\varphi_x}_{A'}.
\end{equation}
This is why to derive a positive secret-key rate, one includes a constraint on the marginal of Alice's state in the optimisation for the key rate~\cite{lin2019asymptotic} . Namely, the marginal is required to take the form 
\begin{equation}\label{eq:InitialA}
\rho_{A}=\frac{1}{4}\sum_{x,y=0}^3\<\varphi_y|\varphi_x\>\ket{x}\bra{y}_A,
\end{equation}
which is not satisfied by the separable state~\eqref{eq:Sepinitial}. The challenge then is to express such a constraint in terms of a distribution obtained by statistical analysis, which is required when applying the EAT. 

We overcome this challenge by considering a hypothetical version of our protocol, where, in randomly chosen rounds, Alice performs tomographic measurements of her marginal on $A$ and, in the end, verifies whether the obtained statistics are compatible with her marginal being equal to eq. (\ref{eq:InitialA}). If this is not the case, the protocol gets aborted. As the data obtained in the tomography rounds is not used for key generation, the key rate obtained in this hypothetical protocol is never larger than the key rate obtained in the physically implemented protocol, where Alice performs no tomography. Also, as we are in a device dependent setting, where we can assume Alice's state preparation to be perfect, the only scenario under which the hypothetical protocol aborts after the tomography test is due to imperfect tomography, the probability of which becomes negligible for large enough $n$. In the following, we use the term `hypothetical QKD protocol' when we consider the entanglement-based protocol including tomography and `physical QKD protocol' when referring to the entanglement-based protocol which does not include tomography. By the source-replacement scheme the latter is equivalent to the prepare-and-measure protocol that is actually performed in the laboratory by Alice and Bob.

When the state eq. (\ref{eq:Initial}) is sent from Alice to Bob, we assume that Eve can attack the channel used to send the $A'$ subsystem coherently. This is equivalent to a scenario where Alice initially prepares all $n$ independent and identically distributed (iid) copies of the state (\ref{eq:Initial}), which are then acted upon a channel $\mc{N}_{{A'}_1^n\to B_1^n}$. Let $\mc{U}^\mc{N}_{A'^n\to B^nE}$ be an isometric extension of the channel and let us define
\begin{equation}\label{eq:initialstateABE}
\ket{\Psi}_{A_1^nB_1^nE}=\id_{A^n}\otimes\,\mc{U}^\mc{N}_{{A'}_1^n\to B_1^nE}\ket{\psi}^{\otimes n}_{A_1^n{A'}_1^n}.
\end{equation}
It has to be assumed that the $E$ subsystem goes to Eve. Alice and Bob are left with the mixed state
\begin{align}\label{eq:initialstate}
\rho_{A_1^n B_1^n}=\Tr_E [\Psi_{A_1^nB_1^nE}].
\end{align}

\subsection{The hypothetical QKD protocol}

We now describe a round of the hypothetical QKD protocol in detail. Let $0\leq p^\text{key}\leq1$, $0\leq p^\text{PE}\leq1$ and $0\leq p^\text{tom}\leq1$, where $p^\text{key}+p^\text{PE}+p^\text{tom}=1$, be the respective probabilities for a given round being used for key generation, parameter estimation and tomography of Alice's marginal.  For each round $i=1,...,n$, Alice and Bob perform the following steps:
\linebreak

(1) {\it Alice's Measurement:}
Alice uses a random number generator to create a random variable $R_i$, taking values $R_i=0,1,2$ with respective probabilities $p^\text{key}$, $p^\text{PE}$ and $p^\text{tom}$. If $R_i=0$, the round is used for key generation. For $R_i=1$, the round is employed for parameter estimation. In both cases Alice  performs a projective measurement $\{\pro{x}\}_{x=0}^3$ on subsystem $A_i$. If $R_i=2$, Alice performs a tomography, using an informationally complete (IC) measurement defined by a Positive-Operator-Valued-Measure (POVM) $\{\Gamma_{x'}\}_{x'=0}^{15}$ on her subsystem. The outcome of Alice's measurement is described by a random variable $X_i$, taking values $x_i$. We define, for the sake of convenience, the random variables 
\begin{align}
\hat{X}_i&=\begin{cases}
x_i\text{ if }R_i=0,\\
\perp\text{ else}.
\end{cases}\\
\tilde{X}_i&=\begin{cases}
x_i\text{ if }R_i=1,\\
\perp\text{ else}.
\end{cases}\\
X'_i&=\begin{cases}
x_i\text{ if }R_i=2,\\
\perp\text{ else}.
\end{cases}
\end{align}
The random variable $R_i$ is then sent to Bob via an authenticated channel.
\newline

(2) {\it Bob's Measurement:}
Bob performs a heterodyne measurement on subsystem $B_i$. From the outcome, Bob obtains a continuous random variable $Y_i$, taking values 
$y_i\in\mathbb{C}$. Again, it will be convenient to define
\begin{align}
\hat{Y}_i&=\begin{cases}
y_i\text{ if }R_i=0,\\
\perp\text{ else}.
\end{cases}\\
\tilde{Y}_i&=\begin{cases}
y_i\text{ if }R_i=1,\\
\perp\text{ else}.
\end{cases}
\end{align}

(3) {\it Discretisation:} Bob discretises his heterodyne outcomes. For key rounds, let $\hat{y}_i=|\hat{y}_i|e^{i\hat{\theta}_i}$ for $\hat{\theta}_i\in[-\frac{\pi}{4},\frac{7\pi}{4})$. Bob then creates a random variable
\begin{equation}\label{eq:disckey}
\hat{Z}_i=\begin{cases}
0\text{ if }\hat{\theta}_i\in[-\frac{\pi}{4},\frac{\pi}{4})\\ %\land |\hat{y}_i|\geq \Delta_a,\\
1\text{ if }\hat{\theta}_i\in[\frac{\pi}{4},\frac{3\pi}{4})\\%\land |\hat{y}_i|\geq \Delta_a,\\
2\text{ if }\hat{\theta}_i\in[\frac{3\pi}{4},\frac{5\pi}{4})\\%\land |\hat{y}_i|\geq \Delta_a,\\
3\text{ if } \hat{\theta}_i\in[\frac{5\pi}{4},\frac{7\pi}{4})\\%\land |\hat{y}_i|\geq \Delta_a,\\
\perp\text{else},
\end{cases}
\end{equation}
where $\hat{Z}_i=\perp$ is taken for non-key rounds. For parameter estimation rounds, Bob defines a discretisation given by an amplitude $\Delta$ and modules of length $\delta$, such that $\Delta/\delta \in \mathbb{N}$. Let $j \in \{0,1,2,3 \}$ and $k\in\{0, \dots, \frac{\Delta}{\delta}-1\}$ and let $\tilde{y}_i=|\tilde{y}_i|e^{i\tilde{\theta}_i}$. Bob then creates a random variable $\tilde{Z}_i$ according to
\begin{equation}\label{eq:discPE}
\tilde{Z}_i=\begin{cases}
j + 4k & \text{ if }  \substack{\tilde{\theta}_i\in[\frac{\pi}{4}(2j-1),\frac{\pi}{4}(2j+1)) \\ |\tilde{y}_i| \in [\delta k, \delta (k+1)),}\\% \delta k \geq |\tilde{y}_i|\geq \delta (k+1)},\\
j + 4 \frac{\Delta}{\delta} & \text{ if } \substack{\tilde{\theta}_i\in[\frac{\pi}{4}(2j-1),\frac{\pi}{4}(2j+1)) \\ |\tilde{y}_i|\in [\Delta,\infty),} \\%\geq \Delta}, \\
\perp & \text{ else}.
\end{cases}
\end{equation}

\begin{figure}
    \centering
    \includegraphics[width=0.98\linewidth]{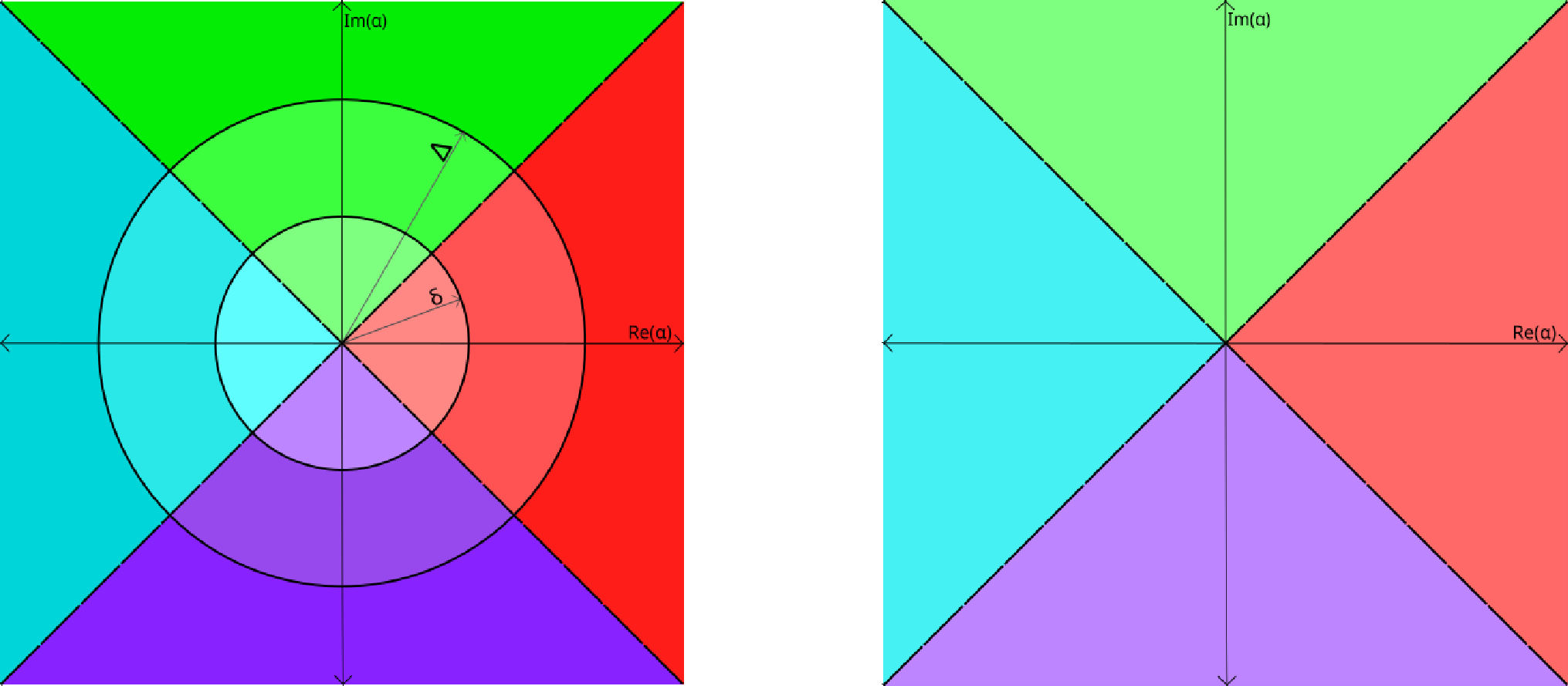}
    \caption{Discretisations of phase space by Bob for parameter estimation rounds (left) and key generation rounds (right). In this figure, the modulation for parameter estimation in phases and amplitudes is given by $\Delta/\delta = 2$, with the outmost modules extending to infinity.}
    \label{fig:PhaseSpaceModules}
\end{figure}

Summarising steps (1) - (3), round $i$ of the protocol has taken as inputs quantum systems $A_iB_i$ of the initial state (\ref{eq:initialstate}), and created discrete classical random variables $\hat{X}_i$ and $\hat{Z}_i$ for key generation,  $\tilde{X}_i$ and $\tilde{Z}_i$ to be used for parameter estimation, as well as $X'_i$ to be used for tomography of Alice's marginal state. Let us define $O_i:=\tilde{X}_iX'_i\hat{Z}_i\tilde{Z}_i$ as the `output' and $S_i:=R_i$ as the `side information'. Let us further define $C_i=\tilde{X}_i\tilde{Z}_iX'_i$ as all the information used in statistical analysis. The reason we define $O_i$, $S_i$ and $C_i$ in this way is that we will later use these random variables when applying the EAT. The EAT requires the statistical analysis variable $C_i$ to be obtainable from a simple read-out of `output' and `side-information' variables $O_i$ and $S_i$. On the other hand, we cannot include $\tilde{X}_i$, $\tilde{Z}_i$ or $X'_i$ into $S_i$ because of the Markov condition, eq. (\ref{eq:Markov}). We therefore have to include them into $O_i$ despite the fact that $\tilde{X}_i$ has to be communicated classically, and treat $\tilde{X}_i$ as additional side-information when applying Proposition \ref{renner}.

Any round $i$ of the protocol can then be described by a channel
\begin{equation}\label{def:M123}
\mc{M}^\mathrm{QKD}:A_iB_i \to \hat{X}_iO_iS_iC_i.
\end{equation}
After $n$ rounds, the relevant systems of Alice, Bob and Eve are in the state
\begin{equation}\label{eq:15}
\sigma^\mathrm{QKD}_{\hat{X}_1^nO_1^nC_1^nS_1^nE}=\id_E\otimes{\mc{M}^\mathrm{QKD}}^{\otimes n}\({\Psi}_{A_1^nB_1^nE}\).
\end{equation}

The next step is to perform parameter estimation. To that purpose, Alice sends $\tilde{X}_1^n$ to Bob, who then performs the parameter estimation protocol deciding whether the protocol gets aborted or not. Further, Alice uses ${X'}_1^n$, obtained from her tomographic measurement, to reconstruct her marginal state. If the reconstructed state is not equal to the expected one up to a certain margin of confidence, the protocol is aborted. Alice informs Bob of her decision, and in the latter case the protocol is aborted.

In order to formalise the decision to abort, we need to introduce some notation. Let us denote by $\mc{C}$ the alphabet of all possible values of $c_i=(\tilde{x}_i,\tilde{z}_i,x'_i)$ that can occur in the protocol. Such values are $c_i=(\perp,\perp,\perp)$ in key rounds, $(x,z,\perp)$ with $x\in\{0,...,3\}$ and $z\in\{0,...,m-1\}$, (where $m=4 \Delta/\delta + 4$ is the total number of modules) in parameter estimation rounds, as well as $(\perp,\perp,x')$ with $x'\in\{0,...,15\}$ in tomography rounds. It will also be convenient to define by $\tilde{\mc{C}}$ the alphabet of all possible values $c_i$ can take in parameter estimation and tomography rounds, only. For a given string $c_1^n \in \mc{C}^n$, we denote by $\freq_{c_1^n}\in\mc{P}_\mc{C}$ the probability distribution corresponding to the frequency of symbols $c\in\mc{C}$ in $c_1^n$, defined by
$\freq_{c_1^n}(c)={|\{i:c_i=c\}|}/{n}$.

In order to decide whether or not to abort, Alice and Bob need to benchmark their obtained statistics, given by a frequency distribution $\freq_{c_1^n}$, against a distribution $p_0\in\mc{P}_{\mc{C}}$, which can be obtained in an honest implementation of the protocol. Let $p_0^\text{sim}$ be the distribution of parameter estimation random variables $(\tilde{X}, \tilde{Z})$ in the honest setting with no attack and $p_0^{\text{tom}}$ be the distribution of the tomography random variable $X'$. Now, we define
%Let us assume an honest application gives us the distribution
\begin{align}
    &p_0(x,z,\perp)=
    %(1-p^\text{key})\tilde p_0(x,z,\perp)=
    p^\text{PE}p_0^\text{sim}(x,z),\label{eq:p0PE}\\
     &p_0(\perp,\perp,x')
     %=(1-p^\text{key})\tilde p_0(\perp,\perp,x')
     =p^\text{tom}p_0^\text{tom}(x')\label{eq:p0tom},\\
     &p_0(\perp,\perp,\perp)=1-\sum_{xz}p_0(x,z,\perp)-\sum_{x'}p_0(\perp,\perp,x')
\end{align}
for $x\in\{0,...,3\}$, $z\in\{0,...,m-1\}$, and $x'\in\{0,...,15\}$. 
%Here, $p_0^\text{sim}$ is obtained by simulation of an honest implementation of the protocol using some kind of noise and $p^\text{tom}$ by simulation of the tomography using some tomographically complete POVM. 
We will provide an explicit form of the $p_0^\text{sim}$ and $p_0^\text{tom}$ we use in Section \ref{subs:NumImplementation}. 

In order to compare the two distributions $\freq_{c_1^n}$ and $p_0\in\mc{P}_{\mc{C}}$, we need to introduce  figures of merit, which quantify the suitability of the distributions for key generation. For now, let us only assume that these figures of merit are given by  affine functions $f^\text{PE}:\mc{P}_{\mc{C}}\to\mathbb{R}$ and $f^\text{tom}:\mc{P}_{\mc{C}}\to\mathbb{R}$ of the form
\begin{align}
    &f^\mathrm{PE}(p)=\sum_{x=0}^3\sum_{z=0}^{m-1}h_{x,z,\perp}p(x,z,\perp),\label{eq:fPE}\\
     &f^\mathrm{tom}(p)=\sum_{x'=0}^{15}h_{\perp,\perp,x'}p(\perp,\perp,x'),\label{eq:ftom}
    \end{align}
for some coefficients $h_{x,z,\perp},h_{\perp,\perp,x'}\in\mathbb{R}$. We will provide an explicit form of the functions later, together with the methodology that compares $\freq_{c_1^n}$ and $p_0$. For the moment, let us note that a reduced number of binnings (i.e., employing as few modules as possible for parameter estimation) decreases the overall differences of distributions $\freq_{c_1^n}$ and $p_0$ in the case of $f^\mathrm{PE}$. This is eventually reflected as an increase in $p^\mathrm{key}$, since fewer rounds must be spent on bounding the differences between said distributions. We now define the respective sets of probabilities for which we do not abort after parameter estimation or tomography as
\begin{align}
&\mc{P}_{\Omega_\mathrm{PE}}:=\left\{p\in\mc{P}_\mc{C}:f^\mathrm{PE}(p)\geq f^\mathrm{PE}(p_0)-\delta_\mathrm{PE}^\mathrm{tol}\right\},\label{def:OmegaEA1}\\
&\mc{P}_{\Omega_\mathrm{tom}}:=\left\{p\in\mc{P}_\mc{C}:f^\mathrm{tom}(p)\geq f^\mathrm{tom}(p_0)-\delta_\mathrm{tom}^\mathrm{tol}\right\},
\end{align}
for some $\delta_\mathrm{PE}^\mathrm{tol},\delta_\mathrm{tom}^\mathrm{tol}>0$. Let us also define $\delta^\mathrm{tol}=\delta_\mathrm{PE}^\mathrm{tol}+\delta_\mathrm{tom}^\mathrm{tol}$, as well as the events of passing the parameter estimation and the tomography test as
\begin{align}
    &\Omega_\mathrm{PE}:=\left\{c_1^n\in\mc{C}^n:\freq_{c_1^n}\in\mc{P}_{\Omega_\mathrm{PE}}\right\},\label{eq:OmegaPE}\\
    &\Omega_\mathrm{tom}:=\left\{c_1^n\in\mc{C}^n:\freq_{c_1^n}\in\mc{P}_{\Omega_\mathrm{tom}}\right\},\label{eq:Omegatom}\\
    &\Omega_\mathrm{EA}=\Omega_\mathrm{PE}\cap\Omega_\mathrm{tom}.\label{def:OmegaEA2}
\end{align}

Assuming that Alice and Bob do not abort after parameter estimation or tomography, they perform an error correction protocol using reverse reconciliation. The information exchanged between Alice and Bob in this step is denoted $L$, and at the end of this step Alice computes a string $\bar{X}_1^n$. In order to check that the error correction was successful, Bob chooses a random hash function $H$ and sends to Alice a description of $H$ as well as the value $H' = H(\hat{Z}_1^n)$. Whenever $H(\hat{Z}_1^n) \neq H(\bar{X}_1^n)$, the protocol is aborted. Let us denote by $H$ and $H'$ the register containing the description and value of the hash function, respectively. Formally, we define the event of passing the error correction step as

\begin{equation}\label{eq:OmegaEC}
    \Omega_{\mathrm{EC}} = \big[H(\hat{Z}_1^n) = H(\bar{X}_1^n)\big].
\end{equation}

We assume that there is a small probability, upper bounded by $\epsilon_\mathrm{EC}>0$,  of the error correction being passed by mistake. For any $\hat{z}_1^n \neq \bar{x}_1^n$, $\Pr[ H(\hat{z}_1^n) = H(\bar{x}_1^n) ] \leq \epsilon_\mathrm{EC}$, where Pr here is over the choice of $H$. Further we assume that the probability of not passing the error correction in an honest implementation is upper bounded by $\Pr_\mathrm{hon}\big[H(\hat{Z}_1^n) \neq H(\bar{X}_1^n)\big]\leq \epsilon^c_\mathrm{EC}$, for some $\epsilon^c_\mathrm{EC}>0$.

\begin{protocol}{Hypothetical QKD protocol}
%\textit{Arguments:} (...))

%\sbline
%\textit{The protocol:}
\begin{enumerate}

\item {Alice prepares state $\ket{\Psi}^{\otimes n}_{AA'}$ given by eq. (\ref{eq:Initial}) and sends subsystems ${A'}_1^n$ to Bob via a noisy channel, and Bob receives subsystems $B_1^n$. The total state is then given by eq. (\ref{eq:initialstateABE}).}

  \item {For round $i=1,...,n$ the following steps are performed:}
  \begin{enumerate}
    \item{Alice chooses $R_i\in\{0,1,2\}$ according to $(p^\mathrm{key},p^\mathrm{PE},p^\mathrm{tom})$ and sends $R_i$ to Bob.}

    \item{If $R_i=0$, Alice measures $A_i$ in a computational basis and stores output in $\hat{X}_i$. Bob measures $B_i$ using a heterodyne measurement, discretises according to eq. (\ref{eq:disckey}), and stores the result in $\hat{Z}_i$.}

 \item{If $R_i=1$, Alice measures $A_i$ in computational basis and stores output in $\tilde{X}_i$. Bob measures $B_i$ using heterodyne measurements, discretises according to eq. (\ref{eq:discPE}), and stores the result in $\tilde{Z}_i$.}

 \item{If $R_i=2$, Alice measures $A_i$ using an informationally complete measurement and stores output in $X'_i$.}
 
    \end{enumerate}
    
  \item {Alice and Bob use $\tilde{X}_1^n\tilde{Z}_1^n$ for parameter estimation. Alice uses ${X'}_1^n$ for the tomograpahy test. If either fails the protocol is aborted.}
  
  \item {If the protocol has not been aborted, error correction is performed using reverse reconciliation. If error correction fails, the protocol is aborted. Otherwise privacy amplification is performed resulting in the final key.}

\end{enumerate}
\end{protocol}\label{prot:Protocol1}

Now, we define the event of not aborting the protocol after either parameter estimation, tomography or error correction as \begin{equation}\label{eq:OmegaNonAbort}
\Omega_{\mathrm{NA}} = \Omega_{\mathrm{EA}} \cap \Omega_{\mathrm{EC}}.
\end{equation}  
We note that $\Omega_{\mathrm{EA}}$ only depends on the $C_1^n$ registers, whereas $\Omega_{\mathrm{EC}}$ depends on the $HH'\bar{X}_1^n$ registers. The description of the protocol together with an attack of Eve leads to the final state

\begin{align*}
&\sigma^\mathrm{QKD}_{\bar{X}_1^n O_1^n S_1^n C_1^nHH'LE}\\
&=\Pr[\Omega_{\mathrm{NA}}]\sigma^\mathrm{QKD}_{\bar{X}_1^n O_1^n S_1^n C_1^nHH'LE}|_{ \Omega_{\mathrm{NA}}}\\
&\quad +(1-\Pr[\Omega_{\mathrm{NA}}])\sigma^\mathrm{QKD}_{\bar{X}_1^n O_1^n S_1^n C_1^nHH'LE}|_{ \neg\Omega_{\mathrm{NA}}}.
\end{align*}
At this point, if Alice and Bob did not abort, they proceed with a privacy amplification protocol (e.g. via two-universal hash functions) to distill the final, secret key. %All in all, the steps that compose our hypothetical QKD protocol can be summarised as in the 

\subsection{The physical QKD protocol}
Finally, let us describe the physical QKD protocol, which is the equivalent entanglement-based version of the prepare-and-measure protocol that is actually performed in a realistic implementation. The physical QKD protocol is essentially equal to the hypothetical protocol except for the following two differences: Firstly, in step (1) of the protocol, whenever $R_i=2$ Alice does not perform tomography---the round is simply discarded. Hence, the random variable $X'_i$ will be either $\perp$ or undefined. Secondly, as a consequence of not performing the tomography, abortion or non-abortion at the end of the protocol will only be determined by parameter estimation and error correction. I.e., the event $\Omega_\mathrm{NA}$ will be replaced by $\Omega^\mathrm{phys}_\mathrm{NA}=\Omega_\mathrm{PE}\cap\Omega_\mathrm{EC}$. 

In principle, a more efficient physical protocol can be obtained if no round is discarded. That is, if we set $p^\text{tom}=0$. In our case, however, a non-zero value of $p^\text{tom}$ is chosen because the comparison of the two protocols, hypothetical and physical, results in a much simpler analysis when using the same values for the probabilities $p^\text{key}$, $p^\text{PE}$ and $p^\text{tom}$ in both protocols. It is nevertheless worth noting that the value taken for $p^\text{tom}$ below is very small, so the possible impact on the key rate does not represent a significant loss.

All in all, the physical protocol is composed by the following steps

\begin{protocol}{Physical QKD protocol}
%\textit{Arguments:} (...))

%\sbline
%\textit{The protocol:}
\begin{enumerate}

\item {Alice prepares state $\ket{\Psi}^{\otimes n}_{AA'}$ given by eq. (\ref{eq:Initial}) and sends subsystems ${A'}_i^n$ to Bob via a noisy channel, and Bob receives subsystems $B_1^n$. The total state is then given by eq. (\ref{eq:initialstateABE}).}

  \item {For round $i=1,...,n$ the following steps are performed:}
  \begin{enumerate}
    \item{Alice chooses $R_i\in\{0,1,2\}$ according to $(p^\mathrm{key},p^\mathrm{PE},p^\mathrm{tom})$ and sends $R_i$ to Bob.}

    \item{If $R_i=0$, Alice measures $A_i$ in a computational basis and stores output in $\hat{X}_i$. Bob measures $B_i$ using a heterodyne measurement, discretises according to eq. (\ref{eq:disckey}), and stores the result in $\hat{Z}_i$.}

\item{If $R_i=1$, Alice measures $A_i$ in a computational basis and stores output in $\tilde{X}_i$. Bob measures $B_i$ using a heterodyne measurement, discretises according to eq. (\ref{eq:discPE}), and stores the result in $\tilde{Z}_i$.}

 \item{If $R_i=2$, the round is not used.}
 
    \end{enumerate}
    
  \item {Alice and Bob use $\tilde{X}_1^n\tilde{Z}_1^n$ for parameter estimation. If it fails, the protocol is aborted.}
  
  \item {If the protocol has not been aborted, error correction is performed using reverse reconciliation. If error correction fails, the protocol is aborted. Otherwise privacy amplification is performed resulting in the final key.}

\end{enumerate}

\end{protocol}

\section{Security of the QKD protocol}

In this section we show the security against coherent attacks of the physical QKD protocol. The proof consists of two parts: Firstly, in Subsection \ref{sec:sound} we show the soundness of the  protocol and provide a lower bound on the key rate. We first show the soundness of the hypothetical protocol, which we then show implies the soundness of the physical protocol. The soundness proof of the hypothetical protocol is based on the EAT and depends on the choice of a min-tradeoff function of a particular form. Secondly, in Subsection \ref{sec:compl}, we show the completeness of the physical QKD protocol, i.e. that there is a nonzero probability of it not being aborted. Finally, in Subsection \ref{sec:MinTrade}, we show how a suitable min-tradeoff function can be derived from
 the numerical approach presented by \cite{lin2019asymptotic,winick2018reliable}.

 \subsection{Soundness}\label{sec:sound}

In this section we provide a lower bound on the achievable key rate $r^\mathrm{phys}=\ell/n$, where $\ell$ is the length of the key and $n$ the number of rounds, conditioned on the event $ \Omega^\mathrm{phys}_{\mathrm{NA}}$ of not aborting the physical QKD protocol. Such a lower bound can be obtained from  Proposition~\ref{renner} below, which is based on the leftover-hash Lemma \cite{tomamichel2015quantum}. To derive this proposition, we make the following

\textbf{Bounded energy assumption:} the attack by the eavesdropper involves states of finite energy.

Under this assumption, Eve's attack can be arbitrarily approximated by an attack using finite-dimensional systems. Or, in other words, we can assume Eve's attack to be defined in a Hilbert space of arbitrary finite dimension. This allow us to use the original version of the leftover hash lemma. We leave for future work how to generalise the security proof to infinite Hilbert space dimensions, for which a version of the leftover-hash Lemma has been derived in \cite{furrer2012security,berta2016smooth}.

\begin{proposition}\label{renner}\cite{renner2008security,tomamichel2015quantum} 
Let $\epsilon^\mathrm{phys},\epsilon_\mathrm{EC}\geq0$. Let further $\mathrm{leak}_\mathrm{EC}$ be the amount of information lost to Eve during error correction. Then Alice and Bob are able to extract a key of length,
\begin{align}
\ell\leq & H^{\epsilon^\mathrm{phys}}_{\min}(\hat{Z}_1^n|R_1^n\tilde{X}_1^nE)_{\sigma^\mathrm{phys,QKD}|_{ \Omega^\mathrm{phys}_{\mathrm{NA}}}}-\mathrm{leak}_\mathrm{EC}\nonumber\\
&-2\log\frac{1}{\epsilon^\mathrm{phys}},
\end{align}
which is $3\epsilon^\mathrm{phys}+\epsilon_\mathrm{EC}$-sound, in the sense that \begin{align*}
&\Pr_{\sigma^\mathrm{phys,QKD}}[\Omega^\mathrm{phys}_\mathrm{NA}]\frac{1}{2}\|\sigma^\mathrm{phys,QKD}|_{ \Omega^\mathrm{phys}_{\mathrm{NA}}}-\sigma^\mathrm{perfect \ ccq}\|_1\\
&\leq 3\epsilon^\mathrm{phys}+\epsilon_\mathrm{EC}.
\end{align*}

\end{proposition}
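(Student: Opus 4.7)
This proposition is a standard compilation from the composable-security framework of~\cite{renner2008security,tomamichel2015quantum}, so my plan is to assemble the bound from three well-known ingredients: splitting the soundness error into a secrecy part and a correctness part, applying the quantum Leftover Hashing Lemma to the post-selected state to obtain secrecy, and using a smooth min-entropy chain rule to strip off the classical registers leaked during the error-correction step.

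First, for correctness, observe that the event $\{K_A\neq K_B\}\cap\Omega^\mathrm{phys}_{\mathrm{NA}}$ is contained in the event that $\bar{X}_1^n\neq \hat{Z}_1^n$ while $H(\bar{X}_1^n)=H(\hat{Z}_1^n)$, because non-abort forces agreement of the verification hashes. By the $\epsilon_\mathrm{EC}$-almost-universality of the hash family assumed in the protocol description, for every fixed pair of distinct strings the conditional collision probability over the random hash $H$ is at most $\epsilon_\mathrm{EC}$; averaging over the joint distribution of $(\bar{X}_1^n,\hat{Z}_1^n)$ then yields $\Pr[K_A\neq K_B\land\Omega^\mathrm{phys}_{\mathrm{NA}}]\leq\epsilon_\mathrm{EC}$, which contributes the $\epsilon_\mathrm{EC}$ summand.

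For secrecy, I would apply the quantum Leftover Hashing Lemma to the cq state $\sigma^\mathrm{phys,QKD}_{\hat{Z}_1^n R_1^n \tilde{X}_1^n L H H' E}|_{\Omega^\mathrm{phys}_\mathrm{NA}}$, treating every register other than $\hat{Z}_1^n$ as Eve's side information. The lemma guarantees that a two-universal hash producing an $\ell$-bit key leaves the output within trace distance at most $2\epsilon + 2^{-\frac{1}{2}(H^\epsilon_{\min}(\hat{Z}_1^n|R_1^n\tilde{X}_1^n L H H' E)-\ell)}$ of the ideal ccq state. Choosing smoothing $\epsilon=\epsilon^\mathrm{phys}$ and $\ell$ equal to the smooth min-entropy minus $2\log(1/\epsilon^\mathrm{phys})$ collapses the exponential into $\epsilon^\mathrm{phys}$, giving a secrecy contribution of $3\epsilon^\mathrm{phys}$. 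A classical-side-information chain rule for the smooth min-entropy then detaches the error-correction registers,
\begin{equation*}
H^{\epsilon^\mathrm{phys}}_{\min}(\hat{Z}_1^n|R_1^n\tilde{X}_1^n L H H' E) \geq H^{\epsilon^\mathrm{phys}}_{\min}(\hat{Z}_1^n|R_1^n\tilde{X}_1^n E) - \log|L H H'|,
\end{equation*}
and the cumulative dimension $\log|L H H'|$ is by definition $\mathrm{leak}_\mathrm{EC}$, from which the announced key-length bound follows by direct substitution.

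The main obstacle I anticipate is the careful bookkeeping around the post-selection on $\Omega^\mathrm{phys}_\mathrm{NA}$: the Leftover Hashing Lemma is naturally stated for a (sub)normalised input, so the factor $\Pr[\Omega^\mathrm{phys}_\mathrm{NA}]$ appearing in the soundness definition must be propagated consistently through the trace-distance bound, using the behaviour of smooth min-entropy and purified distance on subnormalised states as set up in the Preliminaries. This is also what pins down the constant $3$ in $3\epsilon^\mathrm{phys}$ (one factor from the $\epsilon^\mathrm{phys}$-ball used for smoothing, plus the $2\epsilon^\mathrm{phys}$ from the hashing error with the chosen key length); it is the only step that requires genuine calculation as opposed to a direct citation.
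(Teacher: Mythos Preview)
The paper does not supply its own proof of this proposition: it is stated with citations to \cite{renner2008security,tomamichel2015quantum} and then used as a black box. Your outline is precisely the standard argument one finds in those references (split soundness into correctness plus secrecy; obtain correctness from the $\epsilon_\mathrm{EC}$-almost-universality of the verification hash; obtain secrecy from the quantum Leftover Hash Lemma with smoothing parameter $\epsilon^\mathrm{phys}$ and key length $H^{\epsilon^\mathrm{phys}}_{\min}-2\log(1/\epsilon^\mathrm{phys})$; then peel off the classical error-correction transcript via the smooth min-entropy chain rule for classical side information). There is nothing to contrast, and your identification of the post-selection bookkeeping as the only genuinely delicate point is accurate.
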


In order to apply Proposition \ref{renner}, we need to lower bound the smooth min-entropy using the EAT. However, as noted in the introduction, we are unable to apply the EAT directly to our physical QKD protocol due to the need to characterise Alice's marginal system in a prepare-and-measure scenario. To that purpose we first consider the hypothetical QKD protocol that includes additional tomography measurements. However, due to issues with the Markov condition, we will not be able to directly apply the EAT to our hypothetical protocol either. Instead, we will make use of various chain rules for smooth entropies in order to relate the output of our hypothetical QKD protocol to that of a series of $n$ EAT channels, which we call the `EAT process', and then apply the EAT to the EAT process, while dealing with the remaining terms separately.

 We begin by considering the hypothetical QKD protocol. Let $n\in\mathbb{N}$ and $\epsilon>0$. Conditioned on not aborting, the hypothetical QKD protocol results in the state $\sigma^\mathrm{QKD}|_{{ \Omega_{\mathrm{NA}}}}$. By application of chain rules for smooth entropies (eq. (6.63) and eq. (6.56) in \cite{tomamichel2015quantum}), it holds
\begin{align}
& H^\epsilon_{\min}(\hat{Z}_1^n|R_1^n\tilde{X}_1^nE)_{\sigma^\mathrm{QKD}|_{{ \Omega_{\mathrm{NA}}}}}\nonumber\\
&\geq H^{\epsilon/4}_{\min}(\hat{Z}_1^n\tilde{X}_1^n|R_1^nE)_{\sigma^\mathrm{QKD}|_{{ \Omega_{\mathrm{NA}}}}}\nonumber\\
&\quad -H^{\epsilon/4}_{\max}(\tilde{X}_1^n|R_1^nE)_{\sigma^\mathrm{QKD}|_{{ \Omega_{\mathrm{NA}}}}}-2\Gamma(\epsilon/4),\label{eq:20}%-\log{|L|}
\end{align} 
where $\Gamma(x):=-\log\(1-\sqrt{1-x^2}\)$. By another application of a chain rule (eq. (6.57) in \cite{tomamichel2015quantum}), we obtain
\begin{align}
&H^{\epsilon/4}_{\min}(\hat{Z}_1^n\tilde{X}_1^n|R_1^nE)_{\sigma^\mathrm{QKD}|_{ \Omega_{\mathrm{NA}}}}\nonumber\\
&\geq H^{\epsilon/16}_{\min}(\tilde{X}_1^n {X'}_1^n \hat{Z}_1^n \tilde{Z}_1^n|R_1^nE)_{\sigma^\mathrm{QKD}|_{ \Omega_{\mathrm{NA}}}}-3\Gamma(\epsilon/16)\nonumber\\
&\quad -H^{\epsilon/16}_{\max}({X'}_1^n\tilde{Z}_1^n|\hat{Z}_1^n\tilde{X}_1^nR_1^nE)_{\sigma^\mathrm{QKD}|_{ \Omega_{\mathrm{NA}}}}\label{eq:22}.
\end{align}

We can now apply the same argument as used in \cite{dupuis2016entropy} to upper bound the max entropy terms in eqs. (\ref{eq:20}) and (\ref{eq:22}). We begin by upper bounding the term $H^{\epsilon/4}_{\max}(\tilde{X}_1^n|R_1^nE)_{\sigma^\mathrm{QKD}|_{{ \Omega_{\mathrm{NA}}}}}$ in eq. (\ref{eq:20}). We note that by the strong subadditivity of the smooth max entropy \cite{tomamichel2015quantum}, it holds
\begin{equation}
H^{\epsilon/4}_{\max}(\tilde{X}_1^n|R_1^nE)_{\sigma^\mathrm{QKD}|_{{ \Omega_{\mathrm{NA}}}}}\leq H^{\epsilon/4}_{\max}(\tilde{X}_1^n|R_1^n)_{\sigma^\mathrm{QKD}|_{{ \Omega_{\mathrm{NA}}}}},
\end{equation}
where the r.h.s. only involves classical registers. We further note that $\tilde{X}_i=\perp$, unless $R_i=1$, which happens with probability $p^\text{PE}$, in which case $\tilde{X}_i$ takes a value in $\{0,...,3\}$. Introducing a binary random variable $\tilde{R}_i$ that takes value $1$ when Alice's random variable is $R_i=1$, and takes value $0$ when $R_i=0$ or $R_i=2$, we can apply the data processing inequality and Lemma \ref{lem:Omar}  in Appendix \ref{App:B}, showing that
\begin{align}
 &H^{\epsilon/4}_{\max}(\tilde{X}_1^n|R_1^n)_{\sigma^\mathrm{QKD}|_{{ \Omega_{\mathrm{NA}}}}}\nonumber\\
 &\leq  H^{\epsilon/4}_{\max}(\tilde{X}_1^n|\tilde{R}_1^n)_{\sigma^\mathrm{QKD}|_{{ \Omega_{\mathrm{NA}}}}}\nonumber\\
 &\leq np^\text{PE}\log{5} + \sqrt{\frac{n}{2}\ln{\frac{32}{\epsilon^2 \Pr_{\sigma^\mathrm{QKD}}[\Omega_{\mathrm{NA}}]}}}\log{5}.
\end{align}
In a similar way we can provide an upper bound on the term $H^{\epsilon/16}_{\max}({X'}_1^n\tilde{Z}_1^n|\hat{Z}_1^n\tilde{X}_1^n R_1^nE)_{\sigma^\mathrm{QKD}|_{ \Omega_{\mathrm{NA}}}}$ in eq. (\ref{eq:22}). Again, it holds by strong subadditivity,
\begin{align}
&H^{\epsilon/16}_{\max}({X'}_1^n\tilde{Z}_1^n|\hat{Z}_1^n\tilde{X}_1^nR_1^nE)_{\sigma^\mathrm{QKD}|_{ \Omega_{\mathrm{NA}}}}\nonumber\\
&\leq H^{\epsilon/16}_{\max}({X'}_1^n\tilde{Z}_1^n|R_1^n)_{\sigma^\mathrm{QKD}|_{ \Omega_{\mathrm{NA}}}},
\end{align}
where the r.h.s. is classical. Further, it holds that ${X'}_i\tilde{Z}_i=\perp\perp$, unless $R_i=1$ or $R_i=2$, which happens with probability $p^\text{PE}+p^\text{tom}=1-p^\text{key}$. In this case ${X'}_i\tilde{Z}_i$ takes a value in $\{0,...,15,\perp\}\times\{0,...,m-1,\perp\}$. Let us again introduce a binary random variable $\bar{R}_i$, taking value $1$ when $R_i=1$ or $R_i=2$ and value $0$ when $R_i=0$. We can now apply Lemma \ref{lem:Omar}, identifying the pair ${X'}_i\tilde{Z}_i$ with $X_i$ and the value $\perp\perp$ with $\perp$, and obtain 
\begin{align}
 &H^{\epsilon/16}_{\max}({X'}_1^n\tilde{Z}_1^n|R_1^n)_{\sigma^\mathrm{QKD}|_{ \Omega_{\mathrm{NA}}}}\nonumber\\
 &\leq  H^{\epsilon/16}_{\max}({X'}_1^n\tilde{Z}_1^n|\bar{R}_1^n)_{\sigma^\mathrm{QKD}|_{ \Omega_{\mathrm{NA}}}}\\
 &\leq n(1-p^\text{key})\log{\(17(m+1)\)}\nonumber\\
 &\quad +\sqrt{\frac{n}{2}\ln{\frac{512}{\epsilon^2 \Pr_{\sigma^\mathrm{QKD}}[\Omega_{\mathrm{NA}}]}}}\log{(17(m+1))}.
\end{align}

What remains to be done is to lower bound the term $H^{\epsilon/16}_{\min}(O_1^n|R_1^nE)_{\sigma^\mathrm{QKD}|_{ \Omega_{\mathrm{NA}}}}$ in eq. (\ref{eq:22}) using the EAT.

\subsubsection{Reduction to Collective Attacks via Entropy Accumulation}

In order to apply the EAT, we wish to condition on an event that is only defined on the statistical information $C_1^n$, where we recall that $C_i = \tilde{X}_i \tilde{Z}_i X_i'$ is given by classical information extracted from the outputs $O_i = \tilde{X}_i X'_i \hat{Z}_i \tilde{Z}_i$ and the side information $S_i = R_i$. For such conditioning, note that we can write $\sigma^\mathrm{QKD}|_{ \Omega_{\mathrm{NA}}}$ as $(\sigma^\mathrm{QKD}|_{\Omega_{\mathrm{EA}}})|_{\Omega_{\mathrm{EC}}}$ where the probability of the event $\Omega_{\mathrm{EC}}$ with respect to the state $\sigma^\mathrm{QKD}|_{\Omega_{\mathrm{EA}}}$ is given by $\Pr_{\sigma^\mathrm{QKD}}[\Omega_{\mathrm{EC}} | \Omega_{\mathrm{EA}}]$.
Now we use Lemma B.5 of \cite{dupuis2016entropy} to show that for $a\in(1,2)$
\begin{align}
    &H^{\epsilon/16}_{\min}(O_1^n|S_1^nE)_{\sigma^\mathrm{QKD}|_{ \Omega_{\mathrm{NA}}}} 
    \nonumber\\
    &\geq H_{a}^{\uparrow}(O_1^n|S_1^nE)_{\sigma^\mathrm{QKD}|_{ \Omega_{\mathrm{NA}}}} - \frac{\Gamma(\epsilon/16)}{a-1} \\
    &\geq H_{a}^{\uparrow}(O_1^n|S_1^nE)_{\sigma^\mathrm{QKD}|_{ \Omega_{\mathrm{EA}}}} - \frac{\Gamma(\epsilon/16)}{a-1}\nonumber\\
    &\quad - \frac{a}{a-1} \log\left(\frac{1}{\Pr_{\sigma^\mathrm{QKD}}[\Omega_{\mathrm{EC}} | \Omega_{\mathrm{EA}}]}\right).\label{eq23}
\end{align}

In order to apply the EAT, we now consider the EAT process, which results in the same marginal state $\sigma^\mathrm{QKD}_{O_1^nS_1^nE}|_{ \Omega_{\mathrm{EA}}}$, hence the same value for $H_{a}^{\uparrow}(O_1^n|S_1^nE)_{\sigma^\mathrm{QKD}|_{ \Omega_{\mathrm{EA}}}}$,  as the hypothetical QKD protocol. The EAT process will be closely related to the  hypothetical QKD protocol, however it will not include the output $\hat{X}_i$, which is not necessary in this context. Also, the EAT process will not include an error correction or privacy amplification protocol.

We begin by defining our EAT channels. To that purpose, we take the channel $\mc{M}^\mathrm{QKD}$ defined in eq. (\ref{def:M123}); however we omit the output of $\hat{X}_i$, resulting in a channel 
\begin{equation}\label{def:M123tilde}
\mc{M}^\text{EAT}:A_iB_i \to O_iC_iS_i,
\end{equation}
which performs steps (1) - (3) of the hypothetical QKD protocol, but in the end does not output Alice's key system $\hat{X}_i$. It is easy to see that $C_i$ can be obtained by readout of classical information contained in $O_i$ and $S_i$. As it only contains discretised information, $O_i$ is finite dimensional. Further defining $Q_{i}:=A_{i+1}^nB_{i+1}^n$, we can now define a channel $\mc{M}^\text{EAT}_i:Q_{i-1}\to Q_{i} O_iS_iC_i$ by
\begin{equation}\label{eq:EATchan}
\mc{M}^\text{EAT}_i:=\id_{Q_{i}}\otimes\mc{M}^\text{EAT}_{A_{i}B_{i}\to O_iS_iC_i}.
\end{equation}
In order to apply the EAT, we still have to show that the Markov condition 
\begin{equation}\label{eq:Markov}
O_1^{i-1}\leftrightarrow S_1^{i-1}E\leftrightarrow S_i
\end{equation}
or, equivalently $I(O_1^{i-1}:S_i|S_1^{i-1}E)=0$, is fulfilled for all $i\in\{1,..,n\}$. In our case this holds trivially as $S_i=R_i$ is obtained by a local random number generator, which is used independently in each round. 

We can now define our EAT process as a concatenation of EAT channels $\mc{M}^\text{EAT}_n\circ\cdots\circ\mc{M}^\text{EAT}_1$, yielding the following state,
\begin{align}
&\sigma^\text{EAT}_{O_1^nS_1^nC_1^nE}\nonumber\\
&=\id_E\otimes\(\mc{M}^\text{EAT}_n\circ\cdots\circ\mc{M}^\text{EAT}_1\)\({\Psi}_{A_1^nB_1^nE}\)\label{eq_28}\\
&=\id_E\otimes{\mc{M}^\text{EAT}}^{\otimes n}\({\Psi}_{A_1^nB_1^nE}\)\\
&=\Tr_{\hat{X}_1^n}\left[\id_E\otimes{\mc{M}^\mathrm{QKD}}^{\otimes n}\({\Psi}_{A_1^nB_1^nE}\)\right].%\\
%&=\sigma^\mathrm{QKD}_{O_1^nS_1^nC_1^nE}|_{\Omega_{\mathrm{NA}}},
\end{align}
The EAT process then concludes with Alice and Bob using $C_1^n$ to perform the tomography test as well as parameter estimation. This is done in the same way as in the hypothetical QKD protocol. Consequently it holds,
\begin{align}
\sigma^\mathrm{QKD}_{O_1^nS_1^nE}|_{\Omega_{\mathrm{EA}}}&=\sigma^\text{EAT}_{O_1^nS_1^nE}|_{\Omega_{\mathrm{EA}}},\label{eq_32}\\
H_{a}^{\uparrow}(O_1^n|S_1^nE)_{\sigma^\mathrm{QKD}|_{\Omega_{\mathrm{EA}}}}&=H_{a}^{\uparrow}(O_1^n|S_1^nE)_{\sigma^\text{EAT}|_{\Omega_{\mathrm{EA}}}}.\label{eq:33}
\end{align}

Hence, it will be sufficient to lower bound the r.h.s. of eq. (\ref{eq:33}) using the EAT. We can now define a \emph{min-tradeoff function} as a function  $f:\mc{P}_{\mc{C}}\to\mathbb{R}$ such that for all $i=1,...,n$ it holds

\begin{equation}\label{def:MinTrade}
f(p)\leq \inf_{\ket{\rho}\in\Sigma_i( p)}H(O_i|S_i\tilde{E})_{\rho^{\text{EAT},i}},
\end{equation}
where $\tilde{E}$ can be chosen isomorphic to $Q_{i-1}$, and we have defined 
\begin{equation}
\Sigma_i(p)=\left\{\ket{\rho}_{Q_{i-1}\tilde{E}}\in\mc{H}_{Q_{i-1}\tilde{E}}: \bra{c}\rho^{\text{EAT},i}_{C_i}\ket{c}\equiv p(c) \right\},
\end{equation}
for a state
\begin{align}
\rho^{\text{EAT},i}_{O_iS_iC_iQ_i\tilde{E}}&=\id_{\tilde{E}}\otimes\mc{M}^\text{EAT}_i(\rho_{Q_{i-1}\tilde{E}})\nonumber\\
&=\id_{Q_i\tilde{E}}\otimes\mc{M}^\text{EAT}_{A_iB_i\to O_iS_iC_i}\(\rho_{A_iB_iQ_i\tilde{E}}\).
\end{align}
Here $\equiv$ stands for equality for all $c\in\mc{C}$. Note that $\ket{\rho}$ can be chosen pure by strong subadditivity as remarked in \cite{dupuis2016entropy}.

In the following we will consider the case where $f(p)=f^\mathrm{PE}(p)+f^\mathrm{tom}(p)+\mathrm{const}$, with affine functions $f^\mathrm{PE}$ and $f^\mathrm{tom}$ of the form given by eqs. (\ref{eq:fPE},\ref{eq:ftom}), respectively.  In that case, it holds for all $c_1^n\in\Omega_\mathrm{EA}$ that $f(\freq_{c_1^n})\geq f(p_0)-\delta^\mathrm{tol}$. We can then formulate the entropy accumulation theorem, given by \cite[Proposition V.3]{dupuis2019entropy}, in the following way: 

\begin{proposition}\cite{dupuis2019entropy}\label{EAT2alpha}
Let $n\in\mathbb{N}$. Let $p_0$ be given by eqs. (\ref{eq:p0PE},\ref{eq:p0tom}). Let $\Omega_{\mathrm{EA}}$ be the event defined by eqs. (\ref{def:OmegaEA1}-\ref{def:OmegaEA2}) for some $\delta^\mathrm{tol}_\mathrm{PE},\delta^\mathrm{tol}_\mathrm{tom}>0$, $\delta^\mathrm{tol}=\delta^\mathrm{tol}_\mathrm{PE}+\delta^\mathrm{tol}_\mathrm{tom}$, and an affine  min-tradeoff function $f$ such that $f(p)=f^\mathrm{PE}(p)+f^\mathrm{tom}(p)+\mathrm{const}$. Then for $a\in(1,2)$, and a set of registers $O_1^n S_1^n E$ fulfilling the Markov chain \eqref{eq:Markov},
\begin{align}
&H^{\uparrow}_{a}(O_1^n|S_1^nE)_{\sigma^\mathrm{EAT}|_{ \Omega_{\mathrm{EA}}}}\nonumber\\
&\geq nf(p_0)-n\(\delta^\mathrm{tol}+\frac{(a-1)\ln2}{2}V^2\)\nonumber\\
&\quad-\frac{a}{a-1}\log\frac{1}{\Pr_{\sigma^\mathrm{EAT}}[\Omega_{\mathrm{EA}}]}-n(a-1)^2K_a,\label{eq39}
\end{align}
where we have defined
\begin{align}
V&=\sqrt{\mathrm{Var}(f)+2}+\log(2d_O^2+1),\label{eq:V}\\
K_a&=\frac{2^{(a-1)(2\log d_O+\max(f)-\min_\Sigma(f))}}{6(2-a)^3\ln2}\nonumber\\
&\quad\times\ln^3\(2^{2\log d_O+\max(f)-\min_\Sigma(f)}+e^2\),\label{eq:Ka}
\end{align}
where $\max(f)=\max_{p\in\mathcal{P}_{\mathcal{C}}}f(p)$ and $\min_\Sigma(f)=\min_{p:{\Sigma}\neq\oldemptyset}f(p)$ and $\mathrm{Var}(f)$ denotes the variance of $f$.

\end{proposition}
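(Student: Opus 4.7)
The plan is to recognise this statement as a direct specialisation of the generalised EAT (Proposition V.3 of~\cite{dupuis2019entropy}) to the sequence of channels $\mc{M}_i^\mathrm{EAT}$ defined in eq.~(\ref{eq:EATchan}). I would need to verify the hypotheses of that theorem for our setting, convert the event $\Omega_\mathrm{EA}$ into a uniform lower bound on $f$ applied to the empirical frequency, and then read off the parameters $V$ and $K_a$. The hypotheses themselves are straightforward to check: each $\mc{M}_i^\mathrm{EAT}$ has finite-dimensional classical outputs $O_i,S_i,C_i$, and $C_i=\tilde{X}_i\tilde{Z}_iX'_i$ is a deterministic readout of $(O_i,S_i)$ because its non-$\perp$ entries are fixed by $S_i=R_i$ and already present in $O_i$. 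The Markov condition~(\ref{eq:Markov}) was noted earlier in the excerpt as a consequence of $S_i=R_i$ being drawn by an independent local RNG, and the min-tradeoff property for $f$ is exactly the defining inequality~(\ref{def:MinTrade}).

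The central step, and the only one that exploits the specific structure assumed on $f$, is converting $\Omega_\mathrm{EA}$ into a threshold on $f(\freq_{c_1^n})$. For every $c_1^n\in\Omega_\mathrm{EA}=\Omega_\mathrm{PE}\cap\Omega_\mathrm{tom}$, the definitions~(\ref{def:OmegaEA1})--(\ref{eq:Omegatom}) immediately give $f^\mathrm{PE}(\freq_{c_1^n})\geq f^\mathrm{PE}(p_0)-\delta^\mathrm{tol}_\mathrm{PE}$ together with $f^\mathrm{tom}(\freq_{c_1^n})\geq f^\mathrm{tom}(p_0)-\delta^\mathrm{tol}_\mathrm{tom}$. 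Summing these two inequalities and using the assumed affine decomposition $f=f^\mathrm{PE}+f^\mathrm{tom}+\mathrm{const}$ yields $f(\freq_{c_1^n})\geq f(p_0)-\delta^\mathrm{tol}$ uniformly on the event. This is the threshold to be fed into Proposition V.3 of~\cite{dupuis2019entropy}.

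Given this threshold, the generalised EAT will deliver~(\ref{eq39}) verbatim: the linear term contributes $nf(p_0)-n\delta^\mathrm{tol}$, the Gaussian-type correction gives $-n(a-1)(\ln 2/2)V^2$, the conditioning on a non-unit-probability event produces $-\tfrac{a}{a-1}\log(1/\Pr_{\sigma^\mathrm{EAT}}[\Omega_\mathrm{EA}])$, and the third-order remainder is $-n(a-1)^2K_a$, with $V$ and $K_a$ in the closed forms~(\ref{eq:V}) and~(\ref{eq:Ka}). I do not anticipate a genuine obstacle at this stage: the conceptually hard work — constructing the EAT process so that its marginals agree with $\sigma^\mathrm{QKD}|_{\Omega_\mathrm{EA}}$ on the registers of interest, arranging $C_i$ to be a readout of $(O_i,S_i)$, and producing an affine $f$ obeying~(\ref{def:MinTrade}) — has already been dispatched upstream in the construction of $\mc{M}^\mathrm{EAT}$. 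The only delicate bookkeeping point is checking that $\Omega_\mathrm{EA}$ depends only on the $C_1^n$ registers, which is immediate from~(\ref{eq:OmegaPE})--(\ref{def:OmegaEA2}).
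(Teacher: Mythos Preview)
Your proposal is correct and mirrors the paper exactly. The paper does not supply an independent proof of this proposition either: it observes, just before stating it, that for all $c_1^n\in\Omega_\mathrm{EA}$ one has $f(\freq_{c_1^n})\geq f(p_0)-\delta^\mathrm{tol}$ (precisely your central step), and then invokes Proposition~V.3 of \cite{dupuis2019entropy} directly, with the Markov condition and the readout property of $C_i$ having been verified in the construction of $\mc{M}^\mathrm{EAT}_i$.
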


We will now use Proposition \ref{EAT2alpha} to show the soundness of the hypothetical protocol. For that purpose, we need the following Lemma, which formalises the intuition that in order to upper bound the probability of aborting after tomography, we have to choose the corresponding tolerance parameter large enough.

\begin{lemma}\label{lem:deltatolbound}
Let $n\in\mathbb{N}$ and $\epsilon^\mathrm{tom}\in(0,1)$. Let us assume it holds
   \begin{equation}\label{eq:deltatom}
    \delta^\mathrm{tol}_\mathrm{tom}\geq 2\sqrt{\log\(\frac{n}{\epsilon^\mathrm{tom}}\)\sum_{i=1}^{16}\frac{\gamma'_ic'^2_i}{n}}+\frac{3D'}{n}\log\frac{n}{\epsilon^\mathrm{tom}},
\end{equation}
where we have defined for, $i\in\{1,...,16\}$,
\begin{align}
&\gamma'_i:=\frac{\pi'_i(1-\sum_{j=1}^i\pi'_j)}{1-\sum_{j=1}^{i-1}\pi'_j},\\
&c'_i:=h'_i-\frac{\sum_{j=i+1}^{17}h'_j\pi'_j}{1-\sum_{j=1}^{i}\pi'_j},\\
&D':=\max_{i,j\in\{1,...,17\}}|h'_i-h'_j|,
\end{align}
for $\pi'_{x'+1}=p_0(\perp,\perp,x')$, and $h'_{x'+1}=h_{\perp,\perp,x'}$, for $x'=0,...,15$, as well as $\pi'_{17}=1-\sum_{i=1}^{16}\pi'_i$, and $h'_{17}=0$. Then it holds \begin{equation}
    \Pr_{\sigma^\mathrm{QKD}}[ \neg\Omega_{\mathrm{tom}}]\leq\epsilon^\mathrm{tom}.
\end{equation}
\end{lemma}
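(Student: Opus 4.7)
The plan is to reduce the event $\neg\Omega_\mathrm{tom}$ to a standard concentration inequality on an i.i.d.\ sum of bounded random variables, exploiting that Alice's tomographic statistics are completely determined by her local state preparation and hence insensitive to Eve's attack.

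First I would exploit that in the entanglement-based protocol Alice's marginal on $A_1^n$ equals $\rho_A^{\otimes n}$ regardless of the channel $\mc{U}^{\mc{N}}$ that acts on $A'_1^n$. Since in round $i$ the round label $R_i$ is locally generated by Alice and the tomographic POVM $\{\Gamma_{x'}\}$ acts only on $A_i$, the random variable
\begin{equation*}
T_i := \begin{cases} x'+1 & \text{if } R_i=2 \text{ and } X'_i=x',\\ 17 & \text{otherwise,}\end{cases}
\end{equation*}
is i.i.d.\ across rounds with $\Pr[T_i=j]=\pi'_j$, independently of Eve's strategy. Adopting the convention $h'_{17}=0$, one obtains the clean identities $f^\mathrm{tom}(\freq_{c_1^n})=\frac{1}{n}\sum_{i=1}^n h'_{T_i}$ and $f^\mathrm{tom}(p_0)=\mathbb{E}[h'_{T_1}]$, so $\neg\Omega_\mathrm{tom}$ reduces to a one-sided deviation of an empirical mean of i.i.d.\ random variables of range at most $D'$ from its mean by more than $\delta^\mathrm{tol}_\mathrm{tom}$.

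Next I would apply a Bernstein-type concentration inequality, for which the key algebraic input is the identity $\mathrm{Var}(h'_{T_1})=\sum_{i=1}^{16}\gamma'_i c'^2_i$. The coefficients $\gamma'_i$ and $c'_i$ are precisely those that arise when one decomposes the categorical variable $T_1$ into a sequence of conditional Bernoullis ``is $T_1=i$, given $T_1\ge i$?'': $\gamma'_i=\pi'_i(1-\sum_{j\le i}\pi'_j)/(1-\sum_{j<i}\pi'_j)$ is the unconditional variance contributed by step $i$, while $c'_i=h'_i-\mathbb{E}[h'_{T_1}\mid T_1>i]$ is the jump of the associated Doob martingale; telescoping their squared products gives the full variance. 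Substituting the variance proxy $V:=\sum_i\gamma'_i c'^2_i$ and the range $D'$ into a Bernstein tail bound of the form $\Pr[\frac{1}{n}\sum_i h'_{T_i}-\mathbb{E}[h'_{T_1}]\le -t]\le \exp(-nt^2/(2V+\tfrac{2D'}{3}t))$ and inverting the resulting quadratic for the smallest $t$ making the RHS at most $\epsilon^\mathrm{tom}$ yields a bound of the shape $\delta^\mathrm{tol}_\mathrm{tom}\ge c_1\sqrt{V\log(1/\epsilon^\mathrm{tom})/n}+c_2 D'\log(1/\epsilon^\mathrm{tom})/n$, matching the displayed inequality up to constants.

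The main obstacle is matching the precise numerical constants, in particular the $\log(n/\epsilon^\mathrm{tom})$ (rather than $\log(1/\epsilon^\mathrm{tom})$) appearing under the square root. That additional $\log n$ factor suggests either that the authors invoke a maximal or self-normalised Bernstein inequality controlling all partial sums $\sum_{i\le k} h'_{T_i}$ simultaneously via a peeling/union bound over $k\le n$, or that they use an off-the-shelf result (e.g.\ a Freedman- or Kato-type martingale tail bound for bounded differences) that already comes with these exact constants. I would resolve this by identifying the specific inequality used, verifying that the variance proxy $\sum_i \gamma'_i c'^2_i$ and the almost-sure bound $D'$ are the correct inputs, and then finishing with the routine algebraic inversion; given the i.i.d.\ reduction established above, no further probabilistic content is needed.
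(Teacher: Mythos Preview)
Your proposal is correct and follows essentially the same approach as the paper: reduce to an i.i.d.\ deviation problem by observing that Alice's tomography is purely local (so $\Pr_{\sigma^\mathrm{QKD}}[\neg\Omega_\mathrm{tom}]=\Pr_\mathrm{hon}[\neg\Omega_\mathrm{tom}]$), then apply a concentration bound for the linear statistic $(\hat\pi'-\pi')^T h'$ of a multinoulli sample. The only gap you flag --- the precise constants and the $\log(n/\epsilon^\mathrm{tom})$ under the square root --- is resolved in the paper not by deriving a Bernstein-type bound from scratch but by invoking Proposition~5.2 of Agrawal and Jia (2017), whose statement already comes packaged with exactly these quantities $\gamma'_i$, $c'_i$, $D'$ and the $\log(n/\epsilon)$ factor; your martingale/peeling intuition for the origin of the extra $\log n$ is in fact how that reference proves it.
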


\begin{proof}
As the tomography is performed entirely within Alice's lab, with no influence of Eve or the noisy channel, we can restrict our attention to an honest implementation of the protocol, i.e. $\Pr_{\sigma^\mathrm{QKD}}[ \neg\Omega_{\mathrm{tom}}]=\Pr_{\mathrm{hon}}[ \neg\Omega_{\mathrm{tom}}]$. Let us assume an honest application gives us the distribution
\begin{align}
     p_0(\perp,\perp,x')&=(1-p^\text{key})\tilde p_0(\perp,\perp,x')\nonumber\\
     &=p^\text{tom}p_0^\text{tom}(x'),
\end{align}
for  $x'\in\{0,...,15\}$. Let us further assume that Alice and Bob observe some frequency distribution $\freq_{c^n_1}$. Recalling the definition of the event $\Omega_\mathrm{tom}$, we note that it holds %$ \left|f(\freq_{c_1^n})-f(p_0)\right|\leq \left|\tilde{f}(\freq_{c_1^n})-\tilde{f}(p_0)\right|$, and consequently,
\begin{align}
     &\Pr_\mathrm{hon}[\Omega_\mathrm{tom}]\nonumber\\
     &\geq\Pr_\mathrm{hon}\left[\left|f^\mathrm{tom}(\freq_{c_1^n})-f^\mathrm{tom}(p_0)\right|\leq\delta_\mathrm{tom}^\mathrm{tol}\right].
\end{align}
An honest implementation of the protocol corresponds to $n$ independent multinoulli trials with parameter $p_0$. In order to provide lower bounds we can therefore make use of a concentration result provided by Proposition 2 of \cite{agrawal2017posterior}. Namely, it holds with probability $1-\epsilon^\mathrm{tom}$ that
\begin{align}
    &\left|f^\mathrm{tom}(\freq_{c_1^n})-f^\mathrm{tom}(p_0)\right|=\left|(\hat{\pi}'-\pi')^Th'\right|\nonumber\\
    &\leq 2\sqrt{\log\(\frac{n}{\epsilon^\mathrm{tom}}\)\sum_{i=1}^{16}\frac{\gamma'_ic'^2_i}{n}}+\frac{3D'}{n}\log\frac{n}{\epsilon^\mathrm{tom}},
\end{align}
where $\hat{\pi}'_{x'+1}=\freq_{c_1^n}(\perp,\perp,x')$ for $x'=0,...,15$, as well as  $\hat{\pi}'_{17}=1-\sum_{i=1}^{16}\hat{\pi}'_i$. Hence, if we choose the tolerance parameter $\delta^\mathrm{tol}_\mathrm{tom}$ fulfilling eq. (\ref{eq:deltatom}), we obtain the desired bound
\begin{equation}
\Pr_\mathrm{hon}\left[\left|f^\mathrm{tom}(\freq_{c_1^n})-f^\mathrm{tom}(p_0)\right|\leq\delta_\mathrm{tom}^\mathrm{tol}\right]\geq1-\epsilon^\mathrm{tom},
\end{equation}
finishing the proof.

\end{proof}

In order to show the soundness of the physical QKD protocol, which does not include tomography of Alice's marginal system, we also need the following Lemma, which relates the smooth min entropies of the physical and hypothetical protocol.

\begin{lemma}\label{lem:HypPhys}
%Let $\epsilon^\mathrm{phys}\in(0,1)$ and $\epsilon\in\(0,\epsilon^\mathrm{phys}-\sqrt{2\Pr[\neg\Omega_\mathrm{tom}]}\)$.
Let the smoothing parameters \\ $\epsilon\in\(0,1-\sqrt{2\Pr_{\sigma^\mathrm{QKD}}[\neg\Omega_\mathrm{tom} | \Omega_\mathrm{EC}\cap\Omega_\mathrm{PE}]}\)$ and  $\epsilon^\mathrm{phys}\in\(\epsilon+\sqrt{2\Pr_{\sigma^\mathrm{QKD}}[\neg\Omega_\mathrm{tom} | \Omega_\mathrm{EC}\cap\Omega_\mathrm{PE}]},1\)$. 
Then it holds
\begin{align}
    &H^{\epsilon^\mathrm{phys}}_{\min}(\hat{Z}_1^n|S_1^n\tilde{X}_1^nE)_{\sigma^\mathrm{phys,QKD}|_{\Omega^\mathrm{phys}_{\mathrm{NA}}}}\nonumber\\
    &\geq H^\epsilon_{\min}(\hat{Z}_1^n|S_1^n\tilde{X}_1^nE)_{\sigma^\mathrm{QKD}|_{\Omega_{\mathrm{NA}}}}.\label{eq68}
\end{align}
\end{lemma}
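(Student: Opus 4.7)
The plan is to exploit the fact that the only difference between the two protocols is Alice's tomographic measurement in rounds with $R_i=2$, which is a purely local operation on $A_i$ whose outcome $X'_i$ appears neither in the target registers $\hat{Z}_1^n S_1^n \tilde{X}_1^n E$ nor in the defining events of $\Omega_\mathrm{PE}$ or $\Omega_\mathrm{EC}$. Consequently, if we set
\begin{equation*}
\rho := \sigma^\mathrm{phys,QKD}_{\hat{Z}_1^n S_1^n \tilde{X}_1^n E}\big|_{\Omega^\mathrm{phys}_\mathrm{NA}},
\end{equation*}
then $\rho$ coincides with $\sigma^\mathrm{QKD}_{\hat{Z}_1^n S_1^n \tilde{X}_1^n E}\big|_{\Omega_\mathrm{PE}\cap\Omega_\mathrm{EC}}$, and the corresponding hypothetical marginal $\sigma^\mathrm{QKD}_{\hat{Z}_1^n S_1^n \tilde{X}_1^n E}\big|_{\Omega_\mathrm{NA}}$ is obtained from $\rho$ by the further conditioning $\rho|_{\Omega_\mathrm{tom}}$.

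Writing $q:=\Pr_{\sigma^\mathrm{QKD}}[\Omega_\mathrm{tom}\mid \Omega_\mathrm{PE}\cap\Omega_\mathrm{EC}]$, the convex decomposition $\rho=q\,\rho|_{\Omega_\mathrm{tom}}+(1-q)\,\rho|_{\neg\Omega_\mathrm{tom}}$ yields the trace-distance bound $\Delta(\rho,\rho|_{\Omega_\mathrm{tom}})\leq 1-q=\Pr_{\sigma^\mathrm{QKD}}[\neg\Omega_\mathrm{tom}\mid \Omega_\mathrm{PE}\cap\Omega_\mathrm{EC}]$. Inserting this into the Fuchs--van de Graaf inequality \eqref{Fuchs} then gives
\begin{equation*}
P\bigl(\rho,\rho|_{\Omega_\mathrm{tom}}\bigr)\leq \sqrt{2\Pr_{\sigma^\mathrm{QKD}}[\neg\Omega_\mathrm{tom}\mid \Omega_\mathrm{PE}\cap\Omega_\mathrm{EC}]}.
\end{equation*}

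To finish, I would pick any $\bar\sigma\in\mc{B}^\epsilon(\rho|_{\Omega_\mathrm{tom}})$ that (approximately) attains the right-hand side of \eqref{eq68}; the triangle inequality for the purified distance, combined with the hypothesis $\epsilon^\mathrm{phys}\geq\epsilon+\sqrt{2\Pr_{\sigma^\mathrm{QKD}}[\neg\Omega_\mathrm{tom}\mid \Omega_\mathrm{EC}\cap\Omega_\mathrm{PE}]}$, places $\bar\sigma$ inside $\mc{B}^{\epsilon^\mathrm{phys}}(\rho)$. The conclusion is then immediate from the definition of the smooth min-entropy as a supremum over the $\epsilon$-ball.

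The main obstacle is conceptual rather than technical: one must convince oneself that the tomography step in the hypothetical protocol really is invisible to the registers $\hat{Z}_1^n S_1^n \tilde{X}_1^n E$ at the level of the unconditional state, so that only the classical event $\Omega_\mathrm{tom}$ encodes the difference between the two conditional states. Once this has been checked carefully---using that Alice's tomography acts only on her local $A_i$ and is performed after $R_i$ is broadcast---the passage from trace distance to purified distance, and from there to smooth min-entropy, is standard.
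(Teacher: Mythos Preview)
Your proposal is correct and follows essentially the same route as the paper: the paper also invokes non-signalling to identify the marginals on $\hat{Z}_1^nS_1^n\tilde{X}_1^nE$ of $\sigma^\mathrm{phys,QKD}|_{\Omega_\mathrm{PE}\cap\Omega_\mathrm{EC}}$ and $\sigma^\mathrm{QKD}|_{\Omega_\mathrm{PE}\cap\Omega_\mathrm{EC}}$, bounds the trace distance to the state further conditioned on $\Omega_\mathrm{tom}$ by $2\Pr_{\sigma^\mathrm{QKD}}[\neg\Omega_\mathrm{tom}\mid\Omega_\mathrm{EC}\cap\Omega_\mathrm{PE}]$ (your $\Delta\leq 1-q$ in the normalised case), applies Fuchs--van de Graaf, and concludes via the triangle inequality for the purified distance that $\mc{B}^\epsilon$ around the hypothetical conditional state sits inside $\mc{B}^{\epsilon^\mathrm{phys}}$ around the physical one. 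The one notational looseness---writing ``$\rho|_{\Omega_\mathrm{tom}}$'' when $\Omega_\mathrm{tom}$ lives on the ${X'}_1^n$ register absent from $\rho$---is harmless once understood as conditioning the classical extension and then tracing out, which is exactly the point you flag in your last paragraph.
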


\begin{proof}
We begin by noting that Alice's tomography in the hypothetical protocol does not change the $\hat{Z}_1^nS_1^n\tilde{X}_1^nE$ subsystems of the final state, i.e. 
\begin{equation}\label{76}
\sigma_{\hat{Z}_1^nS_1^n\tilde{X}_1^nHH'\bar{X}_1^n\tilde{Y}_1^nE}^\mathrm{phys,QKD}=\sigma_{\hat{Z}_1^nS_1^n\tilde{X}_1^nHH'\bar{X}_1^n\tilde{Y}_1^nE}^\mathrm{QKD}.
\end{equation}
This is by non-signalling. As the events $\Omega_\mathrm{EC}$ and $\Omega_\mathrm{PE}$, defined by eqs. (\ref{eq:OmegaPE}) and (\ref{eq:OmegaEC}), respectively, depend only on the systems $HH'\hat{Z}_1^n\bar{X}_1^n$ and $\tilde{Y}_1^n\tilde{X}_1^n$, it also holds
\begin{align}
&\sigma_{\hat{Z}_1^nS_1^n\tilde{X}_1^nHH'\bar{X}_1^n\tilde{Y}_1^nE}^\mathrm{phys,QKD}\big|_{\Omega_\mathrm{EC}\cap\Omega_\mathrm{PE}}\nonumber\\
&=\sigma_{\hat{Z}_1^nS_1^n\tilde{X}_1^nHH'\bar{X}_1^n\tilde{Y}_1^nE}^\mathrm{QKD}\big|_{\Omega_\mathrm{EC}\cap\Omega_\mathrm{PE}}.\label{76a}
\end{align}
Using the triangle inequality, it follows that
\begin{align}
&\left\|\sigma_{\hat{Z}_1^nS_1^n\tilde{X}_1^nE}^\mathrm{QKD}|_{\Omega_\mathrm{EC}\cap\Omega_\mathrm{PE}\cap\Omega_\mathrm{tom}}-\sigma_{\hat{Z}_1^nS_1^n\tilde{X}_1^nE}^\mathrm{phys,QKD}\big|_{\Omega_\mathrm{EC}\cap\Omega_\mathrm{PE}}\right\|_1 \nonumber\\
&\leq
\left\|\sigma_{\hat{Z}_1^nS_1^n\tilde{X}_1^nE}^\mathrm{QKD}|_{\Omega_\mathrm{EC}\cap\Omega_\mathrm{PE}\cap\Omega_\mathrm{tom}}- \sigma_{\hat{Z}_1^nS_1^n\tilde{X}_1^nE}^\mathrm{QKD}|_{\Omega_\mathrm{EC}\cap\Omega_\mathrm{PE}} \right\|_1\nonumber\\
&\quad +\left\| \sigma_{\hat{Z}_1^nS_1^n\tilde{X}_1^nE}^\mathrm{QKD}|_{\Omega_\mathrm{EC}\cap\Omega_\mathrm{PE}}
-\sigma_{\hat{Z}_1^nS_1^n\tilde{X}_1^nE}^\mathrm{phys,QKD}\big|_{\Omega_\mathrm{EC}\cap\Omega_\mathrm{PE}}\right\|_1\nonumber
\\
&\leq 2\Pr_{\sigma^\mathrm{QKD}}[\neg\Omega_\mathrm{tom} | \Omega_\mathrm{EC}\cap\Omega_\mathrm{PE}].
\end{align}

Hence by eq. (\ref{Fuchs}) it holds 
\begin{align}
&P\left(\sigma_{\hat{Z}_1^nS_1^n\tilde{X}_1^nE}^\mathrm{QKD}|_{\Omega_\mathrm{EC}\cap\Omega_\mathrm{PE}\cap\Omega_\mathrm{tom}},\sigma_{\hat{Z}_1^nS_1^n\tilde{X}_1^nE}^\mathrm{phys,QKD}|_{\Omega_\mathrm{EC}\cap\Omega_\mathrm{PE}}\right)\nonumber\\
&\leq\sqrt{ 2\Pr_{\sigma^\mathrm{QKD}}[\neg\Omega_\mathrm{tom} | \Omega_\mathrm{EC}\cap\Omega_\mathrm{PE}]}.
\end{align}
Further, by the triangle inequality, the $\epsilon$-ball (in terms of purified distance) around $\sigma_{\hat{Z}_1^nS_1^n\tilde{X}_1^nE}^\mathrm{QKD}|_{\Omega_{\mathrm{NA}}}$ is contained in the $\epsilon^\mathrm{phys}$-ball around $\sigma_{\hat{Z}_1^nS_1^n\tilde{X}_1^nE}^\mathrm{QKD}|_{\Omega^\mathrm{phys}_{\mathrm{NA}}}$, implying eq. (\ref{eq68}).

\end{proof}

We are now ready to prove the soundness physical QKD protocol, which is our main result.

\begin{theorem}{\bf(Soundness)}\label{FinteSizePhys}
Let  $n\in\mathbb{N}$. Let $\epsilon^\mathrm{phys}_{\mathrm{NA}},\epsilon^\mathrm{tom},\epsilon_\mathrm{EC}\in(0,1)$ such that $\epsilon^\mathrm{tom}<\frac{1}{2}\epsilon^\mathrm{phys}_{\mathrm{NA}}$. Let $\epsilon\in\(0,1-\sqrt{2\epsilon^\mathrm{tom}/\epsilon^\mathrm{phys}_{\mathrm{NA}}}\)$, and define $\epsilon^\mathrm{phys}=\epsilon+\sqrt{2\epsilon^\mathrm{tom}/\epsilon^\mathrm{phys}_{\mathrm{NA}}}$. Let $0\leq  p^\mathrm{key},p^\mathrm{PE},p^\mathrm{tom}\leq1$ such that $p^\mathrm{key}+p^\mathrm{PE}+p^\mathrm{tom}=1$. Let $f$ be an affine min-tradeoff function of the form $f(p)=f^\mathrm{PE}(p)+f^\mathrm{tom}(p)+\mathrm{const}$. Let $p_0$ be given by eqs. (\ref{eq:p0PE},\ref{eq:p0tom}). Let $\delta^\mathrm{tol}_\mathrm{PE}>0$ and define 
\begin{equation}\label{eq:deltatom2}
    \delta^\mathrm{tol}_\mathrm{tom}= 2\sqrt{\log\(\frac{n}{\epsilon^\mathrm{tom}}\)\sum_{i=1}^{16}\frac{\gamma'_ic'^2_i}{n}}+\frac{3D'}{n}\log\frac{n}{\epsilon^\mathrm{tom}}.
\end{equation}
Let further $\Omega_\mathrm{PE}$, $\Omega_\mathrm{tom}$, $\Omega_\mathrm{EC}$, and $\Omega_\mathrm{NA}$ be defined by eqs. (\ref{eq:OmegaPE}-\ref{eq:OmegaNonAbort}). Let $\text{leak}_\text{EC}$ be the amount of information leaked during error correction. Then, if $\Pr_{\sigma^\mathrm{phys,QKD}}[\Omega_{\mathrm{PE}}\cap\Omega_{\mathrm{EC}}]\geq \epsilon^\mathrm{phys}_{\mathrm{NA}}$, %Then, either the physical QKD protocol aborts with probability greater than  $1-\epsilon^\mathrm{phys}_{\mathrm{NA}}$, or, 
for any $a\in(1,2)$, the physical  QKD protocol provides an $3\epsilon^\mathrm{phys}+\epsilon_\mathrm{EC}$-sound key at rate $r^\mathrm{phys}=\ell/n$ with
\begin{align}
  & r^\mathrm{phys}|_{\Omega^\mathrm{phys}_{\mathrm{NA}}}\nonumber\\
   &\geq f(p_0) -\delta^\mathrm{tol}_\mathrm{PE}-\delta^\mathrm{tol}_\mathrm{tom}-\frac{(a-1)\ln2}{2}V^2\nonumber\\
   &-(a-1)^2K_a-p^\mathrm{PE}\log{5}\nonumber\\
   &-(1-p^\mathrm{key})\log(17(m+1))\nonumber\\
    &-\frac{1}{\sqrt{n}}\left[\sqrt{\frac{1}{2}\ln\frac{32}{\epsilon^2 ( \epsilon^\mathrm{phys}_{\mathrm{NA}}-\epsilon^\mathrm{tom})}}\log{5}\right.\nonumber\\
    &\left.+\sqrt{\frac{1}{2}\ln\frac{512}{\epsilon^2 ( \epsilon^\mathrm{phys}_{\mathrm{NA}}-\epsilon^\mathrm{tom})}}\log{(17(m+1))}\right]\nonumber\\
    &-\frac{1}{n}\left[\frac{\Gamma(\epsilon/16)}{a-1}+\frac{a}{a-1}\log\frac{1}{ \epsilon^\mathrm{phys}_{\mathrm{NA}}-\epsilon^\mathrm{tom}}\right.\nonumber\\
    &\left.+\mathrm{leak}_\mathrm{EC}+2\log{\frac{1}{\epsilon^\mathrm{phys}}}+2\Gamma(\epsilon/4)+3\Gamma(\epsilon/16)\right]\label{eq:FiniteKeyRate}.
\end{align}

\end{theorem}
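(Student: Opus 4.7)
The plan is to chain together the reductions assembled in the preceding subsections. I begin by applying Proposition~\ref{renner} to the physical protocol, which reduces the task to lower bounding the physical smooth min-entropy $H^{\epsilon^\mathrm{phys}}_{\min}(\hat{Z}_1^n|R_1^n\tilde{X}_1^nE)_{\sigma^\mathrm{phys,QKD}|_{\Omega^\mathrm{phys}_{\mathrm{NA}}}}$ at the cost of a soundness error $3\epsilon^\mathrm{phys}+\epsilon_\mathrm{EC}$ and of the classical corrections $\mathrm{leak}_\mathrm{EC}+2\log(1/\epsilon^\mathrm{phys})$. Next I pass from the physical to the hypothetical protocol via Lemma~\ref{lem:HypPhys}. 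Its hypothesis is met because Lemma~\ref{lem:deltatolbound} together with the choice~\eqref{eq:deltatom2} gives $\Pr_{\sigma^\mathrm{QKD}}[\neg\Omega_\mathrm{tom}]\leq\epsilon^\mathrm{tom}$, while the non-signalling identity $\Pr_{\sigma^\mathrm{QKD}}[\Omega_\mathrm{PE}\cap\Omega_\mathrm{EC}]=\Pr_{\sigma^\mathrm{phys,QKD}}[\Omega_\mathrm{PE}\cap\Omega_\mathrm{EC}]\geq\epsilon^\mathrm{phys}_\mathrm{NA}$ (which holds because Alice's tomography acts only on her private $X'$ registers, cf.~eqs.~\eqref{76}--\eqref{76a}) yields $\Pr_{\sigma^\mathrm{QKD}}[\neg\Omega_\mathrm{tom}\mid\Omega_\mathrm{EC}\cap\Omega_\mathrm{PE}]\leq\epsilon^\mathrm{tom}/\epsilon^\mathrm{phys}_\mathrm{NA}$. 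Hence it suffices to lower bound $H^\epsilon_{\min}(\hat{Z}_1^n|R_1^n\tilde{X}_1^nE)_{\sigma^\mathrm{QKD}|_{\Omega_\mathrm{NA}}}$.

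I then invoke the two chain-rule inequalities~\eqref{eq:20} and~\eqref{eq:22}, which decompose the hypothetical smooth min-entropy of $\hat{Z}_1^n$ into the smooth min-entropy $H^{\epsilon/16}_{\min}(O_1^n\mid R_1^nE)_{\sigma^\mathrm{QKD}|_{\Omega_\mathrm{NA}}}$ of the full EAT output, minus the two smooth max-entropies of $\tilde{X}_1^n$ and of ${X'}_1^n\tilde{Z}_1^n$, minus the additive constants $2\Gamma(\epsilon/4)+3\Gamma(\epsilon/16)$. These max-entropies are estimated immediately after~\eqref{eq:20} and~\eqref{eq:22} via Lemma~\ref{lem:Omar}. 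Substituting the lower bound $\Pr_{\sigma^\mathrm{QKD}}[\Omega_\mathrm{NA}]\geq\epsilon^\mathrm{phys}_\mathrm{NA}-\epsilon^\mathrm{tom}$ into the denominators of those bounds produces the leading linear terms $np^\mathrm{PE}\log 5$ and $n(1-p^\mathrm{key})\log(17(m+1))$ together with the $O(\sqrt n)$ corrections that appear on the right-hand side of~\eqref{eq:FiniteKeyRate}.

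What remains is to lower bound $H^{\epsilon/16}_{\min}(O_1^n\mid R_1^nE)_{\sigma^\mathrm{QKD}|_{\Omega_\mathrm{NA}}}$ via the EAT. The inequality~\eqref{eq23} converts it into a sandwiched R\'enyi entropy of $\sigma^\mathrm{QKD}|_{\Omega_\mathrm{EA}}$, losing $\Gamma(\epsilon/16)/(a-1)+\tfrac{a}{a-1}\log(1/\Pr[\Omega_\mathrm{EC}\mid\Omega_\mathrm{EA}])$. By~\eqref{eq_32}--\eqref{eq:33} this R\'enyi entropy coincides with the one on the EAT-process state $\sigma^\mathrm{EAT}|_{\Omega_\mathrm{EA}}$, and the Markov condition~\eqref{eq:Markov} holds trivially since $S_i=R_i$ is produced by an independent private coin. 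Proposition~\ref{EAT2alpha} then yields $nf(p_0)-n(\delta^\mathrm{tol}_\mathrm{PE}+\delta^\mathrm{tol}_\mathrm{tom})-n\tfrac{(a-1)\ln 2}{2}V^2-n(a-1)^2K_a-\tfrac{a}{a-1}\log(1/\Pr[\Omega_\mathrm{EA}])$. Since $\Pr[\Omega_\mathrm{EA}]\cdot\Pr[\Omega_\mathrm{EC}\mid\Omega_\mathrm{EA}]=\Pr[\Omega_\mathrm{NA}]\geq\epsilon^\mathrm{phys}_\mathrm{NA}-\epsilon^\mathrm{tom}$, the two log-probability losses add up to $\tfrac{a}{a-1}\log(1/(\epsilon^\mathrm{phys}_\mathrm{NA}-\epsilon^\mathrm{tom}))$. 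Assembling the five reductions, dividing by $n$, and reading off each line yields~\eqref{eq:FiniteKeyRate}. The main obstacle is purely organisational: carefully propagating the $\epsilon$-parameters and the conditioning probabilities through each step so that every probability-of-non-abort factor ultimately collapses to the single quantity $\epsilon^\mathrm{phys}_\mathrm{NA}-\epsilon^\mathrm{tom}$, which in turn requires the non-signalling observation to be invoked at precisely the right place.
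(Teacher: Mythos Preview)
Your proposal is correct and follows essentially the same approach as the paper's proof: both chain Proposition~\ref{renner}, Lemma~\ref{lem:HypPhys}, the smooth-entropy chain rules~\eqref{eq:20}--\eqref{eq:22}, the max-entropy bounds via Lemma~\ref{lem:Omar}, the R\'enyi reduction~\eqref{eq23}, and Proposition~\ref{EAT2alpha}, and both merge the two $\tfrac{a}{a-1}\log(1/\Pr[\cdot])$ losses into a single term in $\Pr_{\sigma^\mathrm{QKD}}[\Omega_\mathrm{NA}]\geq\epsilon^\mathrm{phys}_\mathrm{NA}-\epsilon^\mathrm{tom}$. The only difference is the order of exposition (you open with Proposition~\ref{renner} and Lemma~\ref{lem:HypPhys}, the paper closes with them), and you make explicit the step $\Pr_{\sigma^\mathrm{QKD}}[\neg\Omega_\mathrm{tom}\mid\Omega_\mathrm{EC}\cap\Omega_\mathrm{PE}]\leq\epsilon^\mathrm{tom}/\epsilon^\mathrm{phys}_\mathrm{NA}$ needed to verify the hypothesis of Lemma~\ref{lem:HypPhys}, which the paper leaves implicit.
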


\begin{proof}
We begin by lower bounding $H^\epsilon_{\min}(\hat{Z}_1^n|S_1^n\tilde{X}_1^nE)_{\sigma^\mathrm{QKD}|_{{ \Omega_{\mathrm{NA}}}}}$ using eqs. (\ref{eq:20} - \ref{eq23}). We then note that by eq. (\ref{eq_32}) it holds $H_a^\uparrow(O_1^n|S_1^nE)_{\sigma^\mathrm{QKD}|_{\Omega_{\mathrm{EA}}}}=H_a^\uparrow(O_1^n|S_1^nE)_{\sigma^\text{EAT}|_{\Omega_{\mathrm{EA}}}}$, allowing us to apply Proposition \ref{EAT2alpha}. Since  $\Pr_{\sigma^\mathrm{QKD}}[\Omega_{\mathrm{EC}}|\Omega_{\mathrm{EA}}]\Pr_{\sigma^\mathrm{QKD}}[\Omega_{\mathrm{EA}}]=\Pr_{\sigma^\mathrm{QKD}}[\Omega_{\mathrm{NA}}]$, the terms $\frac{a}{a-1}\log\frac{1}{ \Pr_{\sigma^\mathrm{QKD}}[\Omega_{\mathrm{EC}}|\Omega_{\mathrm{EA}}]}$ and $\frac{a}{a-1}\log\frac{1}{ \Pr_{\sigma^\mathrm{QKD}}[\Omega_{\mathrm{EA}}]}$ in eqs (\ref{eq23}) and (\ref{eq39}), can be merged, resulting in a term that  dependents only on $\Pr_{\sigma^\mathrm{QKD}}[\Omega_{\mathrm{NA}}]$. Formally, we obtain 

\begin{align}
    &H^\epsilon_{\min}(\hat{Z}_1^n|S_1^n\tilde{X}_1^nE)_{\sigma^\mathrm{QKD}|_{\Omega_{\mathrm{NA}}}} \nonumber\\
    &\geq n\left[ f(p_0) -\delta^\mathrm{tol}_\mathrm{PE}-\delta^\mathrm{tol}_\mathrm{tom}-\frac{(a-1)\ln2}{2}V^2\right.\nonumber\\
    &-(a-1)^2K_a-p^\mathrm{PE}\log{5}\nonumber\\
    &\left.-(1-p^\mathrm{key})\log(17(m+1))\right]\nonumber\\
    &-\sqrt{n}\left[\sqrt{\frac{1}{2}\ln\frac{32}{\epsilon^2 {\Pr_{\sigma^\mathrm{QKD}}[\Omega_{\mathrm{NA}}]}}}\log{5}\right.\nonumber\\
    &\left.+\sqrt{\frac{1}{2}\ln\frac{512}{\epsilon^2 {\Pr_{\sigma^\mathrm{QKD}}[\Omega_{\mathrm{NA}}]}}}\log{(17(m+1))}\right]\nonumber\\
    &-\frac{\Gamma(\epsilon/16)}{a-1}-\frac{a}{a-1}\log\frac{1}{\Pr_{\sigma^\mathrm{QKD}}[\Omega_{\mathrm{NA}}]}\nonumber\\
&-2\Gamma(\epsilon/4)-3\Gamma(\epsilon/16).%\label{eq:FiniteKeyRate}
\end{align}
Further, it holds
\begin{align}
&\Pr_{\sigma^\mathrm{QKD}}[\Omega_{\mathrm{NA}}]\nonumber\\
&=  \Pr_{\sigma^\mathrm{QKD}}[\Omega_{\mathrm{PE}}\cap\Omega_{\mathrm{EC}}]\nonumber \\
&\quad  -\Pr_{\sigma^\mathrm{QKD}}[\Omega_{\mathrm{PE}}\cap\Omega_{\mathrm{EC}}\cap \neg\Omega_{\mathrm{tom}}]\nonumber\\
&\geq\Pr_{\sigma^\mathrm{QKD}}[\Omega_{\mathrm{PE}}\cap\Omega_{\mathrm{EC}}]- \Pr_{\sigma^\mathrm{QKD}}[ \neg\Omega_{\mathrm{tom}}].
\end{align}
By eq. (\ref{eq:deltatom2}) and Lemma \ref{lem:deltatolbound}, we can bound $\Pr_{\sigma^\mathrm{QKD}}[\neg\Omega_\mathrm{tom}]\leq\epsilon^\mathrm{tom}$. Further, by eq. (\ref{76}), it holds that $\Pr_{\sigma^\mathrm{phys,QKD}}[\Omega_{\mathrm{PE}}\cap\Omega_{\mathrm{EC}}]=\Pr_{\sigma^\mathrm{QKD}}[\Omega_{\mathrm{PE}}\cap\Omega_{\mathrm{EC}}]$. Hence, by assumption, it holds
\begin{align}
   &\Pr_{\sigma^\mathrm{QKD}}[\Omega_{\mathrm{NA}}]
   \geq \epsilon^\mathrm{phys}_{\mathrm{NA}}-\epsilon^\mathrm{tom}.
\end{align}
Now we can apply Lemma \ref{lem:HypPhys} to show that for our choice of $\epsilon$, and $\epsilon^\mathrm{phys}=\epsilon+\sqrt{2\epsilon^\mathrm{tom}/\epsilon^\mathrm{phys}_{\mathrm{NA}}}$, it holds
\begin{align}
 &H^{\epsilon^\mathrm{phys}}_{\min}(\hat{Z}_1^n|S_1^n\tilde{X}_1^nE)_{\sigma^\mathrm{phys,QKD}|_{\Omega^\mathrm{phys}_{\mathrm{NA}}}}\nonumber\\
 &\geq  H^\epsilon_{\min}(\hat{Z}_1^n|S_1^n\tilde{X}_1^nE)_{\sigma^\mathrm{QKD}|_{\Omega_{\mathrm{NA}}}}
\end{align}
and apply Proposition \ref{renner}, finishing the proof.
\end{proof}

\subsection{Completeness}\label{sec:compl}

In this section we show that the physical QKD protocol is complete, i.e. we provide a lower bound on the probability $\Pr_\mathrm{hon}[\Omega^\mathrm{phys}_{\mathrm{NA}}]$ of an honest application not aborting.

\begin{theorem}\label{theo:complete}{\bf(Completeness)}
Let $\epsilon_\mathrm{PE}^c\in(0,1)$. Let $\Omega_\mathrm{PE}$ be as defined in eq. (\ref{eq:OmegaPE}), with 
\begin{equation}\label{eq:deltaPE}
    \delta^\mathrm{tol}_\mathrm{PE}= 2\sqrt{\log\(\frac{n}{\epsilon_\mathrm{PE}^c}\)\sum_{i=1}^{4m}\frac{\gamma_ic^2_i}{n}}+\frac{3D}{n}\log\frac{n}{\epsilon_\mathrm{PE}^c},
\end{equation}
where we have defined an ordering $(x,z,\perp)\to i$ by $(0,0,\perp)\to1,(0,1,\perp)\to2,...,(0,m-1,\perp)\to m,(1,0,\perp)\to m+1,(1,1,\perp)\to m+2,...,(3,m-1,\perp)\to4m$. We then set $\pi_i=p_0(x,z,\perp)$, $\hat{\pi}_i=\freq_{c_1^n}(x,z,\perp)$ $h_i=h_{x,z,\perp}$ for $i=1,...,4m$, as well as $\pi_{4m+1}:=1-\sum_{i=1}^{4m}\pi_i$, $\hat{\pi}_{4m+1}:=1-\sum_{i=1}^{4m}\hat{\pi}_i$ and $h_{4m+1}=0$. Let us further define
    \begin{align}
    &\gamma_i:=\frac{\pi_i\(1-\sum_{j=1}^i\pi_j\)}{1-\sum_{j=1}^{i-1}\pi_j},\label{eq:79a}\\
    &c_i:=h_i-\frac{\sum_{j=i+1}^{4m+1}h_j\pi_j}{1-\sum_{j=1}^{i}\pi_j},\label{eq:79b}
    \end{align}
    for $i=1,...,4m$, as well as %\begin{equation}\label{eq:79c}
    $D:=\max_{i,j\in\{1,...,4m+1\}}|h_i-h_j|$.
    %\end{equation}
    Let further $\epsilon_\mathrm{EC}^c\in(0,1)$ be a suitable completeness parameter for the error correction protocol used. Then the physical QKD protocol is $\epsilon_\mathrm{PE}^c+\epsilon_\mathrm{EC}^c$-complete, i.e. $\Pr_\mathrm{hon}[\Omega^\mathrm{phys}_\mathrm{NA}]\geq1-\epsilon_\mathrm{PE}^c -\epsilon_\mathrm{EC}^c$.
\end{theorem}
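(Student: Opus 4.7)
The plan is to decompose the non-abort event via a union bound and then to handle each failure mode separately, reusing the concentration argument that already appeared in the proof of Lemma~\ref{lem:deltatolbound}. Since $\Omega^\mathrm{phys}_\mathrm{NA}=\Omega_\mathrm{PE}\cap\Omega_\mathrm{EC}$, a single application of the union bound gives
\begin{equation}
\Pr_\mathrm{hon}[\neg\Omega^\mathrm{phys}_\mathrm{NA}]\leq \Pr_\mathrm{hon}[\neg\Omega_\mathrm{PE}]+\Pr_\mathrm{hon}[\neg\Omega_\mathrm{EC}],
\end{equation}
so it suffices to upper bound each summand by $\epsilon_\mathrm{PE}^c$ and $\epsilon_\mathrm{EC}^c$ respectively. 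The second bound is immediate from the standing assumption on the error correction subprotocol introduced just before eq.~(\ref{eq:OmegaNonAbort}).

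For the first bound I would argue as follows. In the honest implementation the channel is fixed and each round is independent, so the per-round statistical output $C_i=\tilde X_i\tilde Z_i X'_i$ is an i.i.d.\ multinoulli with parameters $p_0$ from eqs.~(\ref{eq:p0PE},\ref{eq:p0tom}). Group the $4m+1$ possible values of $c$ of the form $(x,z,\perp)$ together with the residual ``else'' symbol $\bot$ (absorbing both the key and tomography outcomes, which do not contribute to $f^\mathrm{PE}$ since the relevant $h_i$ are zero on them). Then $f^\mathrm{PE}(\freq_{c_1^n})-f^\mathrm{PE}(p_0)=(\hat\pi-\pi)^T h$ is precisely a linear functional of the empirical frequencies of an i.i.d.\ multinoulli, to which Proposition~5.2 of~\cite{agrawal2017posterior} applies with the quantities $\gamma_i,c_i,D$ defined in eqs.~(\ref{eq:79a}),~(\ref{eq:79b}) and $D=\max_{i,j}|h_i-h_j|$. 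That proposition gives, with probability at least $1-\epsilon_\mathrm{PE}^c$,
\begin{equation}
\left|f^\mathrm{PE}(\freq_{c_1^n})-f^\mathrm{PE}(p_0)\right|\leq 2\sqrt{\log\!\(\tfrac{n}{\epsilon_\mathrm{PE}^c}\)\sum_{i=1}^{4m}\tfrac{\gamma_ic_i^2}{n}}+\tfrac{3D}{n}\log\tfrac{2}{\epsilon_\mathrm{PE}^c},
\end{equation}
which by the choice of $\delta^\mathrm{tol}_\mathrm{PE}$ in eq.~(\ref{eq:deltaPE}) is exactly $\delta^\mathrm{tol}_\mathrm{PE}$. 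In particular, the one-sided event $f^\mathrm{PE}(\freq_{c_1^n})<f^\mathrm{PE}(p_0)-\delta^\mathrm{tol}_\mathrm{PE}$, i.e.\ $\neg\Omega_\mathrm{PE}$, has probability at most $\epsilon_\mathrm{PE}^c$. Combining the two bounds yields $\Pr_\mathrm{hon}[\Omega^\mathrm{phys}_\mathrm{NA}]\geq 1-\epsilon_\mathrm{PE}^c-\epsilon_\mathrm{EC}^c$, as claimed.

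There is no real obstacle: the argument is structurally identical to the tomography case of Lemma~\ref{lem:deltatolbound}, simply with $f^\mathrm{tom}$ replaced by $f^\mathrm{PE}$, the alphabet size expanded from $17$ to $4m+1$ (plus the absorbed $\bot$), and with one additional union-bound step to peel off the error-correction failure. The only minor point worth verifying is that in the honest setting the rounds really do produce i.i.d.\ samples from $p_0$, which holds because Alice's choice of $R_i$ is an independent local coin flip, her state preparation is assumed perfect, and the channel in the honest implementation is the fixed (memoryless) lossy-thermal channel whose induced single-round output law is by definition $p_0$.
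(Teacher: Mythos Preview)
Your proof is correct and follows essentially the same route as the paper: a union bound splitting $\neg\Omega^\mathrm{phys}_\mathrm{NA}$ into $\neg\Omega_\mathrm{PE}$ and $\neg\Omega_\mathrm{EC}$, the latter handled by the standing assumption on error correction, and the former by applying Proposition~5.2 of~\cite{agrawal2017posterior} to the i.i.d.\ multinoulli $p_0$ with the choice of $\delta^\mathrm{tol}_\mathrm{PE}$ in eq.~(\ref{eq:deltaPE}). The paper's proof is structured identically, including the absorption of non-PE outcomes into a single residual category with $h_{4m+1}=0$ and the explicit parallel to Lemma~\ref{lem:deltatolbound}.
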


\begin{proof}
We consider the following honest implementation: We apply the physical QKD protocol, as described in Section \ref{sec:Protocol}, where the noisy channel $\mc{N}_{{A'}_1^n\to B_1^n}$ is given by $n$ iid uses of a phase-invariant Gaussian channel with transmittance $\eta$ and excess noise $\xi$, however without an attack by Eve. By simulating this channel we obtain a distribution $p_0^\mathrm{sim}(x,z)$, which depends on $\eta$ and $\xi$ and is given by eq. (\ref{eq:doubleint}), and set $p_0(x,z,\perp)=p^\mathrm{PE}p_0^\mathrm{sim}(x,z)$ for $x\in\{0,...,3\}$, $z\in\{0,...,m-1\}$. We note that the protocol can abort after parameter estimation or error correction. By the union bound it holds
\begin{equation}
    1-\Pr_\mathrm{hon}[\Omega^\mathrm{phys}_{\mathrm{NA}}]\leq1- \Pr_\mathrm{hon}[\Omega_\mathrm{PE}]+1-\Pr_\mathrm{hon}[\Omega_\mathrm{EC}].
\end{equation}

We begin by considering abortion after parameter estimation. Let us assume an honest application gives us $p_0(x,z,\perp)$,
for $x\in\{0,...,3\}$, $z\in\{0,...,m-1\}$ according to eq. (\ref{eq:p0PE}).
Let us further assume that Alice and Bob observe some frequency distribution $\freq_{c^n_1}$. Recalling the definition of the event $\Omega_\mathrm{PE}$, we note that it holds %$ \left|f(\freq_{c_1^n})-f(p_0)\right|\leq \left|\tilde{f}(\freq_{c_1^n})-\tilde{f}(p_0)\right|$, and consequently,
\begin{align}
    &\Pr_\mathrm{hon}[\Omega_\mathrm{PE}] \nonumber
    \\
    &\geq\Pr_\mathrm{hon}\left[\left|f^\mathrm{PE}(\freq_{c_1^n})-f^\mathrm{PE}(p_0)\right|\leq\delta_\mathrm{PE}^\mathrm{tol}\right].
\end{align}
An honest implementation of the protocol corresponds to $n$ independent multinoulli trials with parameter $p_0$. In order to provide lower bounds we can again make use of the concentration result provided by Proposition 2 of~\cite{agrawal2017posterior}.

Let now $\epsilon_\mathrm{PE}^c\in(0,1)$. By Proposition 2 of \cite{agrawal2017posterior}, it then holds with probability $1-\epsilon_\mathrm{PE}^c$ that 
\begin{align}
    &\left|f^\mathrm{PE}(\freq_{c_1^n})-f^\mathrm{PE}(p_0)\right|=\left|(\hat{\pi}-\pi)^Th\right|\nonumber\\
    &\leq 2\sqrt{\log\(\frac{n}{\epsilon_\mathrm{PE}^c}\)\sum_{i=1}^{4m}\frac{\gamma_ic^2_i}{n}}+\frac{3D}{n}\log\frac{n}{\epsilon_\mathrm{PE}^c}.
\end{align}
Hence, if we choose the tolerance parameter $\delta^\mathrm{tol}_\mathrm{PE}$ as in eq. (\ref{eq:deltaPE}), we obtain the desired completeness bound
\begin{equation}
\Pr_\mathrm{hon}\left[\left|f^\mathrm{PE}(\freq_{c_1^n})-f^\mathrm{PE}(p_0)\right|\leq\delta_\mathrm{PE}^\mathrm{tol}\right]\geq1-\epsilon_\mathrm{PE}^c.
\end{equation}

Finally, we have to consider the error correction. Let $\epsilon_\mathrm{EC}^c\in(0,1)$, such that $1-\Pr_\mathrm{hon}[\Omega_\mathrm{EC}]\leq \epsilon_\mathrm{EC}^c$, i.e. the error correction is assumed to abort with probability at most $\epsilon_\mathrm{EC}^c$, finishing the proof.
\end{proof}

For a given empirical distribution $p_0$ and a suitable choice of a min-tradeoff function, Theorems \ref{FinteSizePhys} and \ref{theo:complete} combined show the security of the finite round physical QKD protocol. In order to obtain the best finite size key rate, for given $n\in\mathbb{N}$, as well as some choice for the parameters $\epsilon^\mathrm{phys}_{\mathrm{NA}},\epsilon^\mathrm{tom},\epsilon_\text{EC},\epsilon_\mathrm{EC}^c,\epsilon_\mathrm{PE}^c\in(0,1)$, such that $\epsilon^\mathrm{tom} \ll \frac{1}{2}\epsilon^\mathrm{phys}_{\mathrm{NA}}$, as well as $\epsilon\in\(0,1-\sqrt{2\epsilon^\mathrm{tom}/\epsilon^\mathrm{phys}_{\mathrm{NA}}}\)$ and $\epsilon^\mathrm{phys}=\epsilon+\sqrt{2\epsilon^\mathrm{tom}/\epsilon^\mathrm{phys}_{\mathrm{NA}}}$, we can set the tolerance parameters as in eqs. (\ref{eq:deltatom2},\ref{eq:deltaPE}), and maximise the key rate given by (\ref{eq:FiniteKeyRate}) over probabilities $0\leq  p^\text{key},p^\text{PE},p^\text{tom}\leq1$ such that $p^\text{key}+p^\text{PE}+p^\text{tom}=1$, %over tolerance parameters $\delta^\mathrm{tol}_\mathrm{tom}>0$ fulfilling eq. (\ref{eq:deltatom}), $\delta^\mathrm{tol}_\mathrm{PE}$ fulfilling eq. (\ref{eq:deltaPE}), 
and over $a\in(1,2)$. We note that in order to get a non-trivial result, we will have to choose $\epsilon_\mathrm{PE}^c$ and $\epsilon_\mathrm{EC}^c$ such that the success probability of the honest implementation meets the threshold $\epsilon^\mathrm{phys}_{\mathrm{NA}}$ used in Theorem \ref{FinteSizePhys}, i.e. we need $1-\epsilon_\mathrm{PE}^c-\epsilon_\mathrm{EC}^c>\epsilon^\mathrm{phys}_{\mathrm{NA}}$.

\subsection{The Min-Tradeoff Function}\label{sec:MinTrade}

The main task now is to find a min-tradeoff function $f$ that provides a non-trivial bound for our protocol. As we will choose the number of key rounds to be significantly larger than the number of test rounds (i.e. rounds used for parameter estimation or tomography), it will be convenient to use the infrequent sampling framework introduced in \cite{dupuis2019entropy}, in which the statistical analysis only includes outputs in test rounds. To that purpose, we divide $\mc{M}^\text{EAT}_i$ into a key part, incorporating its action in key rounds; and a test part, incorporating its action in parameter estimation and tomography rounds, $\mc{M}^\text{EAT,key}_i:Q_{i-1}\to Q_{i} O_iS_i$ and $\mc{M}^\text{EAT,test}_i:Q_{i-1}\to Q_{i} O_iS_iC_i$, such that
\begin{align}
\mc{M}^\text{EAT}_{i}(\cdot)&=p^\text{key}\mc{M}^\text{EAT,key}_{i}(\cdot)\otimes\pro{\perp}_{C_i}\nonumber\\
&\quad +(1-p^\text{key})\mc{M}^\text{EAT,test}_{i}(\cdot).
\end{align}
%For the alphabet $\tilde{\mc{C}}$ of possible values, which $c_i$ can take in parameter and tomography rounds, we then define $\mc{P}_{\tilde{\mc{C}}}$ as the set of probability distributions over $\tilde{\mc{C}}$.
Let us now define a \emph{crossover min-tradeoff function} \cite{dupuis2019entropy} as a function  $g:\mc{P}_{\tilde{\mc{C}}}\to\mathbb{R}$ such that for all $i=1,...,n$ and $\tilde{p}\in\mc{P}_{\tilde{\mc{C}}}$ it holds

\begin{equation}\label{eq:MinTrade}
g(\tilde{p})\leq \inf_{\ket{\rho}\in\tilde\Sigma_i(\tilde p)}H(O_i|S_i\tilde{E})_{\rho^{\text{EAT},i}},
\end{equation}
where $\tilde{E}$ can be chosen isomorphic to $Q_{i-1}$, and we have defined
\begin{align}
\tilde\Sigma_i(\tilde p)=&\left\{\ket{\rho}_{Q_{i-1}\tilde{E}}\in\mc{H}_{Q_{i-1}\tilde{E}}:  \right. \nonumber \\
&\quad \left.\bra{c}\rho^{\text{EAT,test},i}_{C_i}\ket{c}\equiv \tilde p(c) \right\},
\end{align}
for states
\begin{align}
%&\rho^{\text{EAT},i}_{O_iS_iC_iQ_i\tilde{E}}=\id_{\tilde{E}}\otimes\mc{M}^\text{EAT}_i(\rho_{Q_{i-1}\tilde{E}})=\id_{Q_i\tilde{E}}\otimes\mc{M}^\text{EAT}_{A_iB_i\to O_iS_iC_i}\(\rho_{A_iB_iQ_i\tilde{E}}\),\\
&\rho^{\text{EAT,test},i}_{O_iS_iC_iQ_i\tilde{E}}=\id_{\tilde{E}}\otimes\mc{M}^\text{EAT,test}_i(\rho_{Q_{i-1}\tilde{E}})\nonumber\\
&=\id_{Q_i\tilde{E}}\otimes\mc{M}^\text{EAT,test}_{A_iB_i\to O_iS_iC_i}\(\rho_{A_iB_iQ_i\tilde{E}}\).
\end{align}
Further, it holds for all $i=1,...,n$,
\begin{equation}\label{eq:35}
 \inf_{\ket{\rho}\in\tilde\Sigma_i(\tilde p)}H(O_i|S_i\tilde{E})_{\rho^{\text{EAT},i}}\geq \inf_{\substack{\mc{H}_{\hat{E}}\simeq\mc{H}_{Q_1}\\\ket{\rho}\in\tilde\Sigma_{\hat{E}}(\tilde p)}}H(O|S\hat{E})_{\rho^\text{EAT}},
\end{equation}
where we have defined the states $\rho^\text{EAT}_{OSC\hat{E}}=\id_{\hat{E}}\otimes\mc{M}^\text{EAT}_{AB\to OSC}\(\rho_{AB\hat{E}}\)$ and $\rho^\text{EAT,test}_{OSC\hat{E}}=\id_{\hat{E}}\otimes\mc{M}^\text{EAT,test}_{AB\to OSC}\(\rho_{AB\hat{E}}\)$, as well as the set $\tilde\Sigma_{\hat{E}}(\tilde p)=\left\{\ket{\rho}_{AB\hat{E}}\in \mc{H}_{AB\hat{E}}:\bra{c}\rho^\text{EAT,test}_{C}\ket{c}\equiv \tilde p(c) \right\}$. We can therefore relax the problem to finding a function $g:\mc{P}_{\tilde{\mc{C}}}\to\mathbb{R}$ such that 
\begin{equation}
g(\tilde p)\leq \inf_{\substack{\mc{H}_{\hat{E}}\simeq\mc{H}_{Q_1}\\\ket{\rho}\in\tilde\Sigma_{\hat{E}}( \tilde p)}}H(O|S\hat{E})_{\rho^\text{EAT}}.
\end{equation}
According to Lemma V.5 of \cite{dupuis2019entropy}, we translate our crossover min-tradeoff function $g$ into a min-tradeoff function $f$ via the definition
\begin{align}
&f(\delta_c)=\max(g) +\frac{g(\delta_c)-\max(g)}{1-p^\text{key}}\;\forall c\in\tilde{\mc{C}},\label{def:f1}\\
&f(\delta_{(\perp,\perp,\perp)})=\max(g),\label{def:f2}
\end{align}
where $\delta_c$ denotes the distribution that equals $1$ for $c$ and $0$ everywhere else. Further, $\max(g)=\max_{\tilde p\in\mc{P}_{\tilde{\mc{C}}}}g(\tilde p)$ and $\min(g)=\min_{\tilde p\in \mc{P}_{\tilde{\mc{C}}}}g(\tilde p)$. If $p$ is of the form $p(c)=(1-p^\text{key})\tilde p(c)$ for $c\in\tilde{\mathcal{C}}$ and $p(\perp,\perp,\perp)=p^\text{key}$, it holds $f((1-p^\text{key})\tilde p)=g(\tilde p)$ for all $\tilde p\in\mc{P}_{\tilde{\mc{C}}}$. Further it holds
\begin{align}
&\max(f)=\max(g),\label{e46}\\
&\min_\Sigma(f)\geq\min(g),\\
0\leq&\mathrm{Var}(f)\leq\frac{1}{1-p^\text{key}}\(\max(g)-\min(g)\)^2\label{e48}.
\end{align}
Hence we can upper bound the expressions in eqs. (\ref{eq:V},\ref{eq:Ka}) by

\begin{align}
V\leq\tilde V&=\sqrt{\frac{1}{1-p^\text{key}}\(\max(g)-\min(g)\)^2+2}\nonumber\\
&\quad +\log(2d_O^2+1),\label{eq:96}\\
K_a\leq\tilde K_a&=\frac{2^{(a-1)(2\log d_O+\max(g)-\min(g))}}{6(2-a)^3\ln2}\nonumber\\
&\quad\times\ln^3\(2^{2\log d_O+\max(g)-\min(g)}+e^2\).\label{eq:97}
\end{align}

In what follows, we provide a crossover minimum tradeoff function for our choice of EAT channels $\{\mc{M}^\text{EAT}_i\}_{i=1}^n$, lower bounding the r.h.s. of (\ref{eq:MinTrade}). 
We begin by noting that, by the chain rule for the von Neumann entropy \cite{wilde2013quantum}, it holds
\begin{align}
&\inf_{\substack{\mc{H}_{\hat{E}}\simeq\mc{H}_{Q_1}\\\ket{\rho}\in\Sigma_{\hat{E}}(\tilde p)}}H(O|S\hat{E})_{\rho^\text{EAT}}\nonumber\\
&\geq\inf_{\substack{\mc{H}_{\hat{E}}\simeq\mc{H}_{Q_1}\\\ket{\rho}\in\Sigma_{\hat{E}}(\tilde p)}}\( H(\hat{Z}|S\hat{E})_{\rho^\text{EAT}}+H(\tilde{Z}\tilde{X}X'|\hat{Z}S\hat{E})_{\rho^\text{EAT}}\)\label{eq:36}\\
&\geq \inf_{\substack{\mc{H}_{\hat{E}}\simeq\mc{H}_{Q_1}\\\ket{\rho}\in\Sigma_{\hat{E}}(\tilde p)}} H(\hat{Z}|S\hat{E})_{\rho^\text{EAT}},\label{eq:37}
\end{align}
where we have used that, as $\tilde{Z}\tilde{X}X'$ is classical, there cannot be any entanglement across the $\tilde{Z}\tilde{X}X':\hat{Z}S\hat{E}$ partition, hence the second term in (\ref{eq:36}) has to be non-negative. Let us now define $g:\mc{P}_{\tilde{\mc{C}}}\to\mathbb{R}$,
\begin{equation}\label{eq:MinTradeOff1}
g(\tilde p):=\inf_{\substack{\mc{H}_{\hat{E}}\simeq\mc{H}_{Q_1}\\\ket{\rho}\in\Sigma_{\hat{E}}(\tilde p)}} H(\hat{Z}|S\hat{E})_{\rho^\text{EAT}},
\end{equation}
which can serve as a crossover min-tradeoff function for EAT channels $\{\mc{M}^\text{EAT}_i\}$. In order to obtain an efficiently numerically computable crossover min-tradeoff function, we now make use of the framework presented in \cite{winick2018reliable} to remove the dependency on Eve's subsystem.

\subsubsection{Removing the dependence on the $\hat{E}$ subsystem}\label{subsub:RemoveDependence}

The idea is to consider a coherent version of a round of the protocol leading to Bob's raw key $\hat{Z}$. Namely, Alice and Bob's measurements are performed in a coherent fashion, i.e. by means of isometries acting on the system to be measured and adding a quantum register containing the quantum information which, once dephased, will provide the measurement result, but not yet dephasing it. Alice and Bob then publicly announce partial information about their measurement outcomes, while keeping part of the information stored coherently.  From that information they decide whether they use the round for key generation, parameter estimation or tomography of Alice's part. If they use the round for key generation, in the case of reverse reconciliation, Bob applies a key map to his coherently stored measurement outcomes, which provides a coherent key register. The key can then be obtained by means of a so-called pinching operation, i.e. a measurement that dephases the key register. 

As all steps of the protocol before the pinching are performed coherently, we can express the outcome as a pure state, allowing us to apply Theorem 1 in \cite{coles2012unification}, which removes the dependence on the $\hat{E}$ subsystem. In order to formulate our result, we need to introduce the CP map $\mc{G}:AB\to AB\hat{Z}$ that describes the coherent version of the protocol. This map is given by a single Kraus operator

\begin{equation}\label{def:G}
G=\mathbb{1}_A\otimes\sum_{z=0}^3\sqrt{R^z_B}\otimes\ket{z}_{\hat{Z}},
\end{equation}
where we have defined the region operators
\begin{align}
&R^z_B=\frac{1}{\pi}\int_{0}^{\infty}\int_{\frac{\pi}{4}(2z-1)}^{\frac{\pi}{4}(2z+1)}\gamma\pro{\gamma e^{i\theta}}d\theta d\gamma,\label{eq:R0}%\\
%&R^1=\frac{1}{\pi}\int_{\Delta_a}^{\infty}\int_{\frac{\pi}{4}+\Delta_p}^{\frac{3\pi}{4}-\Delta_p}\gamma\pro{\gamma e^{i\theta}}d\theta d\gamma,\\
%&R^2=\frac{1}{\pi}\int_{\Delta_a}^{\infty}\int_{\frac{3\pi}{4}+\Delta_p}^{\frac{5\pi}{4}-\Delta_p}\gamma\pro{\gamma e^{i\theta}}d\theta d\gamma,\\
%&R^3=\frac{1}{\pi}\int_{\Delta_a}^{\infty}\int_{\frac{5\pi}{4}+\Delta_p}^{\frac{7\pi}{4}-\Delta_p}\gamma\pro{\gamma e^{i\theta}}d\theta d\gamma.\label{eq:R3}
\end{align}
for $z\in\{0,1,2,3\}$. 

Furthermore, we define the pinching operation $\mc{Z}:\hat{Z}\to\hat{Z}$, defined by Kraus operators
\begin{align}\label{def:Z}
&Z_j=\pro{j}_{\hat{Z}} \otimes \mathbb{1},
%&Z_1=\pro{1}_{\hat{Z}}\\
%&Z_\perp=\pro{\perp}_{\hat{Z}}.
\end{align}
for $j\in\{0,1,2,3\}$, and the identity is extended to all registers other than $\hat{Z}$.
It then holds

\begin{lemma}\label{Lemma:DimReduct}
%Let us assume that $\Delta_a=\Delta_p=0$. 
The crossover min-tradeoff function defined by eq. (\ref{eq:MinTradeOff1}) can be reformulated as follows

\begin{align}
g(\tilde p)=&\inf_{\substack{\mc{H}_{\hat{E}}\simeq\mc{H}_{Q_1}\\\ket{\rho}\in\Sigma_{\hat{E}}(\tilde p)}} H(\hat{Z}|S\hat{E})_{\rho^\mathrm{EAT}}\nonumber\\
=& p^\mathrm{key}\inf_{\rho\in\Sigma(\tilde p)}D(\mc{G}(\rho_{AB})||\mc{Z}(\mc{G}(\rho_{AB})),\label{eq:MinTradeOff2}
\end{align}
where we have defined the set $\Sigma(\tilde p)=\left\{\rho_{AB}\in \mc{D}(\mc{H}_{AB}): \bra{c}\mc{M}^\mathrm{EAT,test}(\rho)_{C}\ket{c}\equiv \tilde p(c) \right\}$, which is independent of the reference system.

\end{lemma}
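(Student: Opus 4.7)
The plan is to reduce $H(\hat{Z}|S\hat{E})_{\rho^\mathrm{EAT}}$ to a single-shot quantity on $\rho_{AB}$ by (i) conditioning on the classical side-information $S=R$ and killing the branches in which $\hat{Z}$ is set to $\perp$, and then (ii) applying the coherent-measurement identity of \cite{coles2012unification} used in \cite{winick2018reliable} to rewrite the surviving key-branch entropy as a relative entropy depending only on $\rho_{AB}$.

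First I would write
\begin{equation*}
H(\hat{Z}|S\hat{E})_{\rho^\mathrm{EAT}}=\sum_{s\in\{0,1,2\}}\Pr[S=s]\,H(\hat{Z}|\hat{E})_{\rho^\mathrm{EAT}|_{S=s}},
\end{equation*}
using that $S$ is a classical register produced by Alice's round-independent RNG and therefore carries no information about $AB\hat{E}$ beyond its own value. By construction of $\mc{M}^\mathrm{EAT}$, the register $\hat{Z}$ is deterministically $\perp$ whenever $S\in\{1,2\}$, so those conditional entropies vanish and only the $S=0$ branch contributes, producing the overall prefactor $p^\mathrm{key}$.

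In the $S=0$ branch, Alice performs a projective measurement on $A$ and discards the outcome $\hat{X}$, while Bob applies the discretized heterodyne measurement with POVM $\{R^z_B\}_{z=0}^3$. Since measurement-and-discard on $A$ coincides with the partial trace $\Tr_A$, Alice's step does not alter the marginal on $\hat{Z}\hat{E}$ and may be dropped when computing $H(\hat{Z}|\hat{E})$. Replacing Bob's measurement by its Stinespring dilation $V_B=\sum_z\sqrt{R^z_B}\otimes\ket{z}_{\hat{Z}}$ then yields a pure global state $\ket{\rho'}_{AB\hat{Z}\hat{E}}=(\mathbb{1}_{A\hat{E}}\otimes V_B)\ket{\rho}_{AB\hat{E}}$ whose marginal on $AB\hat{Z}$ equals $\mc{G}(\rho_{AB})$ with $\mc{G}$ as in eq.~(\ref{def:G}), and the actual classical outcome $\hat{Z}$ is recovered by the pinching $\mc{Z}$ of eq.~(\ref{def:Z}). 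Theorem~1 of \cite{coles2012unification} then gives
\begin{equation*}
H(\hat{Z}|\hat{E})_{\mc{Z}(\ket{\rho'}\bra{\rho'})}=D\!\left(\mc{G}(\rho_{AB})\,\|\,\mc{Z}(\mc{G}(\rho_{AB}))\right).
\end{equation*}

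Finally I would observe that the right-hand side depends only on $\rho_{AB}$, not on the purifying $\hat{E}$, and the same is true of the constraint $\bra{c}\rho^\mathrm{EAT,test}_C\ket{c}\equiv\tilde p(c)$ defining the feasible set, since $\mc{M}^\mathrm{EAT,test}$ acts only on $AB$. The infimum over purified pure states $\ket{\rho}_{AB\hat{E}}$ therefore collapses to one over $\rho_{AB}\in\Sigma(\tilde p)$, since any purification of a given $\rho_{AB}\in\Sigma(\tilde p)$ attains the same objective value, yielding eq.~(\ref{eq:MinTradeOff2}). The only delicate point is justifying the removal of Alice's discarded measurement in the key branch — a direct use of the fact that projective-measurement-with-discard equals the partial trace — after which the chain of identities is a standard application of the Coles-Metwally-Piani identity exactly as used by \cite{winick2018reliable,lin2019asymptotic}.
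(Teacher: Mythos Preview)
Your proposal is correct and reaches the same endpoint via the same key tool (Theorem~1 of \cite{coles2012unification}), but the order of operations differs from the paper's proof.

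The paper first builds a fully coherent version of the \emph{entire} round---including Alice's RNG register $R$, her measurement register $X$, Bob's continuous heterodyne register $Y$, and the classical-communication copies $R',R''$---by composing isometries $W^A,W^B,V^{\mathrm{c}1},V^K$ into one large pure state $\ket{\omega}$. It then applies Coles' identity to the whole thing, obtaining $H(\hat{Z}|R''\hat{E})=D(\omega_{ABXY\hat{Z}RR'}\|\mc{Z}'(\omega_{ABXY\hat{Z}RR'}))$, and only \emph{afterwards} exploits the cq-structure in $RR'$ (induced by tracing out $R''$) to split the relative entropy, kill the $R\neq 0$ branches, and strip away the superfluous registers $X,Y$ as isometries that commute with the pinching. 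You instead split at the level of the conditional entropy first, using $H(\hat{Z}|S\hat{E})=\sum_s p_s H(\hat{Z}|\hat{E})_{|S=s}$ and the deterministic $\hat{Z}=\perp$ in non-key branches, and then apply Coles only to the minimal key-branch purification $(\mathbb{1}_{A\hat{E}}\otimes V_B)\ket{\rho}$.

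Your route is shorter and avoids ever introducing the auxiliary registers $X,Y,R,R',R''$; the observation that Alice's projective-measure-and-discard is just $\Tr_A$ replaces the paper's more elaborate argument that $W'^A$ is an isometry invariant under the pinching. The paper's route, by contrast, is a more literal instantiation of the \cite{winick2018reliable} ``coherent protocol'' recipe and makes explicit that $S$ is identified with the wiretapped copy $R''$. Both are valid; yours is more economical for this particular protocol, while the paper's is closer to a template one could reuse for protocols with more elaborate public announcements.
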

 
The proof of Lemma \ref{Lemma:DimReduct}  goes along the line of the discussion in \cite{lin2019asymptotic} and can be found in Appendix \ref{App:A}. Let us note that, by definition of the protocol, it holds $\dim(A)=4$, but the dimension of Bob's states can be infinite. We address this problem by invoking again the bounded-energy assumption, implying that the solution to \eqref{eq:MinTradeOff2} can be arbitrarily well approximated by taking a large enough finite dimension. We then take $d_B$ to be arbitrary but finite. Under said assumption, the set $\Sigma(\tilde p)$ is compact and, as the objective is continuous \cite{winick2018reliable}, a minimum is attained in eq. (\ref{eq:MinTradeOff2}). Following \cite{liu2019device} we can show
\begin{lemma}\label{lemma:conv}
For a given $0< p^\text{key}\leq 1$,
\begin{equation}\label{115}
g(\tilde p)=p^\mathrm{key}\min_{\rho\in\Sigma(\tilde p)}D(\mc{G}(\rho_{AB})||\mc{Z}(\mc{G}(\rho_{AB})))
\end{equation}
is a convex function on $\mc{P}_{\tilde{\mc{C}}}$. %$\mc{P}_{\Sigma}:=\{p\in\mc{P}_\mc{C}:\exists \mc{H}_{\hat{E}}\text{ s.t. }\Sigma_{\hat{E}}(\tilde p)\neq \oldemptyset \}$. \OF{Is this really needed, why not $\mc{P}_C$?}
\end{lemma}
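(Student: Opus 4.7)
The plan is to establish convexity directly from the definition, via two standard ingredients: (i) convexity of the feasible set map $\tilde p \mapsto \Sigma(\tilde p)$ and (ii) convexity in $\rho$ of the objective $\rho \mapsto D(\mathcal{G}(\rho_{AB})\|\mathcal{Z}(\mathcal{G}(\rho_{AB})))$. The photon number cutoff makes $\mathcal{H}_B$ finite-dimensional, so $\Sigma(\tilde p)$ is compact and the infimum in \eqref{115} is attained, which is what allows the standard minimizer-based convexity argument to go through.

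First, I would observe that the constraints defining $\Sigma(\tilde p)$ are affine in $\rho_{AB}$. Indeed, $\tilde{\mathcal{M}}(\rho)_C := \mathcal{M}^{\text{EAT,test}}(\rho)_C$ is a linear (CP) map, so the condition $\langle c|\tilde{\mathcal{M}}(\rho)_C|c\rangle = \tilde p(c)$ can be written as $\mathrm{Tr}[M_c \rho_{AB}] = \tilde p(c)$ for some positive semidefinite operators $M_c$. Consequently, for any $\lambda \in [0,1]$, $\tilde p_1, \tilde p_2 \in \mathcal{P}_{\tilde{\mathcal{C}}}$, and any $\rho_1 \in \Sigma(\tilde p_1)$, $\rho_2 \in \Sigma(\tilde p_2)$, the convex combination $\rho_\lambda := \lambda\rho_1 + (1-\lambda)\rho_2$ is a valid density operator that lies in $\Sigma(\lambda \tilde p_1 + (1-\lambda)\tilde p_2)$.

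Second, I would argue that $F(\rho_{AB}) := D(\mathcal{G}(\rho_{AB})\|\mathcal{Z}(\mathcal{G}(\rho_{AB})))$ is convex in $\rho_{AB}$. The map $\rho \mapsto (\mathcal{G}(\rho), \mathcal{Z}(\mathcal{G}(\rho)))$ is linear, and the Umegaki relative entropy $(\sigma,\tau)\mapsto D(\sigma\|\tau)$ is jointly convex. Composing a linear map with a jointly convex function yields a convex function, so $F$ is convex on $\mathcal{D}(\mathcal{H}_{AB})$.

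Combining the two: given $\tilde p_1, \tilde p_2$, let $\rho_1, \rho_2$ be the minimizers in \eqref{115} (which exist by compactness under the cutoff). Since $\rho_\lambda \in \Sigma(\lambda \tilde p_1 + (1-\lambda)\tilde p_2)$,
\begin{align*}
g(\lambda \tilde p_1 + (1-\lambda)\tilde p_2)
&\leq p^{\text{key}} F(\rho_\lambda) \\
&\leq p^{\text{key}}\bigl[\lambda F(\rho_1) + (1-\lambda) F(\rho_2)\bigr] \\
&= \lambda g(\tilde p_1) + (1-\lambda) g(\tilde p_2),
\end{align*}
where the first inequality uses feasibility of $\rho_\lambda$ and the second uses convexity of $F$. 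With the convention that $g(\tilde p) = +\infty$ whenever $\Sigma(\tilde p) = \emptyset$, convexity is preserved trivially in the infeasible cases. The only subtle point is the joint convexity of $D(\cdot\|\cdot)$ and the linearity of the maps $\mathcal{G}$ and $\mathcal{Z}\circ\mathcal{G}$, both of which are standard; everything else is a direct convex-analysis argument that carries over from the analogous statement in~\cite{liu2019device}.
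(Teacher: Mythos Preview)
Your proof is correct and follows essentially the same approach as the paper: both take minimizers $\rho_1,\rho_2$, observe that their convex combination lies in $\Sigma(\lambda\tilde p_1+(1-\lambda)\tilde p_2)$ by linearity of the constraint map, and then use convexity of $\rho\mapsto D(\mc{G}(\rho)\|\mc{Z}(\mc{G}(\rho)))$ to conclude. The only cosmetic difference is that the paper establishes this last convexity via a flag-state plus data-processing argument, whereas you invoke joint convexity of the relative entropy directly; these are equivalent, since the flag-state argument is precisely the standard proof of joint convexity.
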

The proof can be found in Appendix \ref{App:conv}.

\subsubsection{Finding an affine crossover min-tradeoff function}
We note that, for any given distribution $\tilde p\in\mathcal{P}_{\tilde{\mc{C}}}$, eq. (\ref{115}) is a convex optimisation problem with semidefinite constraints. As the objective is not affine, however, it is not a semidefinite program (SDP). Also, the dependence of $g$ on the distribution $\tilde p$ is hidden in the constraints. We will now follow the steps taken in \cite{winick2018reliable} and perform a first order Taylor expansion (around some state $\tilde\rho_{AB}\in\mc{D}(\mc{H}_{AB})$) providing a lower bound on the optimisation problem in eq. (\ref{115}). The resulting expression contains an SDP with a linear objective.

We then consider the dual of the SDP, which for any dual feasible point provides an affine lower bound on the original SDP. By the nature of duality, which roughly speaking incorporates the constraints into the objective, the objective of the dual problem will explicitly depend on $\tilde p$ in an affine way, as will the entire expression lower bounding  the optimising problem in eq. (\ref{115}). Thus, for any given state $\tilde\rho_{AB}\in\mc{D}(\mc{H}_{AB})$, as well any dual feasible point, we can obtain an affine crossover min-tradeoff function.
%We can then define a new, affine min-tradeoff function for any state $\rho$ around which we perform the Taylor expansion, and any dual feasible point. In order to find an optimal min-tradeoff function for a given observed distribution $\tilde p_0$ we perform a numerical optimisation of the primal and dual problems with $\tilde p_0$ and insert the thus obtained primal and dual feasible points into the expression, providing an affine min-tradeoff function that is optimal around $\tilde p_0$. 

To begin with, let us explicitly consider the optimisation problem in eq. (\ref{115}). Let $m = 4 \Delta/\delta + 4$ represent the total number of modules in Bob's discretisation. For a probability distribution $\tilde  p\in\mc{P}_{\tilde{\mc{C}}}$, the optimisation takes the form 
\begin{align}
\min_{\rho_{AB}}\ &D(\mc{G}(\rho_{AB})||\mc{Z}(\mc{G}(\rho_{AB})))\label{eq:SDP0}\\
\text{s.t.} \; &\rho_{AB}\geq0,\;\Tr[\rho_{AB}]=1,\nonumber\\
%&p^\text{key}= p(\perp,\perp,\perp),\nonumber\\
&\forall x\in\{0,1,2,3\},\;\forall z\in\{0,...,m-1\}:\nonumber\\
&\;\tilde{p}^\text{PE}\Tr\left[\(\pro{x}_A\otimes \tilde{R}_B^z\) \rho_{AB} \right]=\tilde p(x,z,\perp),\nonumber\\
&\forall x'\in\{0,...,15\}: \nonumber\\
&\;\tilde{p}^\text{tom}\Tr\left[\Gamma_{x'} \rho_{A} \right]=\tilde p(\perp,\perp,x'),\nonumber
\end{align}
where we have defined $\tilde{p}^\text{PE}=\frac{p^\text{PE}}{1-p^\text{key}}$ and $\tilde{p}^\text{tom}=1-\tilde{p}^\text{PE}$. Further, $\tilde{R}^z$ are region operators defined in analogy to \eqref{eq:R0}, but with the discretisation used for parameter estimation given by eq. (\ref{eq:discPE}). Regarding the constraints, the region operators related to parameter estimation add up to the identity matrix, so that there is no need to impose the constraint $\Tr[\rho_{AB}] = 1$. % The same applies to the tomography, where Alice's measurement on the computational basis $\{\ket{x}\bra{x}  \}_{x=0}^3$ can also be taken as part of the tomographically complete measurement, so that only 12 out of 16 tomography constraints are actually necessary for the SDP.
We now closely follow \cite{winick2018reliable} to lower bound  eq. (\ref{eq:SDP0}). For brevity, let us define $r(\rho):=D(\mc{G}(\rho)||\mc{Z}(\mc{G}(\rho)))$. By the properties of the pinching quantum channel, this expression can be rewritten without loss of generality in terms of von Neumann entropies
\begin{equation}
    r(\rho) = H(\mc{Z}(\mc{G}(\rho)) -  H(\mc{G}(\rho)).
\end{equation}
Using the methodology of \cite{hu2021robust}, it is possible to apply here a facial reduction to reformulate the maps $\mathcal{Z}$ and $\mathcal{G}$ into maps which are strictly positive definite; this does not only assure that the new objective function is differentiable for any $\rho>0$, but also reduces the dimension of both maps which simplifies the subsequent numerical analysis. This process can be seen as a unitary transformation, such that \footnote{For the numerical implementation, this decomposition can be obtained in MATLAB by using the function \texttt{rank}.}
\begin{equation}
    \mathcal{G}(\rho) = 
    \begin{bmatrix}
    U & V
    \end{bmatrix}
    \begin{bmatrix}
    \tilde{\mathcal{G}}(\rho) & 0 \\
    0 & 0
    \end{bmatrix}
    \begin{bmatrix}
    U^\dagger \\
    V^\dagger
    \end{bmatrix},
\end{equation}
where $\tilde{\mc{G}}(\rho) > 0$ for $\rho>0$. A similar procedure follows for $\mathcal{Z}(\mathcal{G}(\rho))$, resulting in a new map $\tilde{\mathcal{Z}}(\rho) > 0$. Hence, by taking advantage of the fact that the von Neumann entropy is invariant under unitary transformations, we arrive at a simpler objective function
\begin{equation}
r(\rho) = H(\tilde{\mathcal{Z}}(\rho)) - H(\tilde{\mc{G}}(\rho)). \label{newobj}
\end{equation}
With the maps $\tilde{\mc{Z}}$, $\tilde{\mc{G}}$  %are positive definite for all $\rho\in\mathcal{D}(\mathcal{H}_{AB})$, 
the matrix gradient $\nabla r(\rho)$ is now given by 
\begin{align}
\nabla r(\rho)^T &= [\tilde{\mathcal{G}}^\dagger (\log\tilde{\mathcal{G}}(\rho)) + \tilde{\mathcal{G}}^\dagger(\mathbb{1})] \nonumber \\
&\quad - [\tilde{\mathcal{Z}}^\dagger(\log\tilde{\mathcal{Z}}(\rho)) + \tilde{\mathcal{Z}}^\dagger(\mathbb{1})] .\label{eq:MatrixGrad}
\end{align}
Let now $\tilde  p\in\mc{P}_{\tilde{\mc{C}}}$ and $\rho_{\tilde p}^*\in\Sigma(\tilde p)$ be the minimiser of (\ref{eq:SDP0}). For any $\tilde\rho\in\mc{D}(\mc{H}_{AB})$, it then holds
\begin{align}
\frac{g(\tilde p)}{p^\text{key}}&=r(\rho_{\tilde p}^*) \nonumber
\\ &\geq r(\tilde\rho)+\Tr\left[(\rho_{\tilde p}^*-\tilde\rho)^T\nabla r(\tilde\rho)\right]\\
&\geq r(\tilde\rho)-\Tr\left[\tilde\rho^T\nabla r(\tilde\rho)\right]\nonumber \\
&\quad+\min_{\sigma\in\Sigma(\tilde p)}\Tr\left[\sigma^T\nabla r(\tilde\rho)\right], \label{eq:85}
\end{align}
where the first inequality is due to the fact that $r$ is a convex, differentiable function over the convex set $\mc{D}(\mc{H}_{AB})$, hence it can be lower bounded by its first order Taylor expansion at $\tilde\rho$ (see e.g. \cite{boyd2004convex} p.69), and the second inequality is due to the fact that $\rho_{\tilde p}^*\in\Sigma(\tilde p)$. For any $\tilde\rho\in\mc{D}(\mc{H}_{AB})$ and $\tilde p$, the optimisation problem in eq. (\ref{eq:85}) is an SDP in standard form, explicitly given by

\begin{align}
\min_{\sigma_{AB}}\ &\Tr\left[\sigma^T\nabla r(\tilde\rho)\right]\label{eq:SDP2}\\
\text{s.t.} \; &\sigma_{AB}\geq0,\;\nonumber\\
&\forall x\in\{0,1,2,3\},\;\forall z\in\{0,...,m-1\}:\nonumber\\
&\;\tilde{p}^\text{PE}\Tr\left[\(\pro{x}_A\otimes \tilde{R}_B^z\) \sigma_{AB} \right]=\tilde p(x,z,\perp),\nonumber\\
&\forall x'\in\{0,...,15\}: \nonumber\\
&\;\tilde{p}^\text{tom}\Tr\left[\Gamma_{x'} \sigma_{A} \right]=\tilde p(\perp,\perp,x').\nonumber
\end{align}
The dual problem of the SDP (\ref{eq:SDP2}) takes the form
\begin{equation}\label{eq:maximization}
\max_{\vec{\nu}\in\Sigma^*_{\tilde\rho}}\ell_{\tilde p}(\vec{\nu}),
\end{equation}
where the dual objective is given by 

\begin{align}
\ell_{\tilde p}(\vec{\nu})=\sum_{x=0}^3\sum_{z=0}^{m-1}\nu_{xz}\frac{\tilde p(x,z,\perp)}{\tilde{p}^\text{PE}} +  {\sum_{x'=0}^{15}\nu'_{x'}\frac{\tilde p(\perp,\perp,x')}{\tilde{p}^\text{tom}}},
\end{align}
which is affine with respect to $\tilde p$. Further, the set $\Sigma^*_{\tilde\rho}$ is defined as 
\begin{align}
\Sigma^*_{\tilde\rho}=&\left\{\vec{\nu}\in{\mathbb{R}^{4m +16}}:
\right.\nonumber\\
&\quad \nabla r(\rho)-\sum_{x=0}^3\sum_{z=0}^{m-1}\nu_{xz}\left(\pro{x}_A\otimes \tilde{R}_B^z\right)^T \nonumber \\
&\left. \quad -\sum_{x'=0}^{15}\nu'_{x'}\Gamma_{x'}^T\geq0\right\}
\end{align}
which is independent of $\tilde p$. By weak duality it then holds
\begin{align}
g(\tilde p)&=p^\text{key}r(\rho^*_{\tilde p})\nonumber\\
&\geq p^\text{key}\(r(\tilde\rho)-\Tr\left[\tilde\rho^T\nabla r(\tilde\rho)\right]+\max_{\vec{\nu}\in\Sigma^*_{\tilde\rho}}\ell_{\tilde p}(\vec{\nu})\) \label{eq:lowerbound}\\
&\geq p^\text{key}\(r(\tilde\rho)-\Tr\left[\tilde\rho^T\nabla  r(\tilde\rho)\right]+\ell_{\tilde p}(\vec{\nu})\)\\
&=:\tilde{g}_{\vec{\nu},\tilde\rho}(\tilde p)\label{129}
\end{align}
for any $\tilde\rho\in\mc{D}(\mc{H}_{AB})$ and any $\vec{\nu}\in\Sigma^*_{\tilde\rho}$. We note that for any such choice of $\tilde\rho,\vec{\nu}$, the function $\tilde{g}_{\vec{\nu},\tilde\rho}:\mc{P}_{\tilde{\mc{C}}}\to\mathbb{R}$ is an affine crossover min-tradeoff function. 

\subsubsection{Optimisation of the crossover min-tradeoff function}

In this section we describe how we can numerically obtain almost optimal, i.e. optimal up to numerical imprecision, choices for our parameters 
 $\tilde\rho$ and $\vec{\nu}$ in the crossover min-tradeoff function (\ref{129}), for a given distribution $\tilde{p}_0\in\mc{P}_{\tilde{\mc{C}}}$. The distribution will be of the form $\tilde p_0(x,z,\perp)=\tilde{p}^\text{PE}p^\text{sim}_0(x,z)$, for all $x\in\{0,1,2,3\}$ and $z\in\{0,...,m-1\}$, where $p^\text{sim}_0(x,z)$ is a distribution obtained by simulating an honest implementation of the physical QKD protocol. Similarly, $\tilde p_0(\perp,\perp,x')= \tilde{p}^\text{tom} p_0^\text{tom}(x')$ for all $ x'\in\{0,...,15\}$, where $p_0^\text{tom}(x')$ is the distribution obtained in the hypothetical tomography. For the explicit form of $p^\text{sim}_0(x,z)$ and $p^\text{tom}_0(x')$, given by a simulation of the hypothetical QKD protocol, see Section \ref{subs:NumImplementation}.
 
We note that whereas the choices for $\tilde\rho$ and $\vec{\nu}$ will only be optimal up to numerical imprecision, it is possible to analytically confirm their feasibility, i.e. that  $\tilde\rho\in\mc{D}(\mc{H}_{AB})$ and  $\vec{\nu}\in\Sigma^*_{\tilde\rho}$. Thus we can analytically verify that the corresponding function $g_{\vec{\nu},\tilde\rho}$ is indeed a valid crossover min-tradeoff function. 
 
Our numerical method now works as follows: We begin with some $\tilde\rho^{(0)}\in\mc{D}(\mc{H}_{AB})$ and, for $i=1,...,n^\mathrm{iter}$, where $n^\mathrm{iter}\in\mathbb{N}$, iteratively compute

\begin{align}
\Delta\tilde\rho^{(i)} = \text{arg} &\min_{\sigma_{AB}}\ \Tr\left[\sigma^T\nabla r(\tilde\rho^{(i-1)}) \right]\label{eq:SDP2a}\\
\text{s.t.} \; &\sigma_{AB}\geq0,\;\nonumber\\
%&p^\text{key}= p(\perp,\perp,\perp),\nonumber\\
&\forall x\in\{0,1,2,3\},\;\forall z\in\{0,...,m-1\}:\nonumber\\
&\;\Tr\left[\(\pro{x}_A\otimes \tilde{R}_B^z\) \sigma_{AB} \right]=p_0^\text{sim}(x,z),\nonumber\\
&\forall x'\in\{0,...,15\}: \nonumber\\
&\;\Tr\left[\Gamma_{x'} \sigma_{A} \right]=p_0^\text{tom}(x'). \nonumber
\end{align}
Once this SDP is solved and $\Delta\tilde\rho^{(i)}$ is known, the value of the relative entropy is minimized according to
\begin{align}
    \min_{\kappa \in (0,1)}\ &r (\tilde\rho^{(i-1)} + \kappa \Delta \tilde\rho^{(i)}).\label{eq:globalsearch}
\end{align}
Such minimization can be computed in MATLAB with the function \texttt{fminbnd}. Then, we set a new density matrix $\tilde\rho^{(i)} = \tilde\rho^{(i-1)} + \kappa^* \Delta\tilde\rho^{(i)}$, with the optimal coefficient $\kappa^*$, and repeat the optimisation \eqref{eq:SDP2a}. After $n^\mathrm{iter}$ we set $\tilde\rho_0=\tilde\rho^{(n^\mathrm{iter})}$.

The numerical computation of the dual of \eqref{eq:SDP2} requires to take into account the difference in the numerical representation of the states and operators with respect to their analytical values, which leads to a violation of the constraints due to the computational limitations of the computers. According to Theorem 3 of \cite{winick2018reliable}, this error may be taken into account by introducing a new parameter $\varepsilon'$, which takes the absolute value of the maximal such error, and expands the feasible set to provide a lower bound while preserving the reliability of the approach. 
With this methodology, the dual takes the form~\cite{boyd2004convex},
\begin{equation}
\max_{(\vec{\nu},\vec{\mu})\in\tilde{\Sigma}^*_{\tilde\rho_{0}}}\ell^0_{\tilde p_0,\varepsilon'}(\vec{\nu}, \vec{\mu}),
\label{eq:SDP3a}
\end{equation}
where the dual objective is given by 
\begin{align}
\ell^0_{\tilde p_0,\varepsilon'}(\vec{\nu}, \vec{\mu})&=\sum_{x=0}^3\sum_{z=0}^{m-1}\nu_{xz}p_0^\text{sim}(x,z)\nonumber\\
&\quad +\sum_{x'=0}^{15}\nu'_{x'} p_0^\text{tom}(x')\nonumber\\
&
\quad -\varepsilon' \sum_{z'=1}^{4m + 16}\mu_{z'},
\end{align}
with the set $\tilde{\Sigma}^*_{\tilde\rho_0}$ defined as
\begin{align}
\tilde{\Sigma}^*_{\tilde\rho_0}=&\left\{(\vec{\nu},\vec{\mu})\in(\mathbb{R}^{4m +16},\mathbb{R}^{4m +16}):-\vec{\mu}\leq \vec{\nu}\leq\vec{\mu},\right.\nonumber\\
&\quad \nabla r(\tilde\rho_0)
-\sum_{x=0}^3\sum_{z=0}^{m-1}\nu_{xz}\left(\pro{x}_A\otimes \tilde{R}_B^z\right)^T \nonumber \\
&\quad \left. -\sum_{x'=0}^{15}\nu'_{x'}\Gamma_{x'}^T\geq0\right\}.
\end{align}
From this maximization, as well as a fixed value $\varepsilon'$ taken according to the maximal numerical error at the constraints, we obtain $\vec{\nu}_0$, and note that $\vec{\nu}_0\in\Sigma^*_{\tilde\rho_0}$. This allows us to define our crossover min-tradeoff function as 
\begin{align}
&\tilde{g}_0(\tilde p):=\tilde{g}_{\vec{\nu}_0,\tilde\rho_0}(\tilde p)\nonumber\\
&={p^\text{key}}\(r(\tilde\rho_0)-\Tr\left[\tilde\rho_0^T\nabla r(\tilde\rho_0)\right]+\ell_{\tilde p}(\vec{\nu}_0)\)\label{minTradeoff}\\
&={p^\text{key}}\(G_0+\sum_{x=0}^3\sum_{z=0}^{m-1}\nu_{0,xz}\frac{\tilde p(x,z,\perp)}{\tilde{p}^\text{PE}}\right.\nonumber\\
&\quad \left.+{\sum_{x'=0}^{15}\nu'_{0,x'}\frac{\tilde p(\perp,\perp,x')}{\tilde{p}^\text{tom}}}\)\label{def:g},
\end{align}
with a constant
\begin{equation}
G_0:=r(\tilde\rho_0)-\Tr\left[\tilde\rho_0^T\nabla r(\tilde\rho_0)\right].
\end{equation}

In order to compute the higher order terms of the EAT, we need to find $\max(\tilde{g}_{0})=\max_{\tilde p\in\mc{P}_{\tilde{\mc{C}}}}\tilde{g}_{0}(\tilde p)$ and $\min(\tilde{g}_{0})=\min_{\tilde p\in \mc{P}_{\tilde{\mc{C}}}}\tilde{g}_{0}(\tilde p)$. We note that, as $\mc{P}_{\tilde{\mc{C}}}$ is convex and $\tilde{g}_{0}$ is affine, we can restrict to the extreme points of $\mc{P}_{\tilde{\mc{C}}}$. Namely we get
\begin{align}
&\max(\tilde{g}_{0})=p^\text{key}G_0+p^\text{key}\max(\nu_0),\\
&\min(\tilde{g}_{0})=p^\text{key}G_0+p^\text{key}\min(\nu_0), \label{Ming}
\end{align}
where we have defined
\begin{align} 
\max(\nu_0):=\max&\(\left\{\frac{\nu_{0,xz}}{\tilde{p}^\text{PE}}\right\}_{(x,z)=(0,0)}^{(3,m-1)}\cup\left\{\frac{\nu'_{0,x'}}{\tilde{p}^\text{tom}}\right\}_{x'=0}^{15}\)\label{eq:max_nu}\\
\min(\nu_0):=\min&\(\left\{\frac{\nu_{0,xz}}{\tilde{p}^\text{PE}}\right\}_{(x,z)=(0,0)}^{(3,m-1)}\cup\left\{\frac{\nu'_{0,x'}}{\tilde{p}^\text{tom}}\right\}_{x'=0}^{15}\)\label{eq:min_nu}%\left\{\frac{\nu_{0,xz}}{\tilde{p}^\text{PE}}:x=0,\ldots,3,z=0,\ldots, m-1\right\}\nonumber\\
%&\cup\left\{\frac{\nu'_{0,x'}}{\tilde{p}^\text{tom}}:x'=0,\ldots,15\right\}. \label{eq:min_nu}
\end{align}
In the case where the minimisers are non-positive and and the maximisers are non-negative, we can upper bound 
\begin{align}\label{eq:128}
    \max(\tilde{g}_{0})-\min(\tilde{g}_{0})
    &\leq \max(\nu_0) - \min(\nu_0),
    \end{align}
which is independent of $p^\text{key}$. Finally, we can introduce the min-tradeoff function induced by our crossover min-tradeoff function $\tilde{g}_0$, given by eq. (\ref{def:g}), via eqs. (\ref{def:f1},\ref{def:f2}).
\begin{align}
f(p)&=\sum_{c\in\tilde{\mathcal{C}}}p(c)\(\max(\tilde{g}_0)+\frac{\tilde{g}_0(\delta_c)-\max(\tilde{g}_0)}{1-p^\text{key}}\)\nonumber\\
&\quad +p(\perp,\perp,\perp)\max(\tilde{g}_0)\\
&=\max(\tilde{g}_0)+\sum_{c\in\tilde{\mathcal{C}}}\frac{p(c)\(\tilde{g}_0(\delta_c)-\max(\tilde{g}_0)\)}{1-p^\text{key}}\\
&=p^\mathrm{key}\(G_0+\max(\nu_0)\)\nonumber\\
&+\sum_{x=0}^3\sum_{z=0}^{m-1}p^\text{key}\frac{\nu_{0,xz}/\tilde{p}^\text{PE}-\max(\nu_0)}{1-p^\text{key}}p(x,z,\perp)\nonumber\\
&+\sum_{x'=0}^{15}p^\text{key}\frac{\nu'_{0,x'}/\tilde{p}^\text{tom}-\max(\nu_0)}{1-p^\text{key}}p(\perp,\perp,x'). \label{eq:mtoff_coeffs}
\end{align}

% Which can also be expressed as
% \begin{align}
%     f(p) &= p^\mathrm{key} G_0 \nonumber \\
%     &+\sum_{x=0}^3\sum_{z=0}^{m-1}\frac{p^\text{key}\nu_{0,xz}}{\tilde{p}^\text{PE}(1-p^\text{key})}p(x,z,\perp)\nonumber\\
%     & +\sum_{x'=0}^{15}\frac{p^\text{key}\nu'_{0,x'}}{\tilde{p}^\text{tom}(1-p^\text{key})}p(\perp,\perp,x'). \label{eq:CrossMToff}
% \end{align}
% }

Let us now observe that we can split the min-tradeoff function according to a constant term (since it does not depend on the probabilities) and the previously-defined functions (\ref{eq:fPE}-\ref{eq:ftom}) for parameter estimation and tomography
\begin{align}
    f(p) = \mathrm{const}+f^\mathrm{PE}(p)+f^\mathrm{tom}(p),\label{eq:ourf}
\end{align}
with
\begin{align}
&f^\mathrm{PE}(p)=\sum_{x=0}^3\sum_{z=0}^{m-1}h_{x,z,\perp}p(x,z,\perp),\label{eq:ourfPE}\\
&f^\mathrm{tom}(p)=\sum_{x'=0}^{15}h_{\perp,\perp, x'}p(\perp,\perp,x').\label{eq:ourftom}
\end{align}
The affine coefficients of the functions, given in terms of the crossover min-tradeoff function \eqref{eq:mtoff_coeffs}, provide the means for calculating the statistical deviations $\delta^\mathrm{tol}_\mathrm{PE}$ and $\delta^\mathrm{tol}_\mathrm{tom}$. Thanks to the affine structure of \ref{eq:ourf}, we can use $f(p)$ as a min-tradeoff function in Theorem \ref{FinteSizePhys}. When applying such Theorem, we evaluate the min-tradeoff function in our simulated honest distribution, $f(p_0)$. However, we use the properties of the distribution $p_0$ to reformulate our function in more convenient shape

\begin{align}
f(p_0)&=p^\mathrm{key} G_0 \nonumber \\
&\quad + \sum_{x=0}^3\sum_{z=0}^{m-1}\nu_{0,xz} p_0^\mathrm{sim}(x,z) \nonumber\\
&\quad +\sum_{x'=0}^{15}\nu'_{0,x'}p_0^\mathrm{tom}(x')\label{146}\\
&=\tilde{g}_0(\tilde{p}_0).
\end{align}

% where 
% \begin{align}
% &f^\mathrm{PE}(p)=\sum_{x=0}^3\sum_{z=0}^{S-1}h_{x,z,\perp}p(x,z,\perp),\label{eq:ourfPE}\\
% &f^\mathrm{tom}(p)=\sum_{x'=0}^{15}h_{\perp,\perp, x'}p(\perp,\perp,x'),\label{eq:ourftom}
% \end{align}
% which are of the form of eqs. (\ref{eq:fPE},\ref{eq:ftom}), allowing us to use $f$ as a min-tradeoff function in Theorem \ref{FinteSizePhys}. When applying Theorem \ref{FinteSizePhys}, the term that goes into the bound on the key rate is $f(p_0)$, which can be reformulated in a more convenient form as follows:
% \begin{align}
% f(p_0)&=p^\mathrm{key}\big[G_0+\max(\nu_0)\nonumber\\
% &+ \sum_{x=0}^3\sum_{z=0}^{S-1}\(\nu_{0,xz}-\tilde{p}^\mathrm{PE}\max(\nu_0)\)p_0^\mathrm{sim}(x,z)\nonumber\\
% &\left.+\sum_{x'=0}^{15}\(\nu'_{0,x'}-\tilde{p}^\mathrm{tom}\max(\nu_0)\)p_0^\mathrm{tom}(x')\right]\\
% &=p^\mathrm{key}\left[G_0+ \sum_{x=0}^3\sum_{z=0}^{S-1}\nu_{0,xz} p_0^\mathrm{sim}(x,z)+\sum_{x'=0}^{15}\nu'_{0,x'}p_0^\mathrm{tom}(x')\right]\label{146}\\
% &=\tilde{g}_0(\tilde{p}_0).
% \end{align}

\subsection{Asymptotic Rates}
With our choice of a min-tradeoff function $f(p)$, we can now compute the asymptotic key rate in Theorem \ref{FinteSizePhys}, and show that we can achieve soundness and completeness in the asymptotic limit.

Let $n\in\mathbb{N}$. We begin by noting that, for fixed $m\in\mathbb{N}$, our numerically obtained values $\nu_{0,xz}$, for $x=0,...,3,z=0,...,m-1$, and $\nu'_{0,x'}$, for $x'=0,...,15$, are constant, i.e. independent of $n$. Let us consider some fixed values for the parameters $\epsilon^\mathrm{phys}_{\mathrm{NA}},\epsilon^\mathrm{tom},\epsilon_\text{EC},\epsilon_\mathrm{EC}^c,\epsilon_\mathrm{PE}^c\in(0,1)$, such that $\epsilon^\mathrm{tom}<\frac{1}{2}\epsilon^\mathrm{phys}_{\mathrm{NA}}$, as well as $\epsilon\in\(0,1-\sqrt{2\epsilon^\mathrm{tom}/\epsilon^\mathrm{phys}_{\mathrm{NA}}}\)$,  $\epsilon^\mathrm{phys}=\epsilon+\sqrt{2\epsilon^\mathrm{tom}/\epsilon^\mathrm{phys}_{\mathrm{NA}}}$. Let us also consider some constant $0\leq\tilde{p}^\mathrm{PE}\leq1$, and $\tilde{p}^\mathrm{tom}=1-\tilde{p}^\mathrm{PE}$. 

Since $n\rightarrow \infty$, we can select any scaling such that all rounds tend asymptotically to be spent on key generation, and the finite size effects are reduced. For simplicity, let us then take $a=1+n^{-3/4}$, as well as $p^\text{key}=1-n^{-\frac{1}{2}}$ and $p^\text{PE}=\tilde{p}^\text{PE}n^{-\frac{1}{2}}$, implying $p^\text{tom}=\tilde{p}^\text{tom}n^{-\frac{1}{2}}$. It then holds that $h_{x,z,\perp}=\mathcal{O}(n^\frac{1}{2})$ and $h_{\perp,\perp,x'}=\mathcal{O}(n^\frac{1}{2})$, as well as $p_0(x,z,\perp)=\mathcal{O}(n^{-\frac{1}{2}})$ and $p_0(\perp,\perp,x')=\mathcal{O}(n^{-\frac{1}{2}})$, for all  $x\in\{0,...,3\}$, $z\in\{0,...,m-1\}$ and $x'\in\{0,...,15\}$. Hence, for all $i=1,...,4m$, the quantities defined in eqs. (\ref{eq:79a}-\ref{eq:79b}) scale as follows: $\gamma_i=\mathcal{O}(n^{-\frac{1}{2}})$, $c_i=\mathcal{O}(n^\frac{1}{2})$, and $D=\mathcal{O}(n^\frac{1}{2})$. Consequently, in order to fulfill eq. (\ref{eq:deltaPE}), we have to choose $\delta^\mathrm{tol}_\mathrm{PE}=\mathcal{O}((\log n)^\frac{1}{2}n^{-\frac{1}{4}})$. Similarly, it can be shown that that we need $\delta^\mathrm{tol}_\mathrm{tom}=\mathcal{O}((\log n)^\frac{1}{2}n^{-\frac{1}{4}})$, in order to satisfy eq. \ref{eq:deltatom2}).

As for the remaining higher order terms in eq. (\ref{eq:FiniteKeyRate}), we note that by eqs. (\ref{eq:96},\ref{eq:97},\ref{eq:128}) it holds
\begin{align}
    V&\leq\tilde{V} = \mathcal{O}\(n^\frac{1}{4}\)  \\
    % &\leq\sqrt{\frac{1}{1-p^\mathrm{key}}\(\max(\nu_0)-\min(\nu_0)\)^2+2}\nonumber\\
    % &\quad +\log\(2d_O^2+1\)\nonumber\\
    % &=\mathcal{O}\(n^\frac{1}{4}\),\\
    K_a&\leq\tilde{K}_a=\mathcal{O}(1).
\end{align}

Although $\tilde{V}$ and $\tilde{K}_a$ do not decrease with the number of rounds, we note that in \eqref{eq:FiniteKeyRate} they appear multiplied by $a-1$. This eventually leads to
\begin{align}
    (a-1)\tilde{V} & = \mathcal{O}\(n^{-\frac{1}{2}}\),\\
    (a-1)^2 \tilde{K}_a & =\mathcal{O}\(n^{-\frac{3}{2}}\).
\end{align}
Hence, all remaining higher order terms, except $\frac{1}{n}\text{leak}_\mathrm{EC}$, which we keep open, scale as $\mathcal{O}(n^{-\frac{1}{4}})$ or less. Further, the term $f(p_0)$ in Theorem \ref{FinteSizePhys}, given by eq. (\ref{146}), only depends on $n$ via the prefactor $p^\mathrm{key}$.
%\begin{align}
%f(p_0)&=p^\mathrm{key}\left[G_0+\sum_{x=0}^3\sum_{z=0}^{S-1}\nu_{0,xz}p_0^\mathrm{sim}(x,z)+\sum_{x'=0}^{15}\nu'_{0,x'}p_0^\mathrm{tom}(x')\right],
%\end{align}
%where the expression in the bracket is constant. 
In summary, we can obtain the following bound on the asymptotic key rate.
\begin{theorem}{\bf (Asymptotic rate)}\label{Asymptotoc}
 For the above mentioned values of the parameters, it holds
 \begin{align}
r^\mathrm{phys}\big|_{\Omega^\mathrm{phys}_\mathrm{NA}}&\geq G_0+\sum_{x=0}^3\sum_{z=0}^{m-1}\nu_{0,xz}p_0^\mathrm{sim}(x,z)\nonumber\\
&\quad +\sum_{x'=0}^{15}\nu'_{0,x'}p_0^\mathrm{tom}(x') -\frac{1}{n}\mathrm{leak}_\mathrm{EC}\nonumber\\
&\quad +\mathcal{O}((\log n)^\frac{1}{2}n^{-\frac{1}{4}})
\end{align}
\begin{align}
\lim_{n\to \infty}r^\mathrm{phys}\big|_{\Omega^\mathrm{phys}_\mathrm{NA}}&\geq G_0+\sum_{x=0}^3\sum_{z=0}^{m-1}\nu_{0,xz}p_0^\mathrm{sim}(x,z)\nonumber\\
&\quad +\sum_{x'=0}^{15}\nu'_{0,x'}p_0^\mathrm{tom}(x') \nonumber \\
&\quad -\lim_{n\to \infty}\frac{1}{n}\mathrm{leak}_\mathrm{EC}\label{rate}.
\end{align}
\end{theorem}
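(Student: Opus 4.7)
The plan is to apply Theorem \ref{FinteSizePhys} to the min-tradeoff function $f$ constructed in eqs.~(\ref{eq:ourf}-\ref{eq:ourftom}), using the key identity $f(p_0) = \tilde g_0(\tilde p_0)$ from eq.~(\ref{146}) to identify the leading-order expression, and then carefully track the $n$-dependence of every higher-order correction in eq.~(\ref{eq:FiniteKeyRate}) under the prescribed scaling. Since $G_0$, $\nu_{0,xz}$, $\nu'_{0,x'}$ (obtained numerically, for fixed $m$ and fixed empirical distributions) and $\max(\nu_0),\min(\nu_0)$ are all independent of $n$, the only $n$-dependence of the leading term enters through the prefactor $p^{\text{key}}=1-n^{-1/2}\to1$, which immediately yields the claimed asymptotic expression upon taking the limit.

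The bulk of the work is showing the vanishing of the correction terms. First I would substitute $p^{\text{key}}=1-n^{-1/2}$, $p^{\text{PE}}=\tilde{p}^{\text{PE}}n^{-1/2}$ and $p^{\text{tom}}=\tilde{p}^{\text{tom}}n^{-1/2}$ into the coefficients of eq.~(\ref{eq:mtoff_coeffs}) and verify the stated scalings $h_{x,z,\perp},h_{\perp,\perp,x'}=\mathcal{O}(n^{1/2})$ and $p_0(x,z,\perp),p_0(\perp,\perp,x')=\mathcal{O}(n^{-1/2})$. From this it is straightforward to deduce $\gamma_i=\mathcal{O}(n^{-1/2})$, $c_i=\mathcal{O}(n^{1/2})$, and $D=\mathcal{O}(n^{1/2})$, so that the dominant $\sqrt{\log(n/\epsilon)\sum_i\gamma_i c_i^2/n}$ contribution in eq.~(\ref{eq:deltaPE}) (and the analogous eq.~(\ref{eq:deltatom2})) yields $\delta^{\text{tol}}_{\text{PE}},\delta^{\text{tol}}_{\text{tom}}=\mathcal{O}((\log n)^{1/2}n^{-1/4})$, the second term being subdominant.

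Next I would handle the EAT higher-order penalties by using the estimates on the crossover min-tradeoff function. Since $\max(\tilde g_0)-\min(\tilde g_0)=p^{\text{key}}(\max(\nu_0)-\min(\nu_0))=\mathcal{O}(1)$, eqs.~(\ref{eq:96},\ref{eq:97},\ref{eq:128}) give $V\leq\tilde V=\mathcal{O}(n^{1/4})$ (from the $1/(1-p^{\text{key}})$ factor in the variance bound) and $K_a\leq\tilde K_a=\mathcal{O}(1)$ (the exponential factors collapse to $\mathcal{O}(1)$ since $(a-1)=n^{-3/4}\to0$ and $d_O$ is constant). Consequently, with $a-1=n^{-3/4}$ one obtains the crucial balance $(a-1)\ln2\cdot V^2/2=\mathcal{O}(n^{-1/4})$ and $(a-1)^2 K_a=\mathcal{O}(n^{-3/2})$. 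The $p^{\text{PE}}\log5$ and $(1-p^{\text{key}})\log(17(m+1))$ terms are each $\mathcal{O}(n^{-1/2})$, while the $1/\sqrt{n}$-bracket in eq.~(\ref{eq:FiniteKeyRate}) is $\mathcal{O}(n^{-1/2}(\log(1/\epsilon))^{1/2})$, and the $1/n$-bracket contributes $\frac{1}{n(a-1)}\Gamma(\epsilon/16)+\frac{a}{n(a-1)}\log(\cdot)=\mathcal{O}(n^{-1/4})$.

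Collecting all of these bounds gives the first claim of the theorem, and then taking $n\to\infty$ causes every error term to vanish except the $\frac{1}{n}\mathrm{leak}_{\text{EC}}$ contribution, which is kept as an open parameter; since $p^{\text{key}}\to1$ in $\tilde g_0(\tilde p_0)$, the asymptotic lower bound on $r^{\text{phys}}$ reduces to $G_0+\sum_{x,z}\nu_{0,xz}p_0^{\text{sim}}(x,z)+\sum_{x'}\nu'_{0,x'}p_0^{\text{tom}}(x')-\lim_{n\to\infty}\mathrm{leak}_{\text{EC}}/n$ as stated. The most delicate step is verifying that the specific exponent $3/4$ in $a=1+n^{-3/4}$ simultaneously tames the $(a-1)V^2$ term (whose worst-case scaling is amplified by the $1/(1-p^{\text{key}})$ factor in $\mathrm{Var}(f)$) and the $1/(a-1)$ terms originating from the chain-rule and Markov-conditioning penalties; any naive choice breaks this balance, and the appearance of the $n^{1/4}$ growth of $V$ is precisely what fixes the rate at which the error closes.
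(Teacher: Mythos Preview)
Your proposal is correct and follows essentially the same approach as the paper: you apply Theorem~\ref{FinteSizePhys} with the constructed min-tradeoff function, identify the leading term via $f(p_0)=\tilde g_0(\tilde p_0)$, substitute the scalings $a=1+n^{-3/4}$, $p^{\mathrm{key}}=1-n^{-1/2}$, and track the resulting order of each correction term exactly as the paper does. Your write-up is in fact slightly more detailed than the paper's own argument, in particular your explicit verification that $(a-1)V^2=\mathcal{O}(n^{-1/4})$ and $(a-1)^2K_a=\mathcal{O}(n^{-3/2})$, and your closing remark on why the exponent $3/4$ balances the competing terms.
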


\section{Numerical implementation and results} \label{subs:NumImplementation}
In order to show that our approach produces non-trivial key rates in a realistic implementation, we consider the same scenario that was used in \cite{lin2019asymptotic}. Namely, we simulate an experiment in which Alice and Bob are linked by an optical fibre of length $D$ with excess noise $\xi$, transmittance $\eta = 10^{-\omega D/10}$ and an attenuation of $\omega=0.2$ dB/km. This provides us with a simulated distribution that can be computed efficiently using MATLAB
\begin{equation}\label{eq:doubleint}
    p_0^\text{sim}(x,z) = \int_{\tilde{\mathcal{R}}_z} \frac{\gamma \exp\left(\frac{-|\gamma e^{i \theta} - \sqrt{\eta}\varphi_x|^2}{1 + \eta \xi/2}\right)}{4 \pi (1 + \eta \xi/2)}d\theta d\gamma  ,
\end{equation}
where $\tilde{\mathcal{R}}_z$ represents the fragment of the phase space corresponding to each module $z \in \{0,...,m-1\}$, defined according to the intervals described in \eqref{eq:discPE}, and $\varphi_x \in \{\alpha, i\alpha, -\alpha ,-i\alpha\}$ are the coherent state amplitudes used by Alice with $\alpha \in \mathbb{R}$. The region operators for the constraints $\tilde{R}^z_B$ are given by the same intervals as in \eqref{eq:doubleint}
 \begin{equation}
      \tilde{R}^z_B=\frac{1}{\pi}\int_{\tilde{\mathcal{R}}_z}\gamma\pro{\gamma e^{i\theta}}d\theta d\gamma,
 \end{equation}
 while their numerical implementation requires to switch to the Fock basis. This is done via the inner product \cite{burnett_1998},
 \begin{equation}
     \left<\gamma e^{i \theta}\right.|\left. k\right> = \frac{\gamma^k e^{-\gamma^2 / 2}  e^{-i k \theta}}{\sqrt{k!}}.
 \end{equation}
For the hypothetical tomography, we choose an IC POVM $\{\Gamma_{x'}\}_{x'=0}^{15}$, which completely describes Alice's marginal with a probability distribution
\begin{align}
    p_0^\text{tom}(x') &= \frac{1}{4}\sum_{x,y=0}^3\<\varphi_y|\varphi_x\>\Tr\left[\Gamma_{x'}\ket{x}\bra{y}_A\right].
\end{align}

%On the other hand, the tomographically complete measurement performed by Alice can be modeled with the non-diagonal matrices of the basis of Generalized Gell-Mann Matrices (GGMs) \cite{Bertlmann_2008} normalized to the unit. 

As argued, under the bounded-energy assumption, the computation of the trade-off requires solving the optimisation for arbitrary finite dimension $d_B$. At the moment we are unable to do this, so we truncate operators by introducing a photon number cutoff $N_c$. That is, we impose that all operators, when expressed in the Fock basis, involve terms having at most $N_c$ photons, which implies that $d_B=N_c+1$. We solve the optimisation for increasing values of $N_c$ and we always observe that the obtained trade-offs numerically converge, see for instance Fig~\ref{fig:Nc_Test} below. A value of  $N_c = 15$ provides a good balance between the execution time of the solver and the reliability of the numerics, which is consistent with what was previously observed in~\cite{lin2019asymptotic,hu2021robust,ghorai2019asymptotic}. Based on all the obtained numerical evidence, we make the following

\textbf{Numerical convergence assumption:} The derived numerical trade-offs for the considered cut-offs provide reliable approximations to the trade-off for arbitrary finite dimension. 

In our view, this assumption is quite plausible in the considered setup, as the amplitude of the states detected by Bob decreases with the channel losses, which eventually means a decreasing average number of received photons.

% \begin{figure}
%     \centering
%     %\input{AmplCurves}
%     \input{Amp_b1_26_01.tex}
%     \caption{Asymptotic key generation rate per pulse in terms of the amplitude of the coherent states $\alpha$ for multiple distances. The plots were obtained for an excess noise $\xi=1\%$ and a perfect error correction efficiency $\chi_\mathrm{EC}=1$.}
%     \label{fig:AmpCurves}
% \end{figure}

With all the elements of the optimisation defined, we minimize the SDP \eqref{eq:SDP2a} according to the Frank-Wolfe algorithm. For this process we use the toolbox YALMIP \cite{YALMIP} together with the interior point solver SDPT3 \cite{sdpt3_one,sdp3_two}. Once the suboptimal bound is obtained, we compute the dual \eqref{eq:SDP3a} using the optimization software CVX \cite{cvx,gb08}, since it provides slightly better results than YALMIP and SDPT3. For the iterative process of the Frank-Wolfe algorithm, we set a stopping criterion based on calculating the lower bound \eqref{eq:lowerbound} every 15 iterations of the minimization \eqref{eq:SDP2a}; if the relative difference between the upper bound given by minimising \eqref{newobj} and the reliable lower bound is smaller than a $2\%$, the algorithm stops the optimization. If this margin is not reached, the algorithm continues until a total of $300$ iterations are performed. Using this approach we obtain $\tilde\rho_0$ and $\vec{\nu}_0$, the feasibility of which can be checked analytically. This allows us to obtain a crossover min-tradeoff function $\tilde{g}_0$ via eq. (\ref{minTradeoff}). By Theorem \ref{Asymptotoc},  we observe the asymptotic rate

\begin{align}
    r_\infty &\geq  G_0+\sum_{x=0}^3\sum_{z=0}^{m-1}\nu_{0,xz}p_0^\mathrm{sim}(x,z)\nonumber\\
    &\quad +\sum_{x'=0}^{15}\nu'_{0,x'}p_0^\mathrm{tom}(x')-\lim_{n\to \infty}\frac{1}{n}\mathrm{leak}_\mathrm{EC}.\label{eq:asymptoticLimit}
\end{align}

%Notice that whenever $\vec{\nu}\in\tilde{\Sigma}^*_{\rho}$, then $\vec{\nu}$ also belongs to the set $\Sigma^*_{\rho}$, so that we can employ it to calculate \eqref{maxmin}.

For the classical information leaked during error correction, we can assume an honest, iid implementation of the protocol. We introduce a parameter $f$ that quantifies the the error correction efficiency with respect to the ideal Shannon limit, so that we write 
\begin{equation}
       \frac{1}{n}\mathrm{leak}_\mathrm{EC} \leq p^\mathrm{key} (1+f)  H(\hat{Z}|\hat{X})
\end{equation}
where $\hat{X}$, $\hat{Z}$ represent the key string bits (after removing the symbol $\perp$) of Alice and Bob respectively. On the other hand, the parameter $p^\mathrm{key}$ comes from the fact that only the signals coming from key rounds require error correction. The error correction efficiency may depend on several factors, such as the chosen code, the block size or the form of the probability distribution between Alice and Bob. Here we take $f$ from values that range from $0\%$ to $5\%$ as a showcase of the potential results for our scheme given diverse efficiencies for error correction.
%$\chi_\mathrm{EC}$ is the error correction efficiency, where $\chi_\mathrm{EC}=1$ corresponds to the Shannon limit. %Taking into account the inefficiency of error correction, we can upper this by \cite{lin2019asymptotic}
%\begin{equation}
    %\frac{1}{n}\text{leak}_\text{EC} \leq  H(\hat{Z}|\hat{X}) +\sqrt{\frac{3\log(2/\epsilon_\mathrm{EC})}{n}}\log(|\hat{Z}|+3)
%\end{equation}
 %where we employ $\beta=0.95$ for the error correction efficiency.}
The Shannon term $H(\hat{Z}|\hat{X})$ can be computed numerically according to the distribution \eqref{eq:doubleint} adapted for the modulation of the key rounds, namely
\begin{align}
     &p^\text{EC}_0 (x,z) \nonumber \\ 
     &= \int_0^\infty \int_{\frac{\pi}{4}(2z-1)}^{\frac{\pi}{4}(2z+1)} \frac{\gamma \exp\left(\frac{-|\gamma e^{i \theta} - \sqrt{\eta}\varphi_x|^2}{1 + \eta \xi/2}\right)}{4 \pi (1 + \eta \xi/2)}d\theta d\gamma.
\end{align}

With the error correction cost, it is not only possible to calculate the asymptotic secret key rate, but also optimise the amplitude $\alpha$ that Alice chooses for her coherent states, which is not attainable with only the results from the SDP.% The outcomes of such optimisation in amplitudes are included for completeness in Figure \ref{fig:AmpCurves}, which indicates in particular how $\alpha = 0.71$ becomes the optimal amplitude for distances beyond $D=65$ km in the case of $\chi_\mathrm{EC}=1$.

% \begin{figure}[h]
%     \centerline{\input{DW_b1_12_5.tex}}
%     \caption{Asymptotic secret key generation rate according to \eqref{eq:asymptoticLimit} in terms of the distance $D$ and the excess noise $\xi$ with a modulation $(\Delta, \delta)=(0.9,0.9)$. The amplitude of the coherent states was optimized with respect to the distance, resulting in values between $1.07$ for $D=0$ km to $0.71$ beyond $D=65$ km.}
%     \label{fig:AsymptoticRate}
% \end{figure}

%To obtain optimal values, we first looked forward to optimizing the amplitude $\alpha$ of Alice's states to find the maximum value for the min-tradeoff function at a certain set of distances. The results are given in figure [XYZ], which shows that the min-tradeoff function reaches a maximum for $\alpha = XYZ$ (FOR DIFFERENT DISTANCES BETWEEN ALICE AND BOB). It is important to observe that the optimal amplitude varies for very short distances, whereas it becomes constant for lengths beyond $40$ km.

To test the accuracy of our approach, we compared our results for different values of the cutoff, as well as with the so far standard method of computing the asymptotic key rate based on performing parameter estimation with  moments of the quadrature operators \cite{lin2019asymptotic}. The comparison can be found in Figure \ref{fig:Nc_Test} where one can see that, while using moments to constrain the state shared by Alice and Bob produces better rates for distances shorter than 15 km, both approaches provide comparable results for larger distances. Note that computing moments and coarse-grained probabilities are different ways of discretising the information contained in a CV distribution. These results show that taking moments is better for short distances, albeit the two approaches lead to almost the same values when losses become large. Moreover, as announced, in both cases the asymptotic key rates seem to saturate when increasing the cutoff value. We verified such hypothesis at the inset of Figure \ref{fig:Nc_Test}, where it can be observed that the curves for our modulation converge to the same values.

%%%%%%%%%%%%%%%%%%%%%%
% \begin{figure}[h]
%     \centerline{\input{Compare_Lin}}
%     \caption{Asymptotic secret key generation rate with $\xi=2\%$ and $\chi_\mathrm{EC}=0.95$ according to the parameter estimation described in Lin et Al. \cite{lin2019asymptotic} and our modulation $(\Delta, \delta)=(0.9,0.9)$, both with cutoffs $N_c=6$ and $N_c=15$. The amplitude, taken to be the same for all curves, was optimised with respect to the distance for the modulation described in \cite{lin2019asymptotic}.}
%     \label{fig:Comparison}
% \end{figure}
% %%%%%%%%%%%%%%%%%%%%%%
% \begin{figure}[h]
%     \centerline{\input{Compare_Nc}}
%     \caption{Asymptotic secret key generation rate with $\xi=2\%$ and $\chi_\mathrm{EC}=0.95$ for the modulation $(\Delta, \delta)=(0.9,0.9)$ according to different values of the photon number cutoff.}
%      \label{fig:Nc_Test}
% \end{figure}
% %%%%%%%%%%%%%%%%%%

%%%%%%%%%%%%%%
\begin{figure}[h]
    \centerline{\input{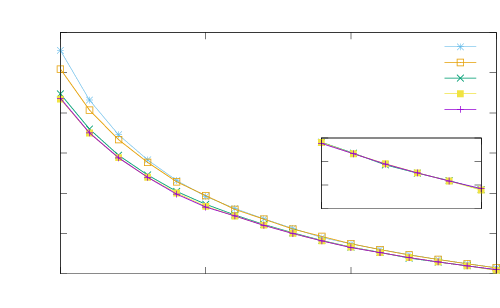}}
    \caption{Asymptotic secret key generation rate with $\xi=2\%$ according to the parameter estimation described in Lin et al. \cite{lin2019asymptotic} and our modulation $(\Delta, \delta)=(0.9,0.9)$, both with ideal error correction (i.e., Shannon limit) $f=0 \%$, and diverse values for the cutoff $N_c$. The amplitude, taken to be the same for all curves, was optimised with respect to the distance. The inset shows the convergence of our modulation.}
     \label{fig:Nc_Test}
\end{figure}
%%%%%%%%%%%%%%

Figure \ref{fig:AsymptoticRate} shows the asymptotic key rates according to eq. (\ref{eq:asymptoticLimit}), where a modulation $(\Delta, \delta)=(0.9,0.9)$ was employed together with a cutoff $N_c = 15$ for ideal error correction $f=0\%$. For distances below $150$ km, the algorithm typically needs 120 or less iterations to converge. For larger distances, it was necessary to reach the limit of 300 iterations before using eq. \eqref{eq:maximization} to obtain the reliable lower bound. On the other hand, we observed a numerical error at the constraints $\varepsilon'$ typically between $10^{-10}$ for small lengths and $10^{-15}$ for very long distances, which ensures both the reliability of the code and the tightness of the key rates.

\begin{figure}[h]
\centerline{\input{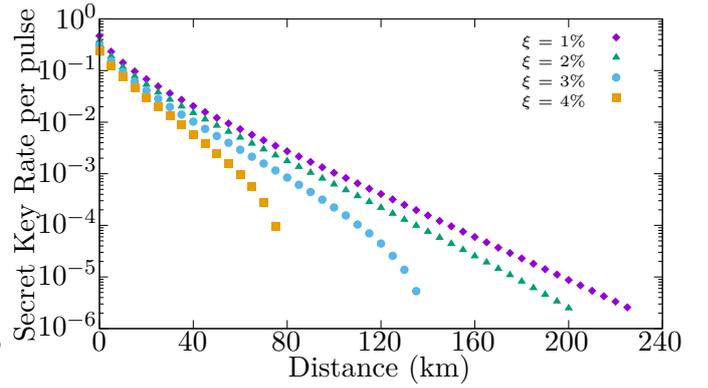}}
    \caption{Asymptotic secret key generation rate according to \eqref{eq:asymptoticLimit} in terms of distances $D$ and excess noise $\xi$ with a modulation $(\Delta, \delta)=(0.9,0.9)$, ideal error correction $f=0 \%$ and cutoff $N_c=15$. The amplitude of the coherent states was optimized with respect to the distance.}
    \label{fig:AsymptoticRate}
\end{figure}

Switching to the finite-size regime, we can use our crossover min-tradeoff function $\tilde{g}_0$ in Theorem \ref{FinteSizePhys} to observe the finite key generation rates. Choosing the parameters $\xi=1\%$, $f=1\%$, $N_c = 12$, $\epsilon = 10^{-10}$, $\epsilon^\mathrm{phys}_{\mathrm{NA}} = 10^{-4}$, $\epsilon^\mathrm{tom} = 10^{-10}$ and $\epsilon_\mathrm{PE}^c = 10^{-10}$ together with a grid search optimisation over $a$ and $p^\text{key}$, we obtain non-zero key rates for $n\geq 10^{12}$ rounds and distances $D\geq15$ km. The outcomes of this process are illustrated in Figure \ref{fig:FiniteRate}, where the curves represent different values for the finite key generation rates with respect to the number of rounds taken for the protocol.

In this regard, the simplest approach to perform the finite-key analysis is to use the resulting data from the asymptotic regime, particularly the dual point \eqref{eq:SDP3a}, in order to build the min-tradeoff function. However, this leads in general to suboptimal results for the finite case since the calculated dual point is optimal only in the asymptotic regime---the dual variables appear in the correction terms of the finite-key rate, whose optimization is not included in the Frank-Wolfe method. For instance, $\tilde{K}_a$ scales exponentially with the spread of the min-tradeoff function, which depends on the dual variables according to (\ref{eq:max_nu}-\ref{eq:min_nu}). Therefore, the dual variables severely affect the finite-key rates. In order to ameliorate this inconvenience, we make use of a perturbative analysis based on genetic algorithms, which reduces the value of the dual variables while preserving a reasonable performance for any block sizes $n$. We defer the details of the method to Appendix \ref{App:D}, and refer to the complete code available in \cite{CPG_2024}. 

We also note that the overall numerical performance of our code enables us to derive the asymptotic secret key in the order of minutes with a reasonable value of the cutoff, $N_c = 12$. Although the perturbative analysis described here increases the overhead of the computations, a fine-tuned implementation can increase the efficiency of the code and perform the complete finite-size analysis for a given distance in a few minutes---such that it can be used in real, on-demand applications.

% $\epsilon=\epsilon_{\mathrm{NA}}=10^{-5}$, and  $\epsilon_\text{EC} = 10^{-8}$, and performing a grid search optimisation over $a$ and $p^\text{key}$, as well as an optimisation over $\tilde{p}^\text{PE}$, we are able to obtain non-zero key rates for $n=10^{11}$ rounds and distances $D\leq21\text{km}$.

\begin{figure}[h]
    % \centerline{\input{FRates_PRX}}
    % \centerline{\input{Q_FRates_Finite}}
    \centerline{\input{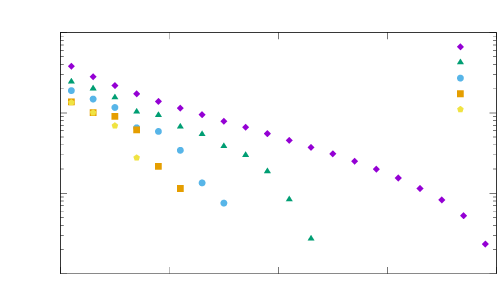}}
    \caption{Finite-size secret key generation rate according to \eqref{eq:FiniteKeyRate} for $\xi=1\%$, $f=1\%$, $N_c=12$, $n$ rounds for the protocol and a modulation $(\Delta, \delta)=(0.9,0.9)$. The parameters $a$, $p^\mathrm{key}$ and $\tilde{p}^\mathrm{PE}$ were optimised according to a grid search, and we set $\epsilon = 10^{-10}$, $\epsilon^\mathrm{phys}_{\mathrm{NA}} = 10^{-4}$, $\epsilon^\mathrm{tom} = 10^{-10}$ and $\epsilon_\mathrm{PE}^c = 10^{-10}$.}
    \label{fig:FiniteRate}
\end{figure}

Finally, we explore the impact of error correction efficiency in the observed key rates. It is well known that in standard CVQKD protocols, as considered in this work, the value of Alice's and Eve's conditional entropies on Bob's results, $H(\hat{Z}|\hat{X})$ and $H(\hat{Z}|\hat{E})$, are very close, especially for large distances. Hence, a non-zero value of $f$ severely affects the possibility of having non-zero key rates. To study this, we plot the finite key rates for blocks of size $n=5 \times 10^{12}$ as a function of the error correction efficiency in Fig.~\ref{fig:n12_f}. As it can be seen, small values of $f$, or in other words, error correction codes with efficiency very close to the Shannon limit, are necessary to generate a secret key for distances beyond 20 km.

\begin{figure}[h]
    %\centerline{\input{Q_FRates_Diff_f}}
    \centerline{\input{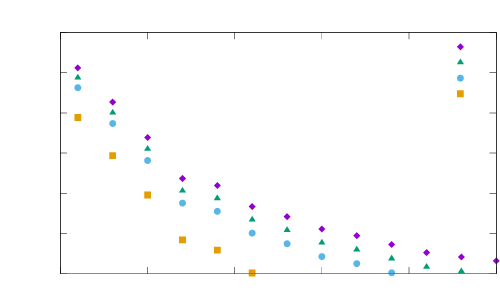}}
    \caption{Finite-size secret key generation rates for $n=5 \times 10^{12}$ rounds and different values of the error correction efficiency $f$. The parameters were taken to be the same as in Figure \ref{fig:FiniteRate}.}
    \label{fig:n12_f}
\end{figure}

\section{Discussion}
In this work we have provided a security proof against arbitrary general attacks for a discrete modulated CVQKD protocol, in which Alice prepares four coherent states and Bob performs heterodyne measurements. The proof exploits the fact that the information used for parameter estimation consists of coarse grained probabilities of the generated continuous measurement outcomes instead of moments, as is the case in most of previous approaches of CVQKD. As shown, this hardly affects the asymptotic key rates, but significantly simplifies the security analysis, as one can employ methods originally introduced in the context of DVQKD, such as the EAT.

Despite the simplifications, the application of the EAT, in its original form~\cite{dupuis2016entropy,dupuis2019entropy}, to the considered prepare-and-measure QKD protocol is not straightforward. The challenging aspect of this has been the fact that, in order to describe the QKD protocol as a sequence of EAT channels, where Eve's reference system cannot be updated by the EAT channels, we had to use an entanglement-based version of the protocol when applying the EAT. Whereas any prepare-and-measure QKD protocol can be easily transformed to an entanglement-based protocol using the source replacement scheme, we have been faced with the issue that the minimisation defining our min-tradeoff function in eq (\ref{def:MinTrade}) can be constrained only in terms of the observed statistics from Alice and Bob's measurements in parameter estimation rounds, i.e. by a distribution of the classical output $C_1^n$. Such constraints are sufficient to obtain a nonzero key rate in DIQKD protocols, as has been considered in \cite{arnon2018practical,arnon2019simple,tan2020improved}. This is due to the fact that in device independent settings the observed statistics alone has to be sufficient to certify an entangled state between Alice and Bob, which is a prerequisite to obtain secure key. In device dependent settings, such as the one we have considered here, however, the observed statistics from Alice and Bob's parameter estimation rounds does not necessarily suffice to certify entanglement. Consequently, if we only use statistics from parameter estimation rounds, the bound on the key rate becomes trivial. We have overcome this issue by considering a hypothetical protocol in which Alice uses some randomly chosen rounds to perform a state tomography on her marginal state, the outcome of which is included in $C_1^n$. Thus, the observed statistics becomes sufficient to obtain nontrivial bounds on the key rate. 

The introduction of the hypothetical tomography poses some additional challenges in the finite size security proof. In particular there is a possibility  that the tomography test does not pass, in which case the hypothetical protocol would abort. In order to ensure that this only happens with negligible probability, we have introduced a tolerance parameter $\delta^\mathrm{tol}_\mathrm{tom}$ in Lemma \ref{lem:deltatolbound}, which has to be subtracted from our key rate. Also, in order to prove security of the physical protocol, it is necessary to show that the raw key states obtained in the hypothetical and physical protocol do not differ by too much and adapt the smoothing parameter accordingly. We have done this in Lemma \ref{lem:HypPhys}, again at the cost of a reduction of the key rate. We note that other approaches, such as using the Asymptotic Equipartition Property \cite{leverrier2015composable,lupo2022quantum,kanitschar2023finite}, provide higher finite key rates without the need to add a virtual tomography---such frameworks prove to be simpler, and the numerical key rates (via e.g. \cite[Theorem 6]{kanitschar2023finite}) can be calculated faster, but they are limited to the case of collective attacks, which is surpassed in this work.

As mentioned in the introduction, after much of the work going into this result was finished, a generalised version of the EAT has been presented in~\cite{metger2022generalised}, known as Generalised EAT (GEAT). In contrast to the original EAT, the new version allows for Eve's reference system to be updated, while also relaxing the Markov condition to a non-signalling condition. Using GEAT, it is possible to express a prepare-and-measure QKD protocol directly into a sequence of EAT channels, without the need to use an entanglement based version of the protocol, as was shown in \cite{metger2022security}. When using this new method, there is no need to introduce a hypothetical tomography, hence our Lemmas \ref{lem:deltatolbound} and \ref{lem:HypPhys} would not be needed and higher key rates may be expected. The only caveat when using GEAT is that it makes an additional assumption on Eve's attack, namely only allowing her to have one quantum system at a time. This condition can be enforced by Alice waiting for Bob to confirm he has received a state before sending the next one \cite{metger2022security}, which might not always be practical. Our method of applying the original EAT does not need this assumption. We therefore believe that our current method of using an hypothetical protocol with tomography of Alice's marginal, is of interest not only for discrete modulated CVQKD, but for proving security of device dependent QKD in settings where the condition that Eve only holds one system at a time is not practical.

Our security analysis can be improved in several directions. As mentioned, being a prepare-and-measure protocol, it is natural to consider the application of GEAT. This may not only provide larger finite-key generation rates, but also allow one to study variants of the protocol using homodyne measurements, which we were unable to accommodate within our security analysis. Another related question is to analyse using GEAT how the obtained rates vary with the number of states prepared by Alice and, in particular, how they approximate the rates of Gaussian modulated protocols. 

The derived key rates are valid under the bounded-energy assumption, stating that Eve's attack involves states of bounded energy, and a numerical-convergence assumption, stating that the numerical curves obtained for increasing number of photons are very close to the trade-off for arbitrary finite dimension. The first assumption is physically realistic and implies that the states in the protocol can be arbitrarily well approximated by states in a finite dimensional Hilbert space of large enough dimension \cite{tomamichel2017largely}. This allows the use of EAT since, whereas this theorem does not require an explicit bound on the Hilbert space dimension of Eve, all Hilbert space dimensions are assumed finite \cite{dupuis2016entropy}. It would be interesting to remove this assumption using recent advances towards a generalisation of the EAT to infinite dimensional Hilbert spaces \cite{fawzi2022asymptotic}.

The second assumption seems quite plausible in the considered setup, as Alice first prepares coherent states with a small average number of photons that are later sent through a lossy channel. Yet, it is interesting to study how to remove the cutoff in the computation of the asymptotic key rates. This has been achieved for collective attacks in the case where the information used in parameter estimation is made of moments of Bob's quadratures \cite{Upadhyaya2021,kanitschar2023finite}. The idea in \cite{Upadhyaya2021} is to introduce a cutoff parameter that depends on the expectation values obtained in parameter estimation and replace the infinite dimensional optimization by a finite dimensional one, plus a correction term, both of which depend on the cutoff parameter. Combining such an approach with the EAT, while possible in principle, is hindered by the dependence of the cutoff parameter on the observed statistics, which has to be taken into account when defining a min-tradeoff function. Namely, the correction term, which is non-affine in the cutoff parameter would have to be included in the min-tradeoff function, and the constraints of the optimisation from which we obtain our min-tradeoff function would contain non-affine terms in the cutoff parameter, greatly complicating the derivation of an affine min-tradeoff function.%but it proves to be sharply more complex in the case of a general security framework. In particular, we note that the cutoff parameter $W$  of \cite{Upadhyaya2021} depends on the expectation values obtained in parameter estimation. When combining this approach with the EAT, the dependence of the cutoff parameter on the observed statistics has to be taken into account, further complicating the derivation of an affine min-tradeoff function. %, while also providing non-affine terms in the dual objective (eq. (51) of said paper) and the correction term $\Delta(W)$. 
We leave for future work a complete analysis of how to adapt this framework, and overcome these limitations.

Besides, while presented for a specific protocol consisting of four coherent states, our security proof can be adapted to any other constellation of coherent states. It deserves further investigation to study how the key rate changes when using more states and whether and how one can approximate the rate of protocols using Gaussian modulation. It is in fact expected that, as it happens for four coherent states (see Fig~\ref{fig:Nc_Test}), the rates obtained when using coarse-grained probabilities will be very close to those obtained when using moments~\cite{lin2019asymptotic}. Moreover, our approach provides a wide framework for the security analysis of CVQKD since the EAT is naturally device independent. Thus, we build a finite size security proof from the asymptotic regime via a min-tradeoff function without making any assumptions on the attack besides the Markov condition \eqref{eq:Markov}, here trivially satisfied. This is a fact of importance, provided that the optimal attack for DM CVQKD is not known.

Finally, it is worth noting that, for excess noise $\xi\geq0.01$, both our proof and the one in~\cite{matsuura2021finite,yamano2022finite,Matsuura2023} require block sizes of the order of $10^{11}-10^{12}$ to obtain a positive key rate, which are significantly larger than what needed for DVQKD. It is an interesting open question to understand whether a different proof strategy (e.g. the GEAT) can improve this substantially or if this is an intrinsic requirement of CVQKD. Comparing our results with \cite{Matsuura2023}, for $\xi=0.01$ and $n=10^{12}$, our method provides higher rates for distances around $10$ km, although different modelling of the error correction efficiency \cite{leverrier2023information} complicates a direct comparison of the rates. 

To conclude, we provide a security proof for CVQKD protocols in which the information in parameter estimation consists of coarse-grained probabilities instead of moments, as done so far. The analysis consist of two main ingredients: (i) the application of EAT including a local tomography process to derive the finite-key rates (ii) the computation of the asymptotic key rates using the formalism of~\cite{lin2019asymptotic} for increasing number of photons. Our work therefore shows that use use of coarse-grained probabilities in parameter estimation opens new avenues to prove the security of discrete modulated CVQKD protocols, as well-established methods developed in DVQKD can be applied in a rather straightforward way without any significant impact on the obtained key rates.

\medskip

\acknowledgements{
We would like to thank Rotem Arnon-Friedman, Ian George, Shouvik Ghorai, Min-Hsiu Hsieh, Florian Kanitschar, Anthony Leverrier, Rotem Liss, Bill Munro, Gelo Noel Tabia, Enky Oudot, Stefano Pironio, Toshihiko Sasaki, Ernest Tan, Thomas Van Himbeeck and Shin-Ichiro Yamano for insightful discussions. We would also like to thank two anonymous referees, for conferences QCrypt and QIP, for their insightful comments. This work is supported by the ERC (AdG CERQUTE, grant agreement No. 834266, and StG AlgoQIP, grant agreement No. 851716), the Government of Spain (FUNQIP, NextGen Funds and Severo Ochoa CEX2019-000910-S), Fundaci\'{o} Cellex, Fundaci\'{o} Mir-Puig, Generalitat de Catalunya (CERCA and the postdoctoral fellowship programme Beatriu de Pin\'{o}s), %\deleted{. A.A. is supported by the ERC AdG CERQUTE,}
the AXA Chair in Quantum Information Science, European Union's Horizon 2020 research and innovation programme under grant agreements No. 820466  (project CiviQ), No. 101114043 (project QSNP), No. 101017733 (project Veriqtas) within the QuantERA II Programme, and No. 801370 (2019 BP 00097) within the  Marie Sklodowska-Curie Programme.%\deleted{, the postdoctoral fellowships programme Beatriu de Pin\'{o}s, funded by the Secretary of Universities and Research (Government of Catalonia) and by the Horizon 2020 programme of research and innovation of the European Union under the Marie Sklodowska-Curie grant agreement No. 801370 (2019 BP 00097). C.P. acknowledges the ERC AdG CERQUTE. O.F. acknowledges support from the European Research Council (ERC Grant AlgoQIP, Agreement No. 851716) and from the European Unionâs Horizon 2020 research and innovation programme under Grant Agreement No 101017733 within the QuantERA II Programme}.
}

\newpage
\begin{appendix}
\begin{widetext}
\section{Proof of Lemma \ref{Lemma:DimReduct}}\label{App:A}

Let $\mc{H}_{\hat{E}}$ be a Hilbert space. We begin with a pure state $\ket{\rho}_{AB\hat{E}}\in\mc{H}_{AB\hat{E}}$. Alice and Bob's measurements are then performed coherently by means of a series of isometries. Alice's measurement, as well as the random number generator determining what the round is used for, are described by

\begin{equation} \label{eq:WA}
    W^{\text{A}}_{ARX \leftarrow A} = \sum_{r=0}^2\sqrt{p_r} \left( \sum_{x_r} \sqrt{P^{x_r}_A}\otimes \ket{r}_{R}\otimes \ket{x_r}_{X} \right),% \otimes \ket{a}_{R''}
\end{equation}
where $p_0=p^\mathrm{key}, p_1=p^\mathrm{PE},p_2=p^\mathrm{tom}$, and  $P_A^{x_r}$ denotes the $x$-th POVM element applied by Alice when her random bit provides an outcome $r$
\begin{align*}
    \{P_A^{x_0}\}_{x_0} &= \{P_A^{x_1}\}_{x_1} = \{\ket{x}\bra{x}_A\}_{x=0}^3, \\
    \{P_A^{x_2}\}_{x_2} &= \{ \Gamma_A^x \}_{x=0}^{15}.
\end{align*}

Note that the POVM elements are given by a square root to preserve the isometric characteristics of the measurement. Register $R$ will announce whether the bit will be used for the generation of the key, parameter estimation or tomography, whereas $X$ stores the result of Alice's measurement. Bob will perform a heterodyne measurement, which he will later discretise according to the goal of the round. Such measurement is given by the isometry
\begin{equation}
    W^{\text{B}}_{BY\leftarrow B} = \int d^2 y \sqrt{\frac{\ket{y}\bra{y}_B}{\pi}} \otimes \ket{y}_{Y}.
\end{equation}
Where the integral is given by the fact that coherent states form a continuous basis. The classical communication of $R$ between Alice and Bob, which is wiretapped by Eve, can be expressed coherently by adding ancillary subsystems followed by CNOTs. 

\begin{equation}
V_{ [R]\leftarrow R}^\text{c1}=U^\text{CNOT}_{R:R'R''}\ket{00}_{R'R''},
\end{equation}
where $R'$ is distributed to Bob and $R''$ to Eve, and we have introduced the simplifying notation $[R]:=RR'R''$. Furthermore, $U^\text{CNOT}_{R:R'R''}$ is the unitary describing a double CNOT taking $R$ as control and $R'$ and $R''$ as targets. Now that Bob and Alice have made their public announcements, we have to apply a new isometry where Bob discretises the key,
\begin{align}
    V^{\text{K}}_{RY\hat{Z} \leftarrow RY} &= \ket{0}\bra{0}_R \otimes \sum_{z=0}^3 \sqrt{R_{Y}^z} \otimes \ket{z}_{\hat{Z}}   + \big(\ket{1}\bra{1}_R + \ket{2}\bra{2}_R \big)\otimes \mathbb{1}_{Y} \otimes \ket{\perp}_{\hat{Z}}.
\end{align}

%where $[P] := P P'' $ describes Alice's announcement that the round will be selected or not as a key round as well as the fact that Eve wiretaps this communication. On the other hand,

Here, the set $\{R_Y^z\}_{z=0}^3$ represents the region operators for the discretisation in key rounds, whose definitions are given in \eqref{eq:R0}. %On the other hand, $R_\perp$ is the region operator that covers the remainder of the phase space, where postselection is applied
%\begin{align}
%&R^Y_0=\frac{1}{\pi}\int_{\Delta_a}^{\infty}\int_{-\frac{\pi}{4}+\Delta_p}^{\frac{\pi}{4}-\Delta_p}\gamma\pro{\gamma e^{i\theta}}_Y d\theta d\gamma,\label{eq:Rt0}\\
%&R^Y_1=\frac{1}{\pi}\int_{\Delta_a}^{\infty}\int_{\frac{\pi}{4}+\Delta_p}^{\frac{3\pi}{4}-\Delta_p}\gamma\pro{\gamma e^{i\theta}}_Y d\theta d\gamma,\\
%&R^Y_2=\frac{1}{\pi}\int_{\Delta_a}^{\infty}\int_{\frac{3\pi}{4}+\Delta_p}^{\frac{5\pi}{4}-\Delta_p}\gamma\pro{\gamma e^{i\theta}}_Y d\theta d\gamma,\\
%&R^Y_3=\frac{1}{\pi}\int_{\Delta_a}^{\infty}\int_{\frac{5\pi}{4}+\Delta_p}^{\frac{7\pi}{4}-\Delta_p}\gamma\pro{\gamma e^{i\theta}}_Y d\theta d\gamma,\label{eq:Rt3} \\
%&R_Y^\perp = \mathbb{1}_{Y} -  \sum_{z=0}^3 R_{Y}^z.
%\end{align}
%
%($Y$ IS A HYPOTHETICAL INFINITE DIMENSIONAL REGISTER WHERE WE ASSUME ORTHOGONALITY, $B$ IS A COHERENT STATE SYSTEM, BUT COHERENT STATES ARE NOT ORTHOGONAL)
%
%The map $V^K$ %_{RYZ[P] \leftarrow RY}$ performs the sifting in registers $R$ and $Y$, and then creates the registers $P$ and $P''$ which describe the signals that Alice and Bob sift or discard as well as the fact that Eve wiretaps this process. 
The state that results after applying all the isometries is then given by (where we have omitted identities on systems not involved)

\begin{equation}\label{def:omega}
\ket{\omega}_{ABXY\hat{Z}[R]\hat{E}}=V^K  V^\text{c1}W^\text{B}W^\text{A}\ket{\rho}_{AB\hat{E}}.
\end{equation}

Finally, the key register is dephased by a pinching map $\mc{Z}':\hat{Z}\to\hat{Z}$, defined with the Kraus operators
\begin{align}
&Z_j=\pro{j}_{\hat{Z}} \otimes \mathbb{1},
\end{align}
for $j\in\{0,1,2,3,\perp\}$. Note that this is same definition as in \eqref{def:Z}, albeit here with the symbol $\perp$ included. We can now apply Theorem 1 from \cite{coles2012unification}, to show that %(ALL SIDE INFO TO BE PURE, SO WE CAN USE COLES!!!)
\begin{align}\label{eq:Coles}
H(\hat{Z}|R''\hat{E})_{\mc{Z}(\omega)}=D(\omega_{ABXY\hat{Z}RR'}||\mc{Z}'(\omega_{ABXY\hat{Z}RR'}))
\end{align}
Now, the r.h.s. does no longer depend on $\hat{E}$. Let us also observe that in the marginal $\omega_{ABXY\hat{Z}RR'[P]}$ registers $RR'$ have decohered due to traceout of $R''$. We can then reformulate it as
\begin{align} \label{eq:decohered}
&\omega_{ABXY\hat{Z}RR'}=p^\text{key} \pro{00}_{RR'}\otimes\omega_{ABXY\hat{Z}}^\text{key}+(1-p^\text{key})\(\pro{11}_{RR'}+\pro{22}_{RR'}\) \otimes \omega_{ABXY}^\perp \otimes\pro{\perp}_{\hat{Z}},
\end{align}
where $p^\text{key}$ denotes the probability that Alice will use the round for the generation of the key. The state \eqref{eq:decohered} has a cq structure, so that by the properties of the relative entropy on cq-states \cite{wilde2013quantum}, we can simplify \eqref{eq:Coles} by splitting the state according to the classical registers $RR'$. Moreover, the state $\omega^{\perp}_{ABXY}\otimes\pro{\perp}_{\hat{Z}}$ is invariant under the pinching, so that the relative entropy for such state is zero. The whole process adds up to the equality

\begin{align} 
&D(\omega_{ABXYRR'\hat{Z}}||\mc{Z}'(\omega_{ABXYRR'\hat{Z}}))=p^\text{key}D(\omega^\text{key}_{ABXY\hat{Z}}||\mc{Z}(\omega^\text{key}_{ABXY\hat{Z}})),\label{eq:A13}
\end{align}
where we have also substituted $\mc{Z}'$ for $\mc{Z}$ since we have removed $\perp$ from the key register $\hat{Z}$. The explicit form of the key state $\omega_{ABXY\hat{Z}}^\text{key}$ is then given by
\begin{align}
\omega_{ABXY\hat{Z}}^\text{key}&=\frac{1}{p^\text{key}}\Tr_{R''\hat{E}}\left[\bra{00}_{RR'}\omega_{ABXY[R]\hat{Z}\hat{E}}\ket{00}_{RR'}\right].%\\
%&=\mc{V}^\text{D} \circ \mc{W}^\text{B} \circ \mc{W}^\text{A,qkd}(\rho_{AB}).\label{eq:A12}
\end{align}
Following the arguments provided in Appendix A of \cite{lin2019asymptotic}, we can further simplify (\ref{eq:A13}). First of all, the reduction of the state \eqref{def:omega} according to the properties of the relative entropy for cq-states has suppressed the sum over $r$ at \eqref{eq:WA}, leaving only the term related to the key generation, namely $r=0$. Hence, Alice's operator for the key state is given by
\begin{equation}
    W'^{\text{A}}_{AX \leftarrow A} = \sum_{x=0}^3 \pro{x}_A \otimes \ket{x}_X,
\end{equation}
where we used the fact that Alice's POVM elements in $A$ are projectors, so that the square root can be removed. This operator now merely copies and projects the information stored in $A$ to the new register $X$, which effectively represents an isometry that is invariant under the pinching (since both registers are not related to the key register $\hat{Z}$). Hence we can simplify this isometry by removing $X$, and the final operator for Alice will be a mere identity in $A$.

As for the key rounds, Bob only needs to obtain the discretised key variable $\hat{Z}$. Thus, he can group the POVM elements corresponding to a particular value of $\hat{Z}$, forming a \emph{coarse grained} POVM $\{R^i_B\}_{i=0}^3$ that acts directly on register $B$, and is given by the region operators defined in \eqref{eq:R0}. Hence, register $Y$ is not necessary and Bob's measurement and discretisation will thus be given by

\begin{equation}
W'^{\text{B}}_{B\hat{Z} \leftarrow B}=\sum_{z=0}^3\sqrt{R^z_B}\otimes\ket{z}_{\hat{Z}}.
\end{equation}
%The remaining information contained in $Y$ (here included for completeness) can be obtained with a \emph{fine grained} heterodyne measurement, stored in $Y$
%\begin{equation}
%W_{B\hat{Z}\to B\hat{Z}Y}^\text{BF}=\sum_{z\in\{0,...,3,\perp\}}\pro{z}_{\hat{Z}}\otimes\int_{\mc{R}^z}dy\sqrt{\frac{\pro{y}_B}{\pi}}\otimes\ket{y}_Y,
%\end{equation}
%where $\mc{R}^i\subset\mathbb{C}$ are the regions of the complex plain corresponding to region operators $R^i$.
Now, the simplified maps for Alice and Bob are combined to provide the CP map $\mc{G}:AB\to AB\hat{Z}$  that represents the postprocessing, which as shown in \eqref{def:G} is given by the superoperator
\begin{equation}
    G = W'^{\text{A}} \otimes W'^{\text{B}} =  \mathbb{1}_A \otimes \sum_{z=0}^3 \sqrt{R_{B}^z} \otimes \ket{z}_{\hat{Z}}.
\end{equation}
We can now conclude with the redefinition of the relative entropy at \eqref{eq:A13} in terms of the postprocessing map,

\begin{equation}\label{eq:A21}
D(\omega^\text{key}_{ABXY\hat{Z}}||\mc{Z}(\omega^\text{key}_{ABXY\hat{Z}}))=D(\mc{G}(\rho_{AB})||\mc{Z}(\mc{G}(\rho_{AB}))).
\end{equation}

By definition, register $R''$ is identical to register $S$ in (\ref{eq:MinTradeOff1}) and for any $\mc{H}_{\hat E}$ and any $\ket{\rho}_{AB\hat{E}}\in{\mc{H}_{AB\hat{E}}}$ it holds $\mc{Z}(\omega)_{\hat{Z}S\hat{E}}=\(\id_{\hat{E}}\otimes\mc{M}^\text{EAT}(\rho)\)_{\hat{Z}S\hat{E}}$, for $\omega$ defined as in (\ref{def:omega}). Combining the equations (\ref{eq:Coles},\ref{eq:A13},\ref{eq:A21}), we obtain that for all $\tilde{p}\in\mc{P}_{\tilde{\mc{C}}}$,
\begin{align}
g(\tilde p)&=\inf_{\rho\in\Sigma(\tilde p)}H(\hat{Z}|R''\hat{E})_{\mc{Z}(\omega)}=p^\text{key}\inf_{\rho\in\Sigma(\tilde p)}D(\mc{G}(\rho_{AB})||\mc{Z}(\mc{G}(\rho_{AB}))),
\end{align}
which finishes the proof.
%%%%%%%%%%%%%%%%%%%%%%%%%%%%%%%%%%%

\section{Proof of Lemma \ref{lemma:conv}}\label{App:conv}
%\begin{proof}
%It suffices to show convexity of $\tilde{g}(p):=g(p)/p^\text{key}$. 
Let $p,q\in\mc{P}_{\tilde{\mc{C}}}$, $0\leq\lambda\leq1$. Without loss of generality we can assume that $\Sigma(p)$ is not empty. Then there exist states ${\rho}_{AB}\in\Sigma( p)$ and ${\tau}_{AB}\in\Sigma(q)$ such that 
\begin{align}
&p^\mathrm{key}D(\mc{G}(\rho_{AB})||\mc{Z}(\mc{G}(\rho_{AB})))=g(p),\\
&p^\mathrm{key}D(\mc{G}(\tau_{AB})||\mc{Z}(\mc{G}(\tau_{AB})))=g(q).
\end{align}
Let us now consider the flag state
\begin{equation}
\omega_{ABF}=\lambda\rho_{AB}\otimes\pro{0}_F+(1-\lambda)\tau_{AB}\otimes\pro{1}_F.
\end{equation}
%which is defined on the Hilbert space $\mc{H}_{AB[\hat{E}\oplus\hat{E}']F}:=\mc{H}_{AB}\otimes\left[\mc{H}_{\hat{E}}\oplus\mc{H}_{\hat{E}'}\right]\otimes\mc{H}_F$. 
It then holds \cite{wilde2013quantum}
\begin{align}
p^\mathrm{key}D(\mc{G}\otimes\id_{F}(\omega_{ABF})||\mc{Z}\circ\mc{G}\otimes\id_{F}(\omega_{ABF}))&=\lambda p^\mathrm{key}D(\mc{G}(\rho_{AB})||\mc{Z}(\mc{G}(\rho_{AB})))\nonumber \\
&\quad +(1-\lambda)p^\mathrm{key}D(\mc{G}(\tau_{AB})||\mc{Z}(\mc{G}(\tau_{AB})))\nonumber\\
&=\lambda g(p)+ (1-\lambda) g(q),\label{eq:43a}
\end{align}
As tracing out the flag system $F$ cannot increase the relative entropy, it holds
\begin{align}
&D(\mc{G}(\omega_{AB})||\mc{Z}(\mc{G}(\omega_{AB})))\leq D(\mc{G}\otimes\id_{F}(\omega_{ABF})||\mc{Z}\circ\mc{G}\otimes\id_{F}(\omega_{ABF})).\label{eq:44a}
\end{align}
Let now $c\in\tilde{\mc{C}}$. It then holds
\begin{align}
\bra{c}\Tr_{OS}\left[\mc{M}^\text{EAT,test}(\omega_{AB})\right]\ket{c}&=\lambda\bra{c}\Tr_{OS}\left[\mc{M}^\text{EAT,test}(\rho_{AB})\right]\ket{c}+(1-\lambda)\bra{c}\Tr_{OS}\left[\mc{M}^\text{EAT,test}(\tau_{AB})\right]\ket{c}\nonumber\\
&=\lambda p(c)+(1-\lambda) q(c).
\end{align}
This implies that ${\omega}_{AB}\in\Sigma\(\lambda p+(1-\lambda) q\)$. By definition of $g$, and eqs. (\ref{eq:44a}) and (\ref{eq:43a}), it then holds
\begin{align}
g(\lambda p+(1-\lambda) q)&\leq p^\mathrm{key} D(\mc{G}(\omega_{AB})||\mc{Z}(\mc{G}(\omega_{AB})))\nonumber\\
&\leq p^\mathrm{key} D(\mc{G}\otimes\id_{F}(\omega_{ABF})||\mc{Z}\circ\mc{G}\otimes\id_{F}(\omega_{ABF}))\nonumber\\
&= \lambda g(p)+ (1-\lambda) g(q),
\end{align}
finishing the proof.
%\end{proof}

%%%%%%%%%%%%%%%%%%%%%%%

\section{Upper bounding the classical smooth max entropy}\label{App:B}

Let $n\in\mathbb{N}$, and for $1=1,...,n$, let $Y_i$ be a binary classical random variable such that $P_{Y_i}(1)=p$ and $P_{Y_i}(0)=1-p$. Further, define classical random variable $X_i$ such that $X_i=\perp$ if $Y_i=0$.
%\begin{equation}\label{eq:B1}
%X_i=\begin{cases}
%&\in\mathcal{X}\text{ if }Y_i=1\\
%&\perp\text{ if }Y_i=0,
%\end{cases}
%\end{equation}
Otherwise the values are chosen from an alphabet $\mathcal{X}$ such that $|\mathcal{X}\cup\{\perp\}|=d$.  We use the operator representation to describe the joint state as
\begin{align}
\rho_{X_1^nY_1^n}&=\sum_{x_1,...,x_n\in \mathcal{X}\cup\{\perp\}}\sum_{y_1,...,y_n=0}^1P_{X_1^nY_1^n}(x_1,...,x_n,y_1,...,y_n) \nonumber \\
&\quad \times \pro{x_1,...,x_n}_{X_1^n}\otimes\pro{y_1,...,y_n}_{Y_1^n},\label{eq:state}
\end{align}
etc. 
\begin{lemma}
For any $\epsilon>0$ it holds
\begin{equation}
H^\epsilon_{\max}(X_1^n|Y_1^n)_\rho\leq np\log{d} +\sqrt{\frac{n}{2}\ln{\frac{2}{\epsilon^{2}}}}\log{d}.
\end{equation}
\end{lemma}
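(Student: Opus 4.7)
The bound follows from a standard smoothing-plus-typicality argument: the random variable $T := \sum_{i=1}^n Y_i$ concentrates tightly around its mean $np$, and conditioned on $Y_1^n = y_1^n$ with $\sum_i y_i = t$, the support of $X_1^n$ has size at most $d^t$ because $X_i = \perp$ whenever $Y_i = 0$. The plan is therefore to smooth $\rho$ by truncating to the typical set of $Y_1^n$ values and bound $H_{\max}$ on the truncated state using this support argument.

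First I would apply Hoeffding's inequality to $T$, which is a sum of $n$ iid Bernoulli($p$) variables, to get $\Pr[T \geq np + \delta] \leq \exp(-2\delta^2 / n)$. Choosing $\delta := \sqrt{(n/2)\ln(2/\epsilon^2)}$ gives $\Pr[T > np + \delta] \leq \epsilon^2 / 2$. Let $G := \{y_1^n \in \{0,1\}^n : \sum_i y_i \leq np + \delta\}$ and let $\Pi_G := \sum_{y_1^n \in G} \pro{y_1^n}_{Y_1^n}$. The smoothed state is the classical subnormalized state $\bar\rho := \Pi_G\, \rho\, \Pi_G \in \mathcal{D}_{\leq}(\mathcal{H}_{X_1^nY_1^n})$. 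Since $\rho - \bar\rho$ is a positive classical diagonal operator supported on $G^c$, we have $\Delta(\rho,\bar\rho) = 1 - \Tr\bar\rho \leq \epsilon^2/2$, and the Fuchs–van de Graaf inequality~\eqref{Fuchs} then gives $P(\rho,\bar\rho) \leq \sqrt{2\Delta(\rho,\bar\rho)} \leq \epsilon$, so $\bar\rho \in \mathcal{B}^\epsilon(\rho)$.

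Next I would bound $H_{\max}(X_1^n|Y_1^n)_{\bar\rho}$ using the variational definition
\begin{equation*}
H_{\max}(X_1^n|Y_1^n)_{\bar\rho} = \max_{\sigma_{Y_1^n} \in \mathcal{D}_{\leq}} \log F(\bar\rho,\mathbb{1}_{X_1^n} \otimes \sigma_{Y_1^n}).
\end{equation*}
For the classical-classical $\bar\rho = \sum_{y \in G, x} P(x,y) \pro{xy}$, the fidelity maximization over $\sigma_{Y_1^n}$ reduces to a Cauchy–Schwarz problem whose optimum is $\sum_{y \in G} (\sum_x \sqrt{P(x,y)})^2$ (plus a term of order $1 - \Tr\bar\rho \leq \epsilon^2/2$ from the subnormalized part of the generalized fidelity, which is negligible). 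A second application of Cauchy–Schwarz gives $(\sum_x \sqrt{P(x,y)})^2 \leq |\mathrm{supp}(P(\cdot|y))| \cdot P(y)$, and since $X_i = \perp$ whenever $Y_i = 0$, the conditional support has size at most $d^{\sum_i y_i} \leq d^{np+\delta}$ for any $y \in G$. Summing yields $\sum_{y \in G}(\sum_x \sqrt{P(x,y)})^2 \leq d^{np + \delta}$, hence $H_{\max}(X_1^n|Y_1^n)_{\bar\rho} \leq (np + \delta)\log d$.

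Combining with the definition of the smooth max-entropy, $H_{\max}^{\epsilon}(X_1^n|Y_1^n)_\rho \leq H_{\max}(X_1^n|Y_1^n)_{\bar\rho}$, gives the stated bound. The only subtle point is the generalized-fidelity subnormalization correction in the third step, which contributes at most $O(\epsilon^2)$ and is absorbed/dropped in the leading-order expression of the lemma; otherwise the argument is elementary, with Hoeffding concentration providing the concentration scale $\sqrt{(n/2)\ln(2/\epsilon^2)}$ and the conditional-support bound providing the factor $\log d$.
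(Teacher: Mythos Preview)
Your proposal is correct and follows essentially the same approach as the paper: truncate $\rho$ to the Hoeffding-typical set $\{\sum_i y_i \leq np+\delta\}$, verify the truncated state lies in the $\epsilon$-ball (the paper uses Lemma~3.17 of \cite{tomamichel2015quantum} where you use Fuchs--van~de~Graaf, both yielding $P\leq\sqrt{2\kappa}$), and bound $H_{\max}$ of the truncated state by $\log d^{\lfloor n(p+\delta)\rfloor}$ via the conditional support size. Your worry about a subnormalization correction in the generalized fidelity is unnecessary, since $\mathbb{1}_{X_1^n}\otimes\sigma_{Y_1^n}$ is not subnormalized and no such term enters.
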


\begin{proof}
%Reordering the subsystems, we can express the state as

%\begin{equation}
%\rho_{X_1^nY_1^n}=\sum_{k=0}^np^k(1-p)^{n-k}\binom{n}{k}\(\pro{\perp}_X\otimes\pro{0}_Y\)^{\otimes n-k}\otimes\rho_{X^k}\otimes\pro{1}_Y^{\otimes k},
%\end{equation}

%where $\rho_{X^k}$ is a (normalised) classical state of the subsystems $X_i$ such that $Y_i=1$.

%\OF{I understand what you mean but I believe the operator notation is not the best, each one of these terms is not really one operator but a sum of operators with the same probability} 

Let $\epsilon>0$ and define $\delta:=\left(\frac{\ln2-{2}\ln\epsilon}{2n}\right)^\frac{1}{2}$. We can divide the sum in eq. (\ref{eq:state}) into a part with up to $\lfloor n(p+\delta)\rfloor$ terms with $Y_i=1$, hence non-trivial $X_i$, and a part with more than $\lfloor n(p+\delta)\rfloor$ such terms, $\rho_{X_1^nY_1^n} = \rho'_{X_1^n Y_1^n} +  \rho''_{X_1^n Y_1^n}$, where
\begin{align}
\rho'_{X_1^n Y_1^n} &= \sum_{x_1,...,x_n\in \mathcal{X}\cup\{\perp\}}\sum_{\substack{y_1,...,y_n \in \{0,1\}^n \\ \sum_{i} y_i \leq \lfloor n(p+\delta) \rfloor }} P_{X_1^nY_1^n}(x_1,...,x_n,y_1,...,y_n) \nonumber \\
&\quad \times \pro{x_1,...,x_n}_{X_1^n}\otimes\pro{y_1,...,y_n}_{Y_1^n},\\
\rho''_{X_1^n Y_1^n} &= \sum_{x_1,...,x_n\in \mathcal{X}\cup\{\perp\}}\sum_{\substack{y_1,...,y_n \in \{0,1\}^n \\ \sum_{i} y_i > \lfloor n(p+\delta) \rfloor }} P_{X_1^nY_1^n}(x_1,...,x_n,y_1,...,y_n)\nonumber \\
&\quad \times \pro{x_1,...,x_n}_{X_1^n}\otimes\pro{y_1,...,y_n}_{Y_1^n},
\end{align}

Let us define
\begin{align}
\kappa&:=\Tr\left[\rho''_{X_1^nY_1^n}\right]=\sum_{k=\lfloor n(p+\delta)\rfloor+1}^np^k(1-p)^{n-k}\binom{n}{k}, %=P(k> n(p+\delta)),
\end{align}
%and note that $\Tr\left[\rho'_{X^{\lfloor n(p+\delta)\rfloor}Y^{\lfloor n(p+\delta)\rfloor}}\right]=1-\kappa$. 
and note that by Hoeffding's inequality, it holds $\kappa\leq e^{-2n\delta^2}\leq\frac{\epsilon^2}{2}$.
%\OF{Would take $\bar{\rho} = \rho'$, only need to check that purified distance is $\leq \epsilon$}
%\begin{align}
%\bar{\rho}_{X_1^nY_1^n}&=\(\pro{\perp}_X\otimes\pro{0}_Y\)^{\otimes n-\lfloor n(p+\delta)\rfloor}\otimes\rho'_{X^{\lfloor n(p+\delta)\rfloor}Y^{\lfloor n(p+\delta)\rfloor}}+\kappa\(\pro{\perp}_X\otimes\pro{0}_Y\)^{\otimes n}\\
%&=\(\pro{\perp}_X\otimes\pro{0}_Y\)^{\otimes n-\lfloor n(p+\delta)\rfloor}\otimes\left[\rho'_{X^{\lfloor n(p+\delta)\rfloor}Y^{\lfloor n(p+\delta)\rfloor}}+\kappa\(\pro{\perp}_X\otimes\pro{0}_Y\)^{\otimes \lfloor n(p+\delta)\rfloor}\right]\label{eq:10a}
%\end{align}
By \cite{tomamichel2015quantum}, Lemma 3.17, it then holds for the purified distance
\begin{equation}
    P(\rho_{X_1^nY_1^n},\rho'_{X_1^nY_1^n})\leq\sqrt{\left\|\rho_{X_1^nY_1^n}-\rho'_{X_1^nY_1^n}\right\|_1+\Tr\(\rho_{X_1^nY_1^n}-\rho'_{X_1^nY_1^n}\)}=\sqrt{2\kappa}\leq\epsilon
 \end{equation}
%\begin{align}
%&\left\|\rho_{X_1^nY_1^n}-\rho'_{X_1^nY_1^n}\right\|_1=\left\|\rho''_{X_1^nY_1^n}-\kappa\(\pro{\perp}_X\otimes\pro{0}_Y\)^{\otimes n}\right\|_1%
%\leq 2\kappa\leq\epsilon,
%\end{align}
hence $\rho'$ is in the $\epsilon$-ball around $\rho$. Consequently, as only the non trivial $X_i$ contribute to the max entropy, it holds
\begin{align}
H^\epsilon_{\max}(X_1^n|Y_1^n)_\rho
&\leq H_{\max}(X_1^n|Y_1^n)_{\rho'}\label{eq:11a}%\\
%&=H_{\max}(X^{\lfloor n(p+\delta)\rfloor}|Y^{\lfloor n(p+\delta)\rfloor})_{\rho'+\kappa\(\pro{\perp}\otimes\pro{0}\)^{\otimes \lfloor n(p+\delta)\rfloor}}\label{eq:12a}\\
%&\leq\log\max_{\(y_1,...,y_{\lfloor n(p+\delta)\rfloor}\)\in\{0,1\}^{\lfloor n(p+\delta)\rfloor}}\left|\supp\(P^{\(y_1,...,y_{\lfloor n(p+\delta)\rfloor}\)}_{X^{\lfloor n(p+\delta)\rfloor}}\)\right|\label{eq:13}\\
%&=\log\left|\supp\(P^{\(1,...,1\)}_{X^{\lfloor n(p+\delta)\rfloor}}\)\right|\\
\leq \log d^{\lfloor n(p+\delta)\rfloor}.%\label{eq:14a}%\\
%\leq np\log{d} +\sqrt{\frac{n}{2}\ln{\frac{2}{\epsilon^2}}}\log{d},
\end{align}
Inserting our choice for $\delta$ completes the proof.
%where eq. (\ref{eq:11a}) is due to the definition of the smooth max-entropy. Eq. (\ref{eq:12a}) follows from eq. (\ref{eq:10a}) because the conditional max-entropy is additive on tensor products and $H_{\max}(X|Y)_{\pro{\perp}\otimes\pro{0}}=0$. %Eq. (\ref{eq:13}) is due to \cite{renner2005security}, Remark 3.1.4, using the notation $P^{\(y_1,...,y_{\lfloor n(p+\delta)\rfloor}\)}_{X^{\lfloor n(p+\delta)\rfloor}}$ for a function
%\begin{equation}
%\(x_1,...,x_{\lfloor n(p+\delta)\rfloor}\)\to P_{X^{\lfloor n(p+\delta)\rfloor}Y^{\lfloor n(p+\delta)\rfloor}}(x_1,...,x_{\lfloor n(p+\delta)\rfloor},y_1,...,y_{\lfloor n(p+\delta)\rfloor}).
%\end{equation}
%Eq. (\ref{eq:14}) is due to the fact that whenever $y_i=0$, it holds $x_i=\perp$.

\end{proof}

Now, let's add conditioning on an event $\Omega$ that occurs with probability $p_\Omega>0$.  We can express the state (\ref{eq:state}) as $\rho_{X_1^nY_1^n}=\Pr[\Omega] \rho_{X_1^nY_1^n}|_\Omega+(1-\Pr[\Omega])\rho_{X_1^nY_1^n}|_{\lnot\Omega}$.

\begin{lemma}\label{lem:Omar}
For any $\epsilon>0$ and $0<p_\Omega\leq1$ it holds
\begin{equation}
H^{\epsilon}_{\max}(X_1^n|Y_1^n)_{\rho|_\Omega}\leq np\log{d} +\sqrt{\frac{n}{2}\ln{\frac{2}{\epsilon^2 \Pr[\Omega]}}}\log{d}.
\end{equation}
\end{lemma}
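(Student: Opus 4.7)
The plan is to adapt the argument from the preceding (unconditioned) lemma, whose proof already appears in Appendix \ref{App:B}, by carefully tracking how conditioning on $\Omega$ affects the ``bad'' probability estimate. The key structural observation is that the conditional state satisfies the operator inequality $\Pr[\Omega]\,\rho_{X_1^n Y_1^n}|_\Omega \leq \rho_{X_1^n Y_1^n}$, since $\rho = \Pr[\Omega]\rho|_\Omega + (1-\Pr[\Omega])\rho|_{\neg\Omega}$ with $\rho|_{\neg\Omega}\geq 0$. This will allow me to import the Hoeffding bound from the unconditioned setting at the cost of a factor $1/\Pr[\Omega]$.

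Concretely, I would define $\Pi = \sum_{y_1^n:\sum_i y_i \leq \lfloor n(p+\delta)\rfloor}\pro{y_1^n}_{Y_1^n}$ and $\Pi^\perp = \mathbb{1}-\Pi$, then set $\rho'|_\Omega := \Pi\,\rho|_\Omega\,\Pi$ and $\rho''|_\Omega := \Pi^\perp \rho|_\Omega \Pi^\perp$. Since $\rho|_\Omega$ is diagonal in the $X_1^n Y_1^n$ basis, $\rho|_\Omega = \rho'|_\Omega + \rho''|_\Omega$. Using the operator inequality together with Hoeffding's inequality applied to the $n$ independent trials giving $\rho$, I obtain
\begin{equation}
\Tr[\rho''|_\Omega] = \frac{\Tr[\Pi^\perp \cdot \Pr[\Omega]\rho|_\Omega]}{\Pr[\Omega]} \leq \frac{\Tr[\Pi^\perp \rho]}{\Pr[\Omega]} \leq \frac{e^{-2n\delta^2}}{\Pr[\Omega]}.
\end{equation}
Choosing $\delta := \sqrt{\frac{1}{2n}\ln\frac{2}{\epsilon^2\Pr[\Omega]}}$ makes this at most $\epsilon^2/2$.

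With this choice, the same purified-distance estimate used in the preceding lemma, namely $P(\rho|_\Omega,\rho'|_\Omega)\leq \sqrt{\|\rho|_\Omega-\rho'|_\Omega\|_1 + \Tr[\rho|_\Omega-\rho'|_\Omega]} = \sqrt{2\Tr[\rho''|_\Omega]} \leq \epsilon$, shows that $\rho'|_\Omega$ lies in the $\epsilon$-ball around $\rho|_\Omega$. Hence
\begin{equation}
H^\epsilon_{\max}(X_1^n|Y_1^n)_{\rho|_\Omega} \leq H_{\max}(X_1^n|Y_1^n)_{\rho'|_\Omega} \leq \lfloor n(p+\delta)\rfloor \log d,
\end{equation}
where the last inequality uses that, in the support of $\rho'|_\Omega$, conditioned on any $y_1^n$ at most $\lfloor n(p+\delta)\rfloor$ coordinates of $X_1^n$ can differ from $\perp$, leaving at most $d^{\lfloor n(p+\delta)\rfloor}$ possible values. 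Substituting the chosen $\delta$ yields the claimed bound.

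The only mildly nontrivial point will be justifying the first step, namely that one may derive a concentration estimate for the conditional state from the iid Hoeffding bound on $\rho$; the inequality $\Pr[\Omega]\rho|_\Omega\leq \rho$ handles this cleanly and is the crux of the argument. Everything else is a direct transcription of the unconditioned proof, so I anticipate no further obstacles.
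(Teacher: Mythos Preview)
Your proposal is correct and follows essentially the same approach as the paper's proof: both split the conditional state according to whether $\sum_i y_i\leq\lfloor n(p+\delta)\rfloor$, bound the ``bad'' mass via Hoeffding on the unconditioned state divided by $\Pr[\Omega]$, invoke Lemma~3.17 of \cite{tomamichel2015quantum} for the purified distance, and then bound the max-entropy by the support count. The only cosmetic difference is that you phrase the key step as the operator inequality $\Pr[\Omega]\,\rho|_\Omega\leq\rho$, whereas the paper writes out the equivalent probability inequality $\Pr(A\cap\Omega)\leq\Pr(A)$ directly.
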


\begin{proof}

%Reordering the subsystems, we can express the $\rho_{X_1^nY_1^n}|_\Omega$ as

%\begin{equation}
%\rho_{X_1^nY_1^n}|_\Omega=\sum_{k=0}^nP\(|\{i:Y_i=1\}|=k|\Omega\)\(\pro{\perp}_X\otimes\pro{0}_Y\)^{\otimes n-k}\otimes\rho_{X^k}|_\Omega\otimes\pro{1}_Y^{\otimes k}
%\end{equation}

%where $\rho_{X^k}|_\Omega$ is a (normalised) classical state of the subsystems $X_i$ such that $Y_i=1$. Let now 
Let $\epsilon>0$ and define $\delta:=\(\frac{\ln2-\ln p_\Omega-2\ln\epsilon}{2n}\)^\frac{1}{2}$. Again, we divide $\rho_{X_1^nY_1^n}|_\Omega = \rho'_{X_1^n Y_1^n}|_\Omega +  \rho''_{X_1^n Y_1^n}|_\Omega$, where
\begin{align}
\rho'_{X_1^n Y_1^n}|_\Omega &= \sum_{x_1,...,x_n\in \mathcal{X}\cup\{\perp\}}\sum_{\substack{y_1,...,y_n \in \{0,1\}^n \\ \sum_{i} y_i \leq \lfloor n(p+\delta) \rfloor }} P_{X_1^nY_1^n}(x_1,...,x_n,y_1,...,y_n|\Omega)\nonumber \\
&\quad\times \pro{x_1,...,x_n}_{X_1^n}\otimes\pro{y_1,...,y_n}_{Y_1^n},\\
\rho''_{X_1^n Y_1^n}|_\Omega &= \sum_{x_1,...,x_n\in \mathcal{X}\cup\{\perp\}}\sum_{\substack{y_1,...,y_n \in \{0,1\}^n \\ \sum_{i} y_i > \lfloor n(p+\delta) \rfloor }} P_{X_1^nY_1^n}(x_1,...,x_n,y_1,...,y_n|\Omega)\nonumber \\
&\quad\times\pro{x_1,...,x_n}_{X_1^n}\otimes\pro{y_1,...,y_n}_{Y_1^n},
\end{align}

%\begin{align}
%&\rho_{X_1^nY_1^n}|_\Omega\\
%&=\(\pro{\perp}_X\otimes\pro{0}_Y\)^{\otimes n-\lfloor n(p+\delta)\rfloor}\otimes\underbrace{\sum_{k=0}^{\lfloor n(p+\delta)\rfloor}P\(|\{i:Y_i=1\}|=k|\Omega\)\(\pro{\perp}_X\otimes\pro{0}_Y\)^{\otimes \lfloor n(p+\delta)\rfloor-k}\otimes\rho_{X^k}|_\Omega\otimes\pro{1}_Y^{\otimes k}}_{=:\rho'_{X^{\lfloor n(p+\delta)\rfloor}Y^{\lfloor n(p+\delta)\rfloor}}|_\Omega}\nonumber\\
%&+\underbrace{\sum_{k=\lfloor n(p+\delta)\rfloor+1}^nP\(|\{i:Y_i=1\}|=k|\Omega\)\(\pro{\perp}_X\otimes\pro{0}_Y\)^{\otimes n-k}\otimes\rho_{X^k}|_\Omega\otimes\pro{1}_Y^{\otimes k}}_{=:\rho''_{X_1^nY_1^n}|_\Omega}\\
%&=\(\pro{\perp}_X\otimes\pro{0}_Y\)^{\otimes n-\lfloor n(p+\delta)\rfloor}\otimes\rho'_{X^{\lfloor n(p+\delta)\rfloor}Y^{\lfloor n(p+\delta)\rfloor}}|_\Omega+\rho''_{X_1^nY_1^n}|_\Omega
%\end{align}

Let us define

\begin{align}
\kappa:=&\Tr\left[\rho''_{X_1^nY_1^n}|_\Omega\right]\\
&=\sum_{k=\lfloor n(p+\delta)\rfloor+1}^n \Pr\(|\{i:Y_i=1\}|=k|\Omega\)\\
&=\frac{1}{\Pr[\Omega]}\sum_{k=\lfloor n(p+\delta)\rfloor+1}^n \Pr\(|\{i:Y_i=1\}|=k\cap\Omega\)\\
&\leq\frac{1}{\Pr[\Omega]}\sum_{k=\lfloor n(p+\delta)\rfloor+1}^np^k(1-p)^{n-k}\binom{n}{k}\\
&=\frac{1}{\Pr[\Omega]} \Pr[k> n(p+\delta)].
\end{align}
%and note that $\Tr\left[\rho'_{X^{\lfloor n(p+\delta)\rfloor}Y^{\lfloor n(p+\delta)\rfloor}}|_\Omega\right]=1-\kappa$. 
By Hoeffding's inequality, it holds $\kappa\leq\frac{e^{-2n\delta^2}}{p_\Omega} \leq\frac{\epsilon^2}{2}$. %Let us now define a state
%\begin{align}
%\bar{\rho}_{X_1^nY_1^n}|_\Omega&=\(\pro{\perp}_X\otimes\pro{0}_Y\)^{\otimes n-\lfloor n(p+\delta)\rfloor}\otimes\rho'_{X^{\lfloor n(p+\delta)\rfloor}Y^{\lfloor n(p+\delta)\rfloor}}|_\Omega+\kappa\(\pro{\perp}_X\otimes\pro{0}_Y\)^{\otimes n}\\
%&=\(\pro{\perp}_X\otimes\pro{0}_Y\)^{\otimes n-\lfloor n(p+\delta)\rfloor}\otimes\left[\rho'_{X^{\lfloor n(p+\delta)\rfloor}Y^{\lfloor n(p+\delta)\rfloor}}|_\Omega+\kappa\(\pro{\perp}_X\otimes\pro{0}_Y\)^{\otimes \lfloor n(p+\delta)\rfloor}\right]
%\end{align}
By \cite{tomamichel2015quantum}, Lemma 3.17, it then holds for the purified distance
\begin{equation}
    \Pr(\rho_{X_1^nY_1^n}|_\Omega,\rho'_{X_1^nY_1^n}|_\Omega)\leq\sqrt{\left\|\rho_{X_1^nY_1^n}|_\Omega-\rho'_{X_1^nY_1^n}|_\Omega\right\|_1+\Tr\(\rho_{X_1^nY_1^n}|_\Omega-\rho'_{X_1^nY_1^n}|_\Omega\)}=\sqrt{2\kappa}\leq\epsilon
 \end{equation}
%\begin{align}
%&\left\|\rho_{X_1^nY_1^n}-\rho'_{X_1^nY_1^n}\right\|_1=\left\|\rho''_{X_1^nY_1^n}-\kappa\(\pro{\perp}_X\otimes\pro{0}_Y\)^{\otimes n}\right\|_1%
%\leq 2\kappa\leq\epsilon,
%\end{align}
hence $\rho'|_\Omega$ is in the $\epsilon$-ball around $\rho|_\Omega$. Consequently, as only the non-trivial $X_i$ contribute to the max entropy, it holds
\begin{align}
H^\epsilon_{\max}(X_1^n|Y_1^n)_{\rho|_\Omega}
&\leq H_{\max}(X_1^n|Y_1^n)_{\rho'|_\Omega}%\label{eq:11a}%\\
%&=H_{\max}(X^{\lfloor n(p+\delta)\rfloor}|Y^{\lfloor n(p+\delta)\rfloor})_{\rho'+\kappa\(\pro{\perp}\otimes\pro{0}\)^{\otimes \lfloor n(p+\delta)\rfloor}}\label{eq:12a}\\
%&\leq\log\max_{\(y_1,...,y_{\lfloor n(p+\delta)\rfloor}\)\in\{0,1\}^{\lfloor n(p+\delta)\rfloor}}\left|\supp\(P^{\(y_1,...,y_{\lfloor n(p+\delta)\rfloor}\)}_{X^{\lfloor n(p+\delta)\rfloor}}\)\right|\label{eq:13}\\
%&=\log\left|\supp\(P^{\(1,...,1\)}_{X^{\lfloor n(p+\delta)\rfloor}}\)\right|\\
\leq \log d^{\lfloor n(p+\delta)\rfloor}.%\label{eq:14a}%\\
%\leq np\log{d} +\sqrt{\frac{n}{2}\ln{\frac{2}{\epsilon^2}}}\log{d},
\end{align}
Inserting our choice for $\delta$ completes the proof.

%\begin{align}
%&\left\|\rho_{X_1^nY_1^n}|_\Omega-\bar{\rho}_{X_1^nY_1^n}|_\Omega\right\|_1=\left\|\rho''_{X_1^nY_1^n}|_\Omega-\kappa\(\pro{\perp}_X\otimes\pro{0}_Y\)^{\otimes n}\right\|_1\leq 2\kappa\leq\epsilon,
%\end{align}

%hence $\bar{\rho}|_\Omega$ is in the $\epsilon$-ball around $\rho|_\Omega$. Consequently,

%\begin{align}
%H^{\epsilon}_{\max}(X_1^n|Y_1^n)_{\rho|_\Omega}
%&\leq H_{\max}(X_1^n|Y_1^n)_{\bar{\rho}|_\Omega}\\
%&=H_{\max}(X^{\lfloor n(p+\delta)\rfloor}|Y^{\lfloor n(p+\delta)\rfloor})_{\rho'|_\Omega+\kappa\(\pro{\perp}\otimes\pro{0}\)^{\otimes \lfloor n(p+\delta)\rfloor}}\\
%&\leq \log d^{\lfloor n(p+\delta)\rfloor}\\
%&\leq np\log{d} +\sqrt{\frac{n}{2}\ln{\frac{2}{\epsilon p_\Omega}}}\log{d},
%\end{align}

\end{proof}

\section{Perturbative analysis for finite-key distillation} \label{App:D}

The framework presented in \eqref{sec:MinTrade} provides a method to derive finite secret key rates via Frank-Wolfe and the dual problem \eqref{eq:SDP3a} using a min-tradeoff function. This technique has the drawback that  the dual variables appear in the correction terms of the finite key rate \eqref{eq:FiniteKeyRate}, and thus these corrections are not directly optimised. Eventually, this poses a problem that harms the performance of our methodology, especially for small numbers of rounds $n$.  We overcome this obstacle by employing a perturbative analysis via genetic algorithms---for the original dual objective \eqref{eq:SDP3a}, we apply a modification

\begin{equation} \label{eq:PerturbedDual}
    \ell^0_{\tilde p_0,\varepsilon'}(\vec{\nu}, \vec{\mu}) \rightarrow \ell^0_{\tilde p_0,\varepsilon'}(\vec{\nu},\vec{\mu}) -  \zeta_{\tilde p_0}(\vec{\kappa})
\end{equation}
given by a perturbative term
\begin{equation}
    \zeta_{\tilde p_0}(\vec{\kappa}) = \kappa_0 \lVert \tilde{p}_0 \rVert_1 + \kappa_1 \lVert \tilde{p}_0 \rVert_2 + \kappa_2 \lVert \tilde{p}_0 \rVert_\infty.
\end{equation}
Let us observe that, for $\vec{\kappa}\in \mathbb{R}^3_+$, the perturbed dual objective serves as a lower bound for the original one. Moreover, the perturbation acts as a term that reduces the spread of the dual variables when the new objective dual is employed, and since both maximisations are executed under the same set of constraints, we can solve the SDP given by the perturbed objective function, and insert the solution in the original version \eqref{eq:SDP3a} to build the min-tradeoff function. In order to derive useful values for $\vec{\kappa}$ that balance a minimised value for the dual variables with an increased performance in the finite-key analysis, we use a step inspired by genetic algorithms. The method goes as follows.

\begin{algorithm}{Genetic subroutine}
    
    \vspace{0.3cm}
    
    For round $m \in \{1,...,5\}$ perform the following steps:
    \begin{enumerate}
        \item Generate a set of random vectors $\{\vec{\kappa}_j\}_{j=1}^{100}$, with coefficients ranging between $0$ and $10^{-3}$.
        \item For every vector, define a perturbed objective dual as \eqref{eq:PerturbedDual} for the maximisation \eqref{eq:SDP3a}. 
        \item Solve the resulting SDP, and use the solution to calculate both the min-tradeoff function and the finite secret key rate according to Theorem \ref{FinteSizePhys}.
        \item Record as $S_m$ the highest finite key rate achieved for the iteration.
        \item Discard the vectors that provide finite rates below the percentile 10, and those whose value for the spread of the min-tradeoff function is above the percentile 95.
        
        \item Combine randomly the remaining vectors in 90 pairs $\left(\vec{\kappa}_i,\vec{\kappa}_j\right)$, with the associated finite rates $(F_i,F_j)$, and create a new population of vectors. 
        \item For each pair $\left(\vec{\kappa}_i,\vec{\kappa}_j\right)$, create a vector $\vec{\kappa}_k$ by means of genetic crossings. For $l \in \{0,1,2\}$, every entry $\kappa_{l,k}$ of the new vector is evaluated with the following procedure:
        \begin{itemize}
            \item Generate a random value $p \in [0,1]$ using a uniform distribution. If $p>0.9$, assign to $\kappa^l_k$ a random value between $10^{-8}$ and $10^{-1}$.
            \item  Otherwise, draw a value from a binomial distribution with a bias $F_i/(F_i+F_j)$ towards the zero. If the value is zero, evaluate $\kappa_{l,k} := \kappa_{l,i}$, and otherwise $\kappa_{l,k} := \kappa_{l,j}$.
        \end{itemize}
        \item Complete the new population by adding the 10 best performing vectors (in terms of finite key rates) from the previous round.
        \item Start over the routine with the new set of vectors.
    \end{enumerate}
\end{algorithm}\label{Alg:Genetic}
Once this subroutine is complete, the final secret key rate is given by the maximum value in $\{S_1,...,S_{5}\}$. We note that this approach is purely heuristic, and it can be further improved by finely adjusting the number of iterations and the range of values for the coefficients.
Nevertheless, it provides a proper framework to derive reliable, high key rates in the finite setting. In particular, it enables us to reduce the value of the coefficients $\epsilon$, $\epsilon^\mathrm{c}_\mathrm{PE}$ and $\epsilon^\mathrm{tom}$ without affecting noticeably the results of our method.

\end{widetext}

\end{appendix}

\bibliographystyle{quantum}
\bibliography{CV}

\begin{thebibliography}{10}

\bibitem{BB84}
Charles~H. Bennett and Gilles Brassard.
\newblock ``Quantum {C}ryptography: {P}ublic {K}ey {D}istribution and {C}oin
  {T}ossing''.
\newblock In Proceedings of the IEEE International Conference on Computers,
  Systems and Signal Processing.
\newblock Bangalore, India~(1984). IEEE Computer Society Press, New York.

\bibitem{E91}
Artur~K. Ekert.
\newblock ``Quantum cryptography based on bell's theorem''.
\newblock \href{https://dx.doi.org/10.1103/PhysRevLett.67.661}{Phys. Rev. Lett.
  {\bf 67}, 661--663}~(1991).

\bibitem{bennett1992quantum}
Charles~H. Bennett, Gilles Brassard, and N.~David Mermin.
\newblock ``Quantum cryptography without bell's theorem''.
\newblock \href{https://dx.doi.org/10.1103/PhysRevLett.68.557}{Phys. Rev. Lett.
  {\bf 68}, 557--559}~(1992).

\bibitem{cerf2001quantum}
N.~J. Cerf, M.~L\'evy, and G.~Van Assche.
\newblock ``Quantum distribution of gaussian keys using squeezed states''.
\newblock \href{https://dx.doi.org/10.1103/PhysRevA.63.052311}{Phys. Rev. A
  {\bf 63}, 052311}~(2001).

\bibitem{grosshans2002continuous}
Fr\'ed\'eric Grosshans and Philippe Grangier.
\newblock ``Continuous variable quantum cryptography using coherent states''.
\newblock \href{https://dx.doi.org/10.1103/PhysRevLett.88.057902}{Phys. Rev.
  Lett. {\bf 88}, 057902}~(2002).

\bibitem{weedbrook2004quantum}
Christian Weedbrook, Andrew~M. Lance, Warwick~P. Bowen, Thomas Symul,
  Timothy~C. Ralph, and Ping~Koy Lam.
\newblock ``Quantum cryptography without switching''.
\newblock \href{https://dx.doi.org/10.1103/PhysRevLett.93.170504}{Phys. Rev.
  Lett. {\bf 93}, 170504}~(2004).

\bibitem{garcia2009continuous}
Ra\'ul Garc\'{\i}a-Patr\'on and Nicolas~J. Cerf.
\newblock ``Continuous-variable quantum key distribution protocols over noisy
  channels''.
\newblock \href{https://dx.doi.org/10.1103/PhysRevLett.102.130501}{Phys. Rev.
  Lett. {\bf 102}, 130501}~(2009).

\bibitem{shor2000simple}
Peter~W. Shor and John Preskill.
\newblock ``Simple proof of security of the bb84 quantum key distribution
  protocol''.
\newblock \href{https://dx.doi.org/10.1103/PhysRevLett.85.441}{Phys. Rev. Lett.
  {\bf 85}, 441--444}~(2000).

\bibitem{gottesman2004security}
Daniel Gottesman, H-K Lo, Norbert Lutkenhaus, and John Preskill.
\newblock ``Security of quantum key distribution with imperfect devices''.
\newblock In International Symposium onInformation Theory, 2004. ISIT 2004.
  Proceedings.
\newblock \href{https://dx.doi.org/10.1109/ISIT.2004.1365172}{Page 136}.
\newblock IEEE~(2004).

\bibitem{renner2005information}
Renato Renner, Nicolas Gisin, and Barbara Kraus.
\newblock ``Information-theoretic security proof for quantum-key-distribution
  protocols''.
\newblock \href{https://dx.doi.org/10.1103/PhysRevA.72.012332}{Physical Review
  A {\bf 72}, 012332}~(2005).

\bibitem{renner2008security}
Renato Renner.
\newblock ``Security of quantum key distribution''.
\newblock \href{https://dx.doi.org/10.1142/S0219749908003256}{International
  Journal of Quantum Information {\bf 06}, 1--127}~(2008).

\bibitem{koashi2009simple}
Masato Koashi.
\newblock ``Simple security proof of quantum key distribution based on
  complementarity''.
\newblock \href{https://dx.doi.org/10.1088/1367-2630/11/4/045018}{New Journal
  of Physics {\bf 11}, 045018}~(2009).

\bibitem{wolf2006extremality}
Michael~M. Wolf, Geza Giedke, and J.~Ignacio Cirac.
\newblock ``Extremality of gaussian quantum states''.
\newblock \href{https://dx.doi.org/10.1103/PhysRevLett.96.080502}{Phys. Rev.
  Lett. {\bf 96}, 080502}~(2006).

\bibitem{devetak2005distillation}
Igor Devetak and Andreas Winter.
\newblock ``Distillation of secret key and entanglement from quantum states''.
\newblock \href{https://dx.doi.org/10.1098/rspa.2004.1372}{Proceedings of the
  Royal Society A: Mathematical, Physical and engineering sciences {\bf 461},
  207--235}~(2005).

\bibitem{Grosshans2003Virtual}
Fr\'{e}d\'{e}ric Grosshans, Nicolas~J. Cerf, J\'{e}r\^{o}me Wenger, Rosa
  Tualle-Brouri, and Philippe Grangier.
\newblock ``Virtual entanglement and reconciliation protocols for quantum
  cryptography with continuous variables''.
\newblock \href{https://dx.doi.org/10.26421/QIC3.s-6}{Quantum Info. Comput.
  {\bf 3}, 535–552}~(2003).

\bibitem{navascues2006optimality}
Miguel Navascu\'es, Fr\'ed\'eric Grosshans, and Antonio Ac\'{\i}n.
\newblock ``Optimality of gaussian attacks in continuous-variable quantum
  cryptography''.
\newblock \href{https://dx.doi.org/10.1103/PhysRevLett.97.190502}{Phys. Rev.
  Lett. {\bf 97}, 190502}~(2006).

\bibitem{garcia2006unconditional}
Ra\'ul Garc\'{\i}a-Patr\'on and Nicolas~J. Cerf.
\newblock ``Unconditional optimality of gaussian attacks against
  continuous-variable quantum key distribution''.
\newblock \href{https://dx.doi.org/10.1103/PhysRevLett.97.190503}{Phys. Rev.
  Lett. {\bf 97}, 190503}~(2006).

\bibitem{leverrier2010simple}
Anthony Leverrier and Philippe Grangier.
\newblock ``Simple proof that gaussian attacks are optimal among collective
  attacks against continuous-variable quantum key distribution with a gaussian
  modulation''.
\newblock \href{https://dx.doi.org/10.1103/PhysRevA.81.062314}{Phys. Rev. A
  {\bf 81}, 062314}~(2010).

\bibitem{leverrier2015composable}
Anthony Leverrier.
\newblock ``Composable security proof for continuous-variable quantum key
  distribution with coherent states''.
\newblock \href{https://dx.doi.org/10.1103/PhysRevLett.114.070501}{Phys. Rev.
  Lett. {\bf 114}, 070501}~(2015).

\bibitem{renner2009finetti}
R.~Renner and J.~I. Cirac.
\newblock ``de finetti representation theorem for infinite-dimensional quantum
  systems and applications to quantum cryptography''.
\newblock \href{https://dx.doi.org/10.1103/PhysRevLett.102.110504}{Phys. Rev.
  Lett. {\bf 102}, 110504}~(2009).

\bibitem{leverrier2013security}
Anthony Leverrier, Ra\'ul Garc\'{\i}a-Patr\'on, Renato Renner, and Nicolas~J.
  Cerf.
\newblock ``Security of continuous-variable quantum key distribution against
  general attacks''.
\newblock \href{https://dx.doi.org/10.1103/PhysRevLett.110.030502}{Phys. Rev.
  Lett. {\bf 110}, 030502}~(2013).

\bibitem{leverrier2017security}
Anthony Leverrier.
\newblock ``Security of continuous-variable quantum key distribution via a
  gaussian de finetti reduction''.
\newblock \href{https://dx.doi.org/10.1103/PhysRevLett.118.200501}{Phys. Rev.
  Lett. {\bf 118}, 200501}~(2017).

\bibitem{furrer2012continuous}
F.~Furrer, T.~Franz, M.~Berta, A.~Leverrier, V.~B. Scholz, M.~Tomamichel, and
  R.~F. Werner.
\newblock ``Continuous variable quantum key distribution: Finite-key analysis
  of composable security against coherent attacks''.
\newblock \href{https://dx.doi.org/10.1103/PhysRevLett.109.100502}{Phys. Rev.
  Lett. {\bf 109}, 100502}~(2012).

\bibitem{furrer2014reverse}
Fabian Furrer.
\newblock ``Reverse-reconciliation continuous-variable quantum key distribution
  based on the uncertainty principle''.
\newblock \href{https://dx.doi.org/10.1103/PhysRevA.90.042325}{Phys. Rev. A
  {\bf 90}, 042325}~(2014).

\bibitem{christandl2009postselection}
Matthias Christandl, Robert K\"onig, and Renato Renner.
\newblock ``Postselection technique for quantum channels with applications to
  quantum cryptography''.
\newblock \href{https://dx.doi.org/10.1103/PhysRevLett.102.020504}{Phys. Rev.
  Lett. {\bf 102}, 020504}~(2009).

\bibitem{lupo2020towards}
Cosmo Lupo.
\newblock ``Towards practical security of continuous-variable quantum key
  distribution''.
\newblock \href{https://dx.doi.org/10.1103/PhysRevA.102.022623}{Phys. Rev. A
  {\bf 102}, 022623}~(2020).

\bibitem{leverrier2009unconditional}
Anthony Leverrier and Philippe Grangier.
\newblock ``Unconditional security proof of long-distance continuous-variable
  quantum key distribution with discrete modulation''.
\newblock \href{https://dx.doi.org/10.1103/PhysRevLett.102.180504}{Phys. Rev.
  Lett. {\bf 102}, 180504}~(2009).

\bibitem{leverrier2011continuous}
Anthony Leverrier and Philippe Grangier.
\newblock ``Continuous-variable quantum-key-distribution protocols with a
  non-gaussian modulation''.
\newblock \href{https://dx.doi.org/10.1103/PhysRevA.83.042312}{Phys. Rev. A
  {\bf 83}, 042312}~(2011).

\bibitem{ghorai2019asymptotic}
Shouvik Ghorai, Philippe Grangier, Eleni Diamanti, and Anthony Leverrier.
\newblock ``Asymptotic security of continuous-variable quantum key distribution
  with a discrete modulation''.
\newblock \href{https://dx.doi.org/10.1103/PhysRevX.9.021059}{Phys. Rev. X {\bf
  9}, 021059}~(2019).

\bibitem{lin2019asymptotic}
Jie Lin, Twesh Upadhyaya, and Norbert L\"utkenhaus.
\newblock ``Asymptotic security analysis of discrete-modulated
  continuous-variable quantum key distribution''.
\newblock \href{https://dx.doi.org/10.1103/PhysRevX.9.041064}{Phys. Rev. X {\bf
  9}, 041064}~(2019).

\bibitem{Upadhyaya2021}
Twesh Upadhyaya, Thomas van Himbeeck, Jie Lin, and Norbert L\"utkenhaus.
\newblock ``Dimension reduction in quantum key distribution for continuous- and
  discrete-variable protocols''.
\newblock \href{https://dx.doi.org/10.1103/PRXQuantum.2.020325}{PRX Quantum
  {\bf 2}, 020325}~(2021).

\bibitem{kanitschar2023finite}
Florian Kanitschar, Ian George, Jie Lin, Twesh Upadhyaya, and Norbert
  L\"utkenhaus.
\newblock ``Finite-size security for discrete-modulated continuous-variable
  quantum key distribution protocols''.
\newblock \href{https://dx.doi.org/10.1103/PRXQuantum.4.040306}{PRX Quantum
  {\bf 4}, 040306}~(2023).

\bibitem{denys2021explicit}
Aur{\'{e}}lie Denys, Peter Brown, and Anthony Leverrier.
\newblock ``Explicit asymptotic secret key rate of continuous-variable quantum
  key distribution with an arbitrary modulation''.
\newblock \href{https://dx.doi.org/10.22331/q-2021-09-13-540}{{Quantum} {\bf
  5}, 540}~(2021).

\bibitem{liu2021homodyne}
Wen-Bo Liu, Chen-Long Li, Yuan-Mei Xie, Chen-Xun Weng, Jie Gu, Xiao-Yu Cao,
  Yu-Shuo Lu, Bing-Hong Li, Hua-Lei Yin, and Zeng-Bing Chen.
\newblock ``Homodyne detection quadrature phase shift keying
  continuous-variable quantum key distribution with high excess noise
  tolerance''.
\newblock \href{https://dx.doi.org/10.1103/PRXQuantum.2.040334}{PRX Quantum
  {\bf 2}, 040334}~(2021).

\bibitem{lupo2022quantum}
Cosmo Lupo and Yingkai Ouyang.
\newblock ``Quantum key distribution with nonideal heterodyne detection:
  Composable security of discrete-modulation continuous-variable protocols''.
\newblock \href{https://dx.doi.org/10.1103/PRXQuantum.3.010341}{PRX Quantum
  {\bf 3}, 010341}~(2022).

\bibitem{kaur2019asymptotic}
Eneet Kaur, Saikat Guha, and Mark~M. Wilde.
\newblock ``Asymptotic security of discrete-modulation protocols for
  continuous-variable quantum key distribution''.
\newblock \href{https://dx.doi.org/10.1103/PhysRevA.103.012412}{Phys. Rev. A
  {\bf 103}, 012412}~(2021).

\bibitem{papanastasiou2019continuous}
Panagiotis Papanastasiou and Stefano Pirandola.
\newblock ``Continuous-variable quantum cryptography with discrete alphabets:
  Composable security under collective gaussian attacks''.
\newblock \href{https://dx.doi.org/10.1103/PhysRevResearch.3.013047}{Phys. Rev.
  Res. {\bf 3}, 013047}~(2021).

\bibitem{matsuura2021finite}
Takaya Matsuura, Kento Maeda, Toshihiko Sasaki, and Masato Koashi.
\newblock ``Finite-size security of continuous-variable quantum key
  distribution with digital signal processing''.
\newblock \href{https://dx.doi.org/10.1038/s41467-020-19916-1}{Nature
  communications {\bf 12}, 1--13}~(2021).

\bibitem{yamano2022finite}
Shinichiro Yamano, Takaya Matsuura, Yui Kuramochi, Toshihiko Sasaki, and Masato
  Koashi.
\newblock ``Finite-size security proof of binary-modulation continuous-variable
  quantum key distribution using only heterodyne measurement''.
\newblock \href{https://dx.doi.org/10.1088/1402-4896/ad1022}{Physica Scripta
  {\bf 99}, 025115}~(2024).

\bibitem{Matsuura2023}
Takaya Matsuura, Shinichiro Yamano, Yui Kuramochi, Toshihiko Sasaki, and Masato
  Koashi.
\newblock ``Refined finite-size analysis of binary-modulation
  continuous-variable quantum key distribution''.
\newblock \href{https://dx.doi.org/10.22331/q-2023-08-29-1095}{{Quantum} {\bf
  7}, 1095}~(2023).

\bibitem{namiki2003security}
Ryo Namiki and Takuya Hirano.
\newblock ``Security of quantum cryptography using balanced homodyne
  detection''.
\newblock \href{https://dx.doi.org/10.1103/PhysRevA.67.022308}{Phys. Rev. A
  {\bf 67}, 022308}~(2003).

\bibitem{dupuis2016entropy}
Fr{\'e}d{\'e}ric Dupuis, Omar Fawzi, and Renato Renner.
\newblock ``Entropy accumulation''.
\newblock \href{https://dx.doi.org/10.1007/s00220-020-03839-5}{Communications
  in Mathematical Physics {\bf 379}, 867--913}~(2020).

\bibitem{dupuis2019entropy}
Fr{\'e}d{\'e}ric Dupuis and Omar Fawzi.
\newblock ``Entropy accumulation with improved second-order term''.
\newblock \href{https://dx.doi.org/10.1109/TIT.2019.2929564}{IEEE Transactions
  on information theory {\bf 65}, 7596--7612}~(2019).

\bibitem{arnon2018practical}
Rotem Arnon-Friedman, Fr{\'e}d{\'e}ric Dupuis, Omar Fawzi, Renato Renner, and
  Thomas Vidick.
\newblock ``Practical device-independent quantum cryptography via entropy
  accumulation''.
\newblock \href{https://dx.doi.org/10.1038/s41467-017-02307-4}{Nature
  communications {\bf 9}, 459}~(2018).

\bibitem{arnon2019simple}
Rotem Arnon-Friedman, Renato Renner, and Thomas Vidick.
\newblock ``Simple and tight device-independent security proofs''.
\newblock \href{https://dx.doi.org/10.1137/18M1174726}{SIAM Journal on
  Computing {\bf 48}, 181--225}~(2019).

\bibitem{tan2020improved}
Ernest Y.-Z. Tan, Pavel Sekatski, Jean-Daniel Bancal, Ren{\'{e}} Schwonnek,
  Renato Renner, Nicolas Sangouard, and Charles C.-W. Lim.
\newblock ``Improved {DIQKD} protocols with finite-size analysis''.
\newblock \href{https://dx.doi.org/10.22331/q-2022-12-22-880}{{Quantum} {\bf
  6}, 880}~(2022).

\bibitem{george2022finite}
Ian George, Jie Lin, Thomas van Himbeeck, Kun Fang, and Norbert Lütkenhaus.
\newblock ``Finite-key analysis of quantum key distribution with characterized
  devices using entropy accumulation''~(2022).
\newblock  \href{http://arxiv.org/abs/2203.06554}{arXiv:2203.06554}.

\bibitem{winick2018reliable}
Adam Winick, Norbert L{\"{u}}tkenhaus, and Patrick~J. Coles.
\newblock ``Reliable numerical key rates for quantum key distribution''.
\newblock \href{https://dx.doi.org/10.22331/q-2018-07-26-77}{{Quantum} {\bf 2},
  77}~(2018).

\bibitem{metger2022generalised}
Tony Metger, Omar Fawzi, David Sutter, and Renato Renner.
\newblock ``Generalised entropy accumulation''.
\newblock In 2022 IEEE 63rd Annual Symposium on Foundations of Computer Science
  (FOCS).
\newblock \href{https://dx.doi.org/10.1109/FOCS54457.2022.00085}{Pages
  844--850}.
\newblock IEEE~(2022).

\bibitem{metger2022security}
Tony Metger and Renato Renner.
\newblock ``Security of quantum key distribution from generalised entropy
  accumulation''.
\newblock \href{https://dx.doi.org/10.1038/s41467-023-40920-8}{Nature
  Communications {\bf 14}, 5272}~(2023).

\bibitem{tomamichel2015quantum}
Marco Tomamichel.
\newblock ``Quantum information processing with finite resources: Mathematical
  foundations''.
\newblock Volume~5.
\newblock Springer. ~(2015).
\newblock  url:~\url{https://link.springer.com/book/10.1007/978-3-319-21891-5}.

\bibitem{muller2013quantum}
Martin Müller-Lennert, Frédéric Dupuis, Oleg Szehr, Serge Fehr, and Marco
  Tomamichel.
\newblock ``{On quantum Rényi entropies: A new generalization and some
  properties}''.
\newblock \href{https://dx.doi.org/10.1063/1.4838856}{Journal of Mathematical
  Physics {\bf 54}, 122203}~(2013).

\bibitem{wilde2014strong}
Mark~M Wilde, Andreas Winter, and Dong Yang.
\newblock ``Strong converse for the classical capacity of entanglement-breaking
  and hadamard channels via a sandwiched r{\'e}nyi relative entropy''.
\newblock \href{https://dx.doi.org/10.1007/s00220-014-2122-x}{Communications in
  Mathematical Physics {\bf 331}, 593--622}~(2014).

\bibitem{curty2004entanglement}
Marcos Curty, Maciej Lewenstein, and Norbert L\"utkenhaus.
\newblock ``Entanglement as a precondition for secure quantum key
  distribution''.
\newblock \href{https://dx.doi.org/10.1103/PhysRevLett.92.217903}{Phys. Rev.
  Lett. {\bf 92}, 217903}~(2004).

\bibitem{ferenczi2012symmetries}
Agnes Ferenczi and Norbert L\"utkenhaus.
\newblock ``Symmetries in quantum key distribution and the connection between
  optimal attacks and optimal cloning''.
\newblock \href{https://dx.doi.org/10.1103/PhysRevA.85.052310}{Phys. Rev. A
  {\bf 85}, 052310}~(2012).

\bibitem{furrer2012security}
Fabian Furrer.
\newblock ``Security of continuous-variable quantum key distribution and
  aspects of device-independent security''.
\newblock PhD thesis.
\newblock Gottfried Wilhelm Leibniz Universit{\"a}t Hannover.
\newblock ~(2012).

\bibitem{berta2016smooth}
Mario Berta, Fabian Furrer, and Volkher~B. Scholz.
\newblock ``{The smooth entropy formalism for von Neumann algebras}''.
\newblock \href{https://dx.doi.org/10.1063/1.4936405}{Journal of Mathematical
  Physics{\bf 57}}~(2015).

\bibitem{agrawal2017posterior}
Shipra Agrawal and Randy Jia.
\newblock ``Optimistic {Posterior} {Sampling} for {Reinforcement} {Learning}:
  {Worst}-{Case} {Regret} {Bounds}''.
\newblock \href{https://dx.doi.org/10.1287/moor.2022.1266}{Mathematics of
  Operations Research {\bf 48}, 363--392}~(2023).

\bibitem{wilde2013quantum}
Mark~M Wilde.
\newblock ``Quantum information theory''.
\newblock \href{https://dx.doi.org/10.1017/CBO9781139525343}{Cambridge
  University Press}. ~(2013).

\bibitem{coles2012unification}
Patrick~J. Coles.
\newblock ``Unification of different views of decoherence and discord''.
\newblock \href{https://dx.doi.org/10.1103/PhysRevA.85.042103}{Phys. Rev. A
  {\bf 85}, 042103}~(2012).

\bibitem{liu2019device}
Wen-Zhao Liu, Ming-Han Li, Sammy Ragy, Si-Ran Zhao, Bing Bai, Yang Liu, Peter~J
  Brown, Jun Zhang, Roger Colbeck, Jingyun Fan, et~al.
\newblock ``Device-independent randomness expansion against quantum side
  information''.
\newblock \href{https://dx.doi.org/10.1038/s41567-020-01147-2}{Nature Physics
  {\bf 17}, 448--451}~(2021).

\bibitem{hu2021robust}
Hao Hu, Jiyoung Im, Jie Lin, Norbert L{\"{u}}tkenhaus, and Henry Wolkowicz.
\newblock ``Robust {I}nterior {P}oint {M}ethod for {Q}uantum {K}ey
  {D}istribution {R}ate {C}omputation''.
\newblock \href{https://dx.doi.org/10.22331/q-2022-09-08-792}{{Quantum} {\bf
  6}, 792}~(2022).

\bibitem{boyd2004convex}
Stephen Boyd and Lieven Vandenberghe.
\newblock ``Convex optimization''.
\newblock Cambridge University Press. ~(2004).

\bibitem{burnett_1998}
Stephen~M. Barnett and Paul~M. Radmore.
\newblock ``{Methods in Theoretical Quantum Optics}''.
\newblock
  \href{https://dx.doi.org/10.1093/acprof:oso/9780198563617.001.0001}{Oxford
  University Press}. ~(2002).

\bibitem{YALMIP}
J.~L{\"{o}}fberg.
\newblock ``Yalmip : A toolbox for modeling and optimization in matlab''.
\newblock In Proceedings of the CACSD Conference.
\newblock Taipei, Taiwan~(2004).

\bibitem{sdpt3_one}
M.~J.~Todd K.~C.~Toh and R.~H. Tütüncü.
\newblock ``Sdpt3 — a matlab software package for semidefinite programming,
  version 1.3''.
\newblock \href{https://dx.doi.org/10.1080/10556789908805762}{Optimization
  Methods and Software {\bf 11}, 545--581}~(1999).

\bibitem{sdp3_two}
R.~H. Tütüncü, K.~C. Toh, and M.~J. Todd.
\newblock ``Solving semidefinite-quadratic-linear programs using sdpt3''.
\newblock \href{https://dx.doi.org/10.1007/s10107-002-0347-5}{Mathematical
  programming Ser.B {\bf 95}, 189--217}~(2003).

\bibitem{cvx}
Michael Grant and Stephen Boyd.
\newblock ``{CVX}: Matlab software for disciplined convex programming, version
  2.1''.
\newblock \url{http://cvxr.com/cvx}~(2014).

\bibitem{gb08}
Michael Grant and Stephen Boyd.
\newblock ``Graph implementations for nonsmooth convex programs''.
\newblock In V.~Blondel, S.~Boyd, and H.~Kimura, editors, Recent Advances in
  Learning and Control.
\newblock Pages 95--110.
\newblock Lecture Notes in Control and Information Sciences. Springer-Verlag
  Limited~(2008).

\bibitem{CPG_2024}
Carlos Pascual-Garcia.
\newblock ``\emph{Discrete Modulated CVQKD via the EAT}''.
\newblock
  \url{https://github.com/CPascualGarcia/DiscreteModulatedCVQKD_EAT.git}~(2024).

\bibitem{tomamichel2017largely}
Marco Tomamichel and Anthony Leverrier.
\newblock ``A largely self-contained and complete security proof for quantum
  key distribution''.
\newblock \href{https://dx.doi.org/10.22331/q-2017-07-14-14}{{Quantum} {\bf 1},
  14}~(2017).

\bibitem{fawzi2022asymptotic}
Omar Fawzi, Li~Gao, and Mizanur Rahaman.
\newblock ``Asymptotic equipartition theorems in von neumann algebras''~(2023).
\newblock  \href{http://arxiv.org/abs/2212.14700}{arXiv:2212.14700}.

\bibitem{leverrier2023information}
Anthony Leverrier.
\newblock ``Information reconciliation for discretely-modulated
  continuous-variable quantum key distribution''~(2023).
\newblock  \href{http://arxiv.org/abs/2310.17548}{arXiv:2310.17548}.

\end{thebibliography}

\end{document}